\newtheorem{definition}{Definition}
\newtheorem{lemma}{Lemma}
\newtheorem{corollary}{Corollary}
\newtheorem{theorem}{Theorem}
\newtheorem{lp}{Linear Programming}
\newcommand \sma {{\text{\tt sma}}}
\newcommand \row {{\sf row}}
\newcommand \col {{\sf col}}
\newcommand \rank {{\sf rank}}
\newcommand \Tr {{\sf T}}
\newcommand \wt {{\sf wt}}
\newcommand \one {{\mathbf{1}\!\!\!\mathbf{1}}}
\begin{document}

\title{Update Bandwidth for Distributed Storage}

\author{Zhengrui Li,
        Sian-Jheng Lin,~\IEEEmembership{Member,~IEEE,}
        Po-Ning Chen,~\IEEEmembership{Senior Member,~IEEE,}
        Yunghsiang S. Han,~\IEEEmembership{Fellow,~IEEE,}
        and Hanxu Hou,~\IEEEmembership{Member,~IEEE}
\thanks{This work was partially presented at the 2019 IEEE International Symposium on Information Theory.

This work was partially funded by the CAS Hundred Talents Program,  the NSFC of China (No.~61671007, 61701115), and Start Fund of Dongguan University of Technology (KCYXM2017025).

Z. Li and S.-J. Lin are with the School of Information Science and Technology, University of Science and Technology of China (USTC), Hefei, Anhui, China, email: dd622089@mail.ustc.edu.cn, sjlin@ustc.edu.cn.}  \thanks{P.-N. Chen is with the Department of Electrical and Computer Engineering, National Chiao Tung University, email:poning@faculty.nctu.edu.tw}          
\thanks{Y. S. Han and H. Hou are with the
School of Electrical Engineering \& Intelligentization,
Dongguan University of Technology, email:  yunghsiangh@gmail.com, houhanxu@163.com.
}

}

%
%

\markboth{Journal of \LaTeX\ Class Files,~Vol.~14, No.~8, August~2015}%
{Shell \MakeLowercase{\textit{et al.}}: Bare Demo of IEEEtran.cls for IEEE Journals}
%



\newcommand \gpri {{\tt g}}

\maketitle

\begin{abstract}
In this paper, we consider the update bandwidth in distributed storage systems~(DSSs). The update bandwidth, which measures the transmission efficiency of the update process in DSSs, is defined as the total amount of data symbols transferred in the network when the data symbols stored in a node are updated. 
This paper contains the following contributions. First, we establish the closed-form expression of the minimum update bandwidth attainable by irregular array codes.
Second, after defining a class of irregular array codes, called Minimum Update Bandwidth~(MUB) codes, which achieve the minimum update bandwidth of irregular array codes, we determine 
the smallest code redundancy attainable by MUB codes.
Third, the code parameters, with which the minimum code redundancy of irregular array codes and the smallest code redundancy of MUB codes can be equal, are identified, which allows us to define MR-MUB codes as a class of irregular array codes that simultaneously achieve the minimum code redundancy and the minimum update bandwidth.
Fourth, 
we introduce explicit code constructions of MR-MUB codes and MUB codes with the smallest code redundancy. 
Fifth, we establish a lower bound of the update complexity of MR-MUB codes,
which can be used to prove that 
the minimum update complexity of irregular array codes may not be achieved
by MR-MUB codes.
Last, we construct a class of $(n = k + 2, k)$ vertical maximum-distance separable (MDS) array codes that can achieve all of the minimum code redundancy, the minimum update bandwidth and the optimal repair bandwidth of irregular array codes.

\end{abstract}
\IEEEpeerreviewmaketitle

\section{Introduction}\label{sec:intro}

Some distributed storage systems (DSSs) adopt replication policy to improve reliability. However, the replication policy requires a high level of storage overhead. 
To reduce this overhead while maintain reliability, the erasure coding has been used in DSSs, such as Google File System~\cite{ford2010} and Microsoft Azure Storage~\cite{Huang2012}. 
A main issue of erasure codes in DSSs is the required bandwidth to repair failure node(s). To tackle this issue, many linear block codes, such as regenerating codes~\cite{dimikis2010,rashmi2011} and locally repairable codes~(LRCs)~\cite{gopalan2012,dimakis2014}, were proposed in recent years. When the original data symbols change, the coded symbols stored in a DSS must be updated accordingly. Since performing updates consumes both bandwidth and energy, a higher level of update efficiency is favorable for erasure codes in scenarios where updates are frequent.

The update process in a DSS has two important phases, which are symbol transmission among nodes and the symbol updating (i.e., reading-out and writing-in) in each node. Thus, the update efficiency should include the transmission efficiency and the I/O efficiency. Lots of update-efficient codes \cite{blaum1995,blaum1996,blaum1999,xu1999:b,jin09,mazumdar2012,antha2010,han2015} have been proposed to minimize the update complexity 
from the viewpoint of the I/O efficiency in the update process.
These works basically define the update complexity as the average number of coded symbols (i.e., parity symbols) that must be updated when any single data symbol is changed. 
Clearly,  when we consider updating one data symbol,  as the number of symbols transmitted between nodes is at most one, the less nodes affected by an update of a symbol, the better the transmission efficiency. Thus, the transmission efficiency problem seems simple and less important. However, when we consider updating many or even all symbols in a node simultaneously, the transmission efficiency problem becomes complicated and significant, and the above definition of the update complexity is not well suitable in this case. To our best knowledge, there is no work discussing the transmission efficiency in the update process of a DSS.  

In this paper, we introduce a new metric, called the {\it update bandwidth}, to measure the transmission efficiency in the update process of erasure codes applied in DSSs. It is defined as the average amount of symbols that must be transmitted among nodes when the data symbols stored in a node are updated.
As the storage capacity of a node is very large nowadays, we need to divide the data into small blocks of data symbols to encode. Since each block
is often self-contained in its structure, 
it is justifiably more efficient to have an updating operation
to operate on a block as a whole.
In other words, when any data symbol in a block is required to be updated, 
all data symbols in the block are involved in this single updating operation. As the update bandwidth is the main focus in this paper,
without loss of generality,
we consider the simplest setting that there is only one coded block in each node in our analysis.

\begin{figure}
\center
   \includegraphics[width=0.7\columnwidth]{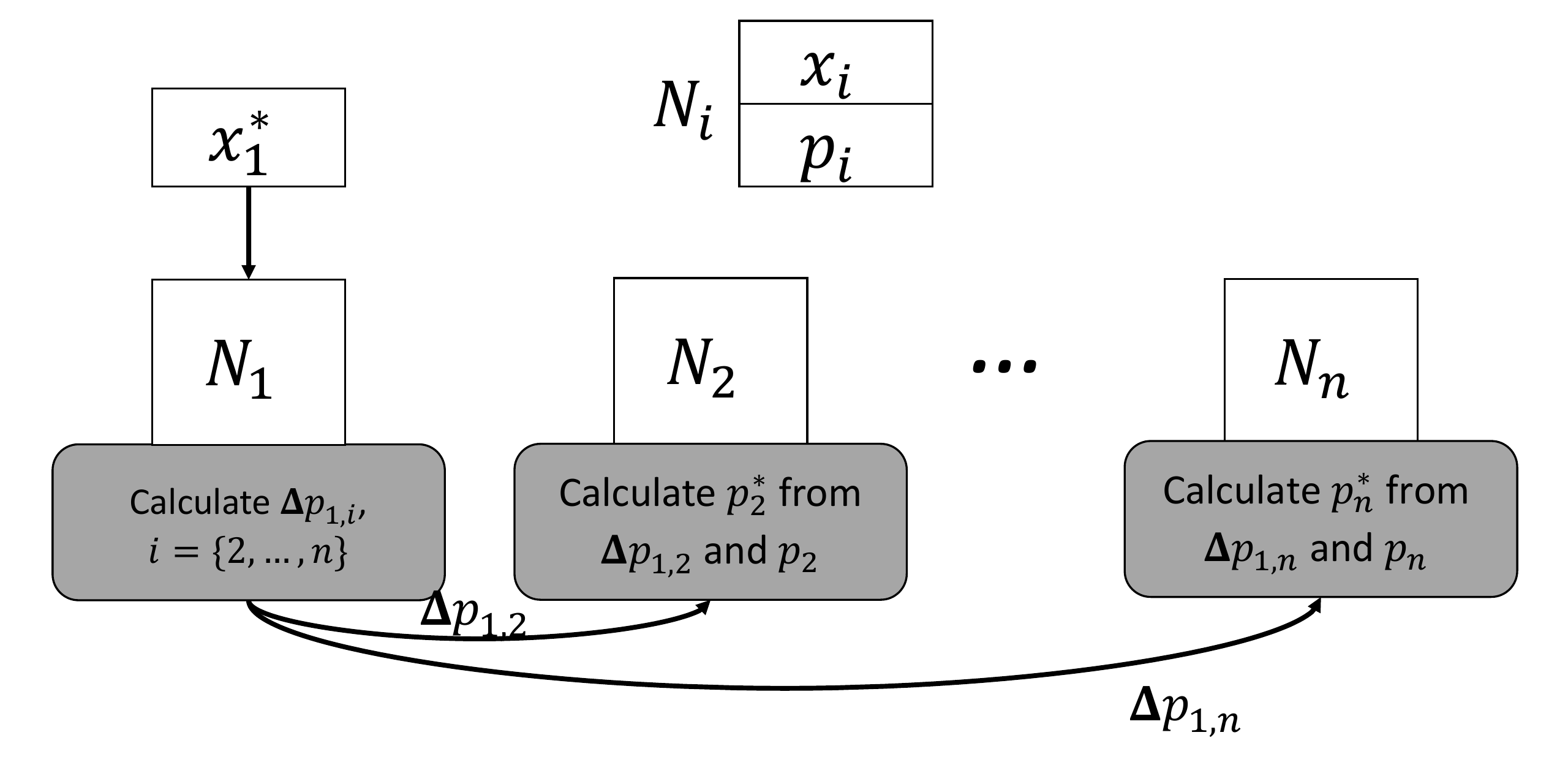}
\caption{\label{fig:fig1} An instance of the considered update model,
where the data symbols stored in $N_{1}$ are updated, and $N_{i}$ has the data vector $\mathbf{x}_{i}$ and the parity vector  $\mathbf{p}_{i}$. When updating the data symbols stored in $N_{1}$, $N_{1}$ sends intermediate symbols $\{\Delta \mathbf{p}_{1,i}\}_{i=2,\dots,n}$ respectively to all other nodes such that they can calculate the new parity vectors.}
\end{figure}

The update model that we consider is described as follows. Assume that there are $n$ nodes $\{N_i\}_{i=1}^{n}$ in the network. 
Node $N_{i}$ stores data vector $\mathbf{x}_{i}$ and parity vector $\mathbf{p}_{i}$, where the former consists of data symbols, while the parity symbols are placed in the latter. Fig.~\ref{fig:fig1} demonstrates the update procedure when the data vector $\mathbf{x}_{1}$ is updated to $\mathbf{x}_{1}^{*}$. 
In the update procedure, $N_{1}$ first calculates $n-1$ intermediate vectors $\{\Delta\mathbf{p}_{1,i}\}_{i=2}^n$, and then send $\Delta\mathbf{p}_{1,i}$ to $N_{i}$ respectively for $i=2,3,\ldots , n$. 
After receiving $\Delta\mathbf{p}_{1,i}$, $N_{i}$ computes the updated parity vector $\mathbf{p}_{i}^{*}$ from $\Delta\mathbf{p}_{1,i}$ and the old parity vector $\mathbf{p}_{i}$. This completes the update procedure. 
Notably, this update model is similar to the one adopted in \cite{zhang2012}, in which partial-updating schemes for erasure-coded storage are  considered. Our general update model will be given formally in Section \ref{sec:ub}. 

\begin{figure} 
  \centering
    \subfigure[]{\label{fig:side:a}
    \label{fig:subfig:a} 
    \begin{tabular}{|c|c|c|c|}
\hline
\rowcolor{gray}$x_{1,1}$ & $x_{2,1}$ & $x_{3,1}$ & $x_{4,1}$\\
\hline
\rowcolor{gray}$x_{1,2}$ & $x_{2,2}$ & $x_{3,2}$ & $x_{4,2}$\\
\hline
$x_{3,1}+x_{4,1}$ & $x_{1,1}+x_{3,1}$ & $x_{2,1}+x_{4,1}$ & $x_{1,1}+x_{2,1}$\\
\hline
$x_{3,2}+x_{4,2}$ & $x_{1,2}+x_{3,2}$ & $x_{2,2}+x_{4,2}$ & $x_{1,2}+x_{2,2}$\\
\hline
\end{tabular}}
  \subfigure[]{\label{fig:side:b}
    \label{fig:subfig:b} 
    \begin{tabular}{|c|c|c|c|}
\hline
\rowcolor{gray}$x_{1,1}$ & $\underline{x_{2,1}}$ & $x_{3,1}$ & $\underline{x_{4,1}}$\\
\hline
\rowcolor{gray}$x_{1,2}$ & $\underline{x_{2,2}}$ & $\underline{x_{3,2}}$ & $x_{4,2}$\\
\hline
$x_{4,1}+x_{3,2}$ & $x_{1,1}+x_{4,2}$ & $\underline{x_{2,1}+x_{1,2}}$ & $x_{3,1}+x_{2,2}$\\
\hline
$x_{2,1}+x_{2,2}+x_{3,2}$ & $x_{3,1}+x_{3,2}+x_{4,2}$ & $x_{4,1}+x_{4,2}+x_{1,2}$ & $\underline{x_{1,1}+x_{1,2}+x_{2,2}}$\\
\hline
\end{tabular}
}
\caption{(a) presents an instance, where gray rows contain the data symbols, of a $4 \times 4$ P-code with optimal update complexity $2$ and update bandwidth $4$; (b) presents an instance, where gray rows contain the data symbols, of a proposed  $(n=4,k=2)$ codes, which has update complexity larger than $2$ and minimum update bandwidth $3$. Note that the instance presents in (b) also has the optimal repair bandwidth.}
\label{fig:fig2} 
\end{figure} 

It is worthy mentioning that the codes with the minimum update complexity (i.e., I/O efficiency) may not achieve the minimum update bandwidth, and vice versa. 
To show that, Fig. \ref{fig:fig2} presents two $(n=4,k=2)$ maximum distance separable (MDS) array codes, where the elements in the $i$-th column are the symbols stored in node $N_{i}$ and the number of symbols in each node is $\alpha=4$. 
In Fig. \ref{fig:side:a}, the first row and the third row form an instance of a $2 \times 4$ P-code~\cite{jin09}, and the second row and the fourth row form another instance of a $2 \times 4$ P-code. Thus, Fig. \ref{fig:side:a} is an instance of a $4 \times 4$ P-code. Furthermore, Fig. \ref{fig:side:b} is an instance of our codes proposed in Section \ref{sec:code}.
In Figs. \ref{fig:side:a} and \ref{fig:side:b}, the data symbols $\{x_{i,j}\}_{i=1,\dots,4,j=1,2}$ are arranged in the first two {gray} rows, and the last two rows are occupied by parity symbols. It can be verified that the data symbols can be recovered by accessing any two columns of the codes in Figs. \ref{fig:side:a} and \ref{fig:side:b}, and hence $k=2$.
It is known that P-codes \cite{jin09} achieve the minimum update complexity when $n-k=2$. 
Hence, when updating a data symbol in Fig.~\ref{fig:side:a}, we must update at least $n-k=2$ parity symbols. For example, when updating $x_{1,1}$, the third symbol in the second column $x_{1,1}+x_{3,1}$ and the third symbol in the fourth column $x_{1,1}+x_{2,1}$ need to be updated. 
However, in Fig. \ref{fig:side:b}, when updating a data symbol, two or three parity symbols need to be updated, i.e., the corresponding update complexity is larger than $2$. For example, when updating $x_{1,1}$, the third symbol in the second column $x_{1,1}+x_{4,2}$ and the fourth symbol in the fourth column $x_{1,1}+x_{1,2}+x_{2,2}$ need to be updated. Yet, the updating of $x_{1,2}$
requires the modification of both parity symbols in the third column (i.e, $x_{2,1}+x_{1,2}$ and $x_{4,1}+x_{4,2}+x_{1,2}$) and the fourth symbol in the fourth column (i.e., $x_{1,1}+x_{1,2}+x_{2,2}$). 

Next, we consider the update bandwidth. Suppose that the two data symbols in the first node in Fig.~\ref{fig:side:a} are updated, i.e., $x_{1,j}$ are updated to  $x_{1,j}^{*}$, $j=1,2$. 
The first node  should send two symbols $\Delta x_{1,1}$ and $\Delta x_{1,2}$ to both nodes $2$ and $4$, where  $\Delta x_{i,j}=x_{i,j}^{*}-x_{i,j}$. 
Thus, the required bandwidth is four. 
It is easy to check that the required bandwidth of updating two data symbols of any other node is also four. 
Therefore, the update bandwidth of the $4 \times 4$ P-code is four. Next we 
show that the update bandwidth of the code in Fig. \ref{fig:side:b} is three. When two data symbols in node $1$ in Fig.~\ref{fig:side:b} are updated, we only need to send $\Delta x_{1,1}$ to node 2, $\Delta x_{1,2}$ to node $3$, and $(\Delta x_{1,1} + \Delta x_{1,2})$ to node $4$. Therefore, the required update bandwidth is three. 
We can verify that the required update bandwidth when updating any other node 
in Fig.~\ref{fig:side:b} is also three. Consequently, the update bandwidth 
of the code in Fig.~\ref{fig:side:b} is better than that of the $4\times 4$ P-code
in Fig.~\ref{fig:side:a}.
We will show in Section \ref{sec:MRMB} that the code in Fig. \ref{fig:side:b} achieves the minimum update bandwidth among all $(4,2)$ irregular array codes 
with two data symbols per node.

Other than update complexity and update bandwidth, the repair bandwidth, defined as the amount of symbols downloaded from the surviving nodes to repair the failed node, is also an important consideration in DSSs.
The repair problem was first brought into the spotlight by Dimakis  et al.~\cite{dimikis2010}. 
It can be anticipated that a well-designed code with both minimum update bandwidth and optimal repair bandwidth is attractive for DSSs. Surprisingly, the code in Fig. \ref{fig:side:b} also achieves the optimal repair bandwidth among all $(4,2)$ MDS array codes.
One can check that we can repair the four symbols stored in node $1$ by downloading the six underlined symbols in Fig. \ref{fig:side:b}, i.e., $\{x_{2,1},x_{2,2}\}$ from node $2$, $\{x_{3,2}, x_{2,1}+x_{1,2}\}$ from node $3$, and $\{x_{4,1}, x_{1,1}+x_{1,2}+x_{2,2}\}$ from node $4$. Thus, the repair bandwidth of node $1$ is six, which is optimal for the parameters of $n=4$, $k=2$ and $\alpha=4$~\cite{dimikis2010}. We can verify that the repair bandwidth of any other node in Fig. \ref{fig:side:b} is also six. 
Therefore, the repair bandwidth of the code in Fig. \ref{fig:side:b} is optimal. 

The contributions of this paper are as follows. 
\begin{itemize}
\item We introduce a new metric, i.e., update bandwidth, and emphasize its importance in scenarios where storage updates are frequent.

\item We consider irregular array codes with a given level of protection against block erasures~\cite{tosato2014}, and establish the closed-form expression of the minimum update bandwidth attainable for such codes.
\item Referring the class of irregular array codes that achieve the minimum update bandwidth as MUB codes, we next derive
the smallest code redundancy attainable by MUB codes.
\item Comparing the smallest code redundancy of MUB codes
with the minimum code redundancy of irregular array codes
derived in \cite{tosato2014}, we identify a class of MUB codes, called MR-MUB codes, that can achieve simultaneously 
the minimum code redundancy of irregular array codes 
and the minimum update bandwidth of irregular array codes.


\item Systematic code constructions for MR-MUB codes and for 
MUB codes with the smallest code redundancy are both provided.

\item  We establish a lower bound of the update complexity of MR-MUB codes, 
by which we confirm that the update complexity of irregular array codes
may not be achieved by MR-MUB codes.

\item We construct an $(n,k=n-2)$ MR-MUB code with the optimal repair bandwidth for all nodes via the transformation in~\cite{transform}, 
confirming the existence of the irregular array codes that can simultaneously achieve 
the minimum code redundancy, the minimum update bandwidth
and the optimal repair bandwidth for all codes.

\end{itemize}

The rest of this paper is organized as follows. Section \ref{sec:def} introduces  the notations used in this paper and the proposed update model. Section \ref{sec:UBMD} establishes the necessary condition for the existence of an irregular array code. In Section \ref{sec:MRMB}, 
via the form of linear programmings,
we determine the minimum update bandwidth of irregular array codes
and the smallest code redundancy of MUB codes.
Section \ref{sec:code} presents the explicit constructions of MR-MUB codes and MUB codes. 
Section \ref{sec:UC} derives a lower bound of the update complexity of MR-MUB codes. Section \ref{sec:6} devises 
a class of $(n=k+2,k)$ MR-MUB codes with the optimal repair bandwidth for all nodes. Section \ref{sec:conclusion} concludes this work. 

\section{Preliminary}\label{sec:def}

\subsection{Definition}

We first introduce the notations used in this paper.
Let $[n]\triangleq\{1,\dots,n\}$ for a positive integer $n$. $(a_{i})_{i \in [n]}$ denotes an index set $(a_{1},a_{2},\dots, a_{n})$. Let $[x_{i,j}]_{i\in [m],j\in [n]}$ denote an $m \times n$ matrix whose entry in row $i$ and column $j$ is $x_{i,j}$. $\wt(\mathbf{v})$ denotes the weight of vector $\mathbf{v}$, i.e., the number of nonzero elements in vector $\mathbf{v}$. $\mathbf{M}^{\Tr}$ represents the transpose of matrix $\mathbf{M}$. $\row(\mathbf{M})$, $\col(\mathbf{M})$ and $\rank(\mathbf{M})$ represent the number of rows of $\mathbf{M}$, the number of columns of $\mathbf{M}$ and the rank of $\mathbf{M}$, respectively. 
$\mathbf{M}^{-1}$ denotes the inverse matrix of $\mathbf{M}$, provided $\mathbf{M}$ is invertible. $|S|$ denotes the cardinality of a set $S$.  $\mathbb{F}_{q}$ denotes the finite field of size $q$, where $q$ is a power of a prime. 
For two discrete random variables $X$ and $Y$, 
their joint probability distribution is denoted as $P_{XY}(x,y)$. $H_{q}(X)$ denotes the $q$-ary entropy of $X$, and $I_{q}(X; Y)$ denotes the $q$-ary mutual information  between $X$ and $Y$, where $q$ is the base of the logarithm. 
We consider linear codes throughout the paper and the main notations used in this paper are listed in Table~\ref{tab:1}.


\begin{table}
\begin{center}
\caption{\label{tab:1}Main notations used in this paper}
\begin{tabular}{l|l}
\hline\hline
Notation & Description\\
\hline
$n$ & The number of nodes\\
\hline
$m_{i}$ & The number of data symbols in node $i$\\
\hline
$p_{i}$ & The number of parity symbols in node $i$\\
\hline
$\mathbf{m}$ &  $\mathbf{m}=[m_{1} \dots m_{n}]^{\Tr}$\\
\hline
$\mathbf{p}$ &   $\mathbf{p}=[p_{1} \dots p_{n}]^{\Tr}$\\
\hline
$\mathbf{x}_{i}$ & The $i$-th data vector\\
\hline
$\mathbf{p}_{i}$ & The $i$-th parity vector\\
\hline
$\mathbf{c}_{i}$ & $\mathbf{c}_{i}=[\mathbf{x}_{i}^{\Tr} \quad \mathbf{p}_{i}^{\Tr}]^{\Tr}$, the $i$-th column vector\\
\hline
$\mathbf{C}$ & $\mathbf{C}=(\mathbf{c}_{1},\mathbf{c}_{2},\dots,\mathbf{c}_{n})$, the codeword of an irregular array code\\
\hline
$\mathbf{M}_{i,j}$ & The construction matrix\\
\hline 
$\mathbf{A}_{i,j}$, $\mathbf{B}_{i,j}$ & A full rank decomposition of $\mathbf{M}_{i,j}$, i.e., $\mathbf{M}_{i,j}=\mathbf{B}_{i,j}\mathbf{A}_{i,j}$\\
\hline
$\mathcal{E}$ & A subset of $[n]$ with $|\mathcal{E}|=n-k$,  where the elements in it are denoted as $e_{i}$ with $i \in [n-k]$ \\
\hline
$\bar{\mathcal{E}}$ & $\bar{\mathcal{E}}=[n] \setminus \mathcal{E}$, where the elements in it are denoted as $\bar{e}_{i}$ with $i \in [k]$\\
\hline
$\mathbf{C}_{\mathcal{E}}$, $\mathbf{X}_{\mathcal{E}}$, $\mathbf{P}_{\mathcal{E}}$ & $\mathbf{C}_{\mathcal{E}}=[\mathbf{c}_{e_{1}}^{\Tr} \dots \mathbf{c}_{e_{n-k}}^{\Tr}]^{\Tr}$,\ \ $\mathbf{X}_{\mathcal{E}}=[\mathbf{x}_{e_{1}}^{\Tr} \dots \mathbf{x}_{e_{n-k}}^{\Tr}]^{\Tr}$,\ \ $\mathbf{P}_{\mathcal{E}}=[\mathbf{p}_{e_{1}}^{\Tr} \dots \mathbf{p}_{e_{n-k}}^{\Tr}]^{\Tr}$\\
\hline 
$B$ & The number of data symbols\\
\hline 
$R$ & The number of parity symbols, i.e., code redundancy\\
\hline
$\gamma_{i,j}$ & The minimum number of symbols sent from node $i$ to node $j$ when updating the data symbols in node $i$\\
\hline 
$\gamma$ & The average required bandwidth when updating a node, i.e., update bandwidth\\
\hline
$\gamma_{\min}$ & The minimum update bandwidth among all irregular array codes\\
\hline
$R_{\min}$ & The minimum code redundancy among all irregular array codes\\
\hline
$R_{\sma}$ & The smallest code redundancy for irregular array codes with 
update bandwidth equal to $\gamma_{\min}$\\
\hline 
$\theta$ & The average number of parity symbols affected by a change of a single data symbol, i.e., update complexity\\
\hline\hline 
\end{tabular}
\end{center}
\end{table}


\subsection{Irregular array code}\label{sec:2a}
An  irregular array code \cite{tosato2014,Chen2019} can be represented as an irregular array. Formally, given a positive integer $n$ and two column vectors $\mathbf{m}=[m_{1} \ \dots \ m_{n}]^{\Tr}$ and $\mathbf{p}=[p_{1}\ \dots \ p_{n}]^{\Tr}$, where $m_{i}, p_{i} \geq 0$ for $i \in [n]$, the codeword of an irregular array code $\mathcal{C}$ over $\mathbb{F}_{q}$ is denoted as
\begin{equation}
\mathbf{C}=(\mathbf{c}_{1},\mathbf{c}_{2},\dots,\mathbf{c}_{n}),
\end{equation}
where the column vector $\mathbf{c}_{i}$ contains $m_{i}$ data symbols and $p_{i}$ parity symbols. Specifically, we denote
\begin{equation}\label{eq:ci}
\mathbf{c}_{i}=\begin{bmatrix}
\mathbf{x}_{i}\\
\mathbf{p}_{i}
\end{bmatrix}, \quad i \in [n],
\end{equation}
where $\mathbf{x}_{i}$ is the $i$-th data vector that contains $m_{i}$ data symbols and $\mathbf{p}_{i}$ is the $i$-th parity vector that contains $p_{i}$ parity symbols. Since $\mathbf{x}_{i}$ contains data symbols, we can naturally consider that $\mathbf{x}_{i}$ is uniformly distributed over $\mathbb{F}_{q}^{m_{i}}$, and $\mathbf{x}_{i}$ and $\mathbf{x}_{j}$ are independent for $i \neq j \in [n]$. As such, we have
\begin{equation}\label{eq:xi}
H_{q}(\mathbf{x}_i)=m_i \qquad \forall i \in [n],
\end{equation}
\begin{equation}\label{eq:ij}
I_{q}(\mathbf{x}_i; \mathbf{x}_j)=0 \qquad \forall i,j \in [n], i \neq j.
\end{equation}
 As all symbols in $\mathbf{p}_{i} \in \mathbb{F}_{q}^{p_{i}}$ may not be independent, we can only obtain
\begin{equation}\label{eq:q2}
H_{q}(\mathbf{p}_i) \leq p_i \qquad \forall i \in [n].
\end{equation}
The storage redundancy (i.e., code redundancy) of $\mathcal{C}$ is the total number of parity symbols, i.e., $R=\sum_{i=1}^{n}{p_i}$. 
An example of irregular array codes is illustrated in Fig. \ref{fig:fig3},
where the gray cells contain the data symbols. In this example, we have $\mathbf{m}=[4 \ 2 \ 2 \ 0]^{\Tr}$, $\mathbf{p}=[2 \ 3 \ 3 \ 3]^{\Tr}$ and $R=11$.
In addition, the first column of the irregular array code in Fig. \ref{fig:fig3} stores four data symbols $\mathbf{x}_{1}=[x_{1,1} \ x_{1,2} \ x_{1,3} \ x_{1,4}]^{\Tr}$ and two parity symbols $\mathbf{p}_1=[x_{2,1}+x_{2,2} \ x_{3,2}]^{\Tr}$. 

\begin{figure} 
  \centering
    \begin{tabular}{|c|c|c|c|}
\hline
\cellcolor{gray}{$x_{1,1}$} & \cellcolor{gray}$x_{2,1}$ & \cellcolor{gray}$x_{3,1}$ & $x_{1,1}+x_{1,3}+x_{3,1}$\\
\hline
\cellcolor{gray}$x_{1,2}$ & \cellcolor{gray}$x_{2,2}$ & \cellcolor{gray}$x_{3,2}$ & $x_{1,2}+x_{1,4}+x_{3,1}$\\
\hline
\cellcolor{gray}$x_{1,3}$ & $x_{1,1}$ & $x_{1,3}$ & $x_{2,2}+x_{3,1}$\\
\hline
\cellcolor{gray}$x_{1,4}$ & $x_{1,2}$ & $x_{1,4}$ \\
\cline{1-3}
$x_{2,1}+x_{2,2}$ & $x_{3,1}+x_{3,2}$ & $x_{2,1}$\\
\cline{1-3}
$x_{3,2}$ \\
\cline{1-1}
\end{tabular}
\caption{This figure, where the gray cells contain the data symbols, shows a $(4,2,\mathbf{m})$ irregular MDS array code with $\mathbf{m}=[4 \ 2 \ 2 \ 0]^{\Tr}$ and $\mathbf{p}=[2 \ 3 \ 3 \ 3]^{\Tr}$.}
\label{fig:fig3} 
\end{figure} 

When $m_{i}+p_{i}=m_{j}+p_{j}$ for any $i\neq j \in [n]$, the irregular array code $\mathcal{C}$ is reduced to a regular array code.  
When $m_{i}=m_{j}$ and $p_{i}=p_{j}$ for all $i\neq j \in [n]$, $\mathcal{C}$ is called a vertical array code. As an example, both codes in Fig. \ref{fig:fig2} are vertical array codes. When $p_{i}=0$ for $i \in [k]$ and $m_{j}=0$ for $k<j \leq n$, $\mathcal{C}$ is called a horizontal array code.
If we can retrieve all the data symbols by accessing any $k$ columns, and there is a set of $k-1$ columns which we can not retrieve all the data symbols from, then the code $\mathcal{C}$ is parameterized as an $(n,k,\mathbf{m})$ irregular  array code. 

We will demonstrate in Section~\ref{sec:6b} that the code in Fig. \ref{fig:fig3} can only be reconstructed by accessing at least any two columns, and hence 
it is a $(4,2,\mathbf{m})$ irregular array code. 
In fact, the code in Fig.~\ref{fig:fig3} is also an MUB code with the smallest code redundancy (cf.~Section~\ref{sec:4c}).
In the subsection that follows, we will introduce the update model and the update bandwidth
of $(n,k,\mathbf{m})$ irregular array codes.


\subsection{Update model and update bandwidth}\label{sec:ub}
In an $(n,k,\mathbf{m})$ irregular array code $\mathcal{C}$, each parity symbol can be generated as a linear combination of all  data symbols. Thus, the parity symbols in each column can be obtained from
\begin{equation}\label{eq:constr}
\mathbf{p}_{j}=\sum_{i=1}^{n}{\mathbf{M}_{i,j}\mathbf{x}_{i}} \quad j \in [n],
\end{equation}
where $\mathbf{M}_{i,j}$ is a $p_{j} \times m_{i}$ matrix, called construction matrix. Apparently, when the data vector in node $i$ is updated from $\mathbf{x}_{i}$ to $\mathbf{x}_{i}^{*}$, node $j$ with $j\in [n]\setminus \{i\}$ needs to update its parity vector via $\mathbf{p}_{j}^{*}=\mathbf{p}_{j}+\mathbf{M}_{i,j}\Delta\mathbf{x}_{i}$, where $\Delta\mathbf{x}_{i}=\mathbf{x}_{i}^{*}-\mathbf{x}_{i}$. 
Such update process can be divided into two steps. 
First, node $i$ calculates the intermediate vector $\mathbf{A}_{i,j}\Delta\mathbf{x}_{i}$, and sends these symbols to node $j$. 
Second, node $j$ calculates $\Delta\mathbf{p}_{j}=\mathbf{p}^{*}_{j}-\mathbf{p}_{j}$ from the intermediate vector via a linear transformation, i.e., $\Delta\mathbf{p}_{j}=\mathbf{B}_{i,j}\mathbf{A}_{i,j}\Delta\mathbf{x}_i$. 
As a result of \eqref{eq:constr}, the two matrices $\mathbf{A}_{i,j}$ and $\mathbf{B}_{i,j}$ corresponding to the linear transformations respectively performed by node $i$ and node $j$ must satisfy $\mathbf{M}_{i,j}=\mathbf{B}_{i,j}\mathbf{A}_{i,j}$. Based on the above update model, the number of symbols sent from node $i$ to node $j$ is $\row(\mathbf{A}_{i,j})$. 

Denoting 
\begin{equation}\label{eq:gamaij}
\gamma_{i,j} \triangleq \min \limits_{\mathbf{A}_{i,j},\mathbf{B}_{i,j}} \{\row(\mathbf{A}_{i,j})|\mathbf{M}_{i,j}=\mathbf{B}_{i,j}\mathbf{A}_{i,j}\}, \mbox{ for } i\neq j,
\end{equation}
as the minimum amount of symbols sent from node $i$ to node $j$ when updating the data symbols stored in node $i$, we have the following theorem.


\begin{theorem}\label{th:0}
$\row(\mathbf{A}_{i,j})=\gamma_{i,j}$ if, and only if,  
 $\rank(\mathbf{A}_{i,j})=\rank(\mathbf{B}_{i,j})=\row(\mathbf{A}_{i,j})$, and both $\mathbf{A}_{i,j}$ and $\mathbf{B}_{i,j}$ are with full rank. Furthermore,  $\gamma_{i,j}=\rank(\mathbf{M}_{i,j})$. 
\end{theorem}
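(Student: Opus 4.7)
The plan is to prove this as a standard full-rank factorization result, using basic rank inequalities to obtain a lower bound, and an explicit construction (taking a maximal set of linearly independent columns of $\mathbf{M}_{i,j}$) to attain it.

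First, I would establish the lower bound. For any valid factorization $\mathbf{M}_{i,j}=\mathbf{B}_{i,j}\mathbf{A}_{i,j}$, the subadditivity of rank under matrix multiplication gives $\rank(\mathbf{M}_{i,j})\le\min\{\rank(\mathbf{A}_{i,j}),\rank(\mathbf{B}_{i,j})\}\le\row(\mathbf{A}_{i,j})$, because the rank of any matrix is bounded by its number of rows. Taking the minimum over all valid factorizations yields $\gamma_{i,j}\ge\rank(\mathbf{M}_{i,j})$.

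Second, I would produce a factorization that attains this bound. Set $r=\rank(\mathbf{M}_{i,j})$ and let $\mathbf{B}_{i,j}$ be the $p_{j}\times r$ matrix whose columns are any $r$ linearly independent columns of $\mathbf{M}_{i,j}$; then every column of $\mathbf{M}_{i,j}$ lies in the column space of $\mathbf{B}_{i,j}$, so there is a unique $r\times m_{i}$ matrix $\mathbf{A}_{i,j}$ with $\mathbf{M}_{i,j}=\mathbf{B}_{i,j}\mathbf{A}_{i,j}$. Since the $r$ chosen columns of $\mathbf{M}_{i,j}$ appear as $\mathbf{B}_{i,j}$ multiplied by the corresponding columns of $\mathbf{A}_{i,j}$, and these chosen columns are linearly independent, the corresponding columns of $\mathbf{A}_{i,j}$ must be linearly independent as well, whence $\rank(\mathbf{A}_{i,j})=r=\row(\mathbf{A}_{i,j})$. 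This shows $\gamma_{i,j}\le r$, and combined with the lower bound, $\gamma_{i,j}=\rank(\mathbf{M}_{i,j})$.

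Finally, I would handle the ``if and only if'' characterization. For the only-if direction, assume $\row(\mathbf{A}_{i,j})=\gamma_{i,j}=\rank(\mathbf{M}_{i,j})$. Chaining the inequalities used above, $\rank(\mathbf{M}_{i,j})\le\rank(\mathbf{A}_{i,j})\le\row(\mathbf{A}_{i,j})=\rank(\mathbf{M}_{i,j})$, which forces $\rank(\mathbf{A}_{i,j})=\row(\mathbf{A}_{i,j})$; the same chain applied through $\mathbf{B}_{i,j}$, combined with the fact that $\mathbf{B}_{i,j}$ has $\row(\mathbf{A}_{i,j})$ columns, gives $\rank(\mathbf{B}_{i,j})=\row(\mathbf{A}_{i,j})$, so both matrices have full rank equal to $\row(\mathbf{A}_{i,j})$. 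For the if direction, if $\mathbf{B}_{i,j}$ has full column rank $\row(\mathbf{A}_{i,j})$, then left-multiplication by $\mathbf{B}_{i,j}$ preserves rank, so $\rank(\mathbf{M}_{i,j})=\rank(\mathbf{B}_{i,j}\mathbf{A}_{i,j})=\rank(\mathbf{A}_{i,j})=\row(\mathbf{A}_{i,j})$, which equals $\gamma_{i,j}$ by the identity just proved.

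No step poses a genuine obstacle here; the only subtlety is being careful that ``full rank'' of $\mathbf{B}_{i,j}$ (a tall matrix of width $\row(\mathbf{A}_{i,j})$) means full column rank, which is what lets multiplication by $\mathbf{B}_{i,j}$ preserve the rank of $\mathbf{A}_{i,j}$ in the converse direction.
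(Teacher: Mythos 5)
Your proof is correct, and it takes a genuinely different route from the paper's. The paper's argument runs in the opposite order: it first shows that any minimizer $(\mathbf{A}_{i,j},\mathbf{B}_{i,j})$ of \eqref{eq:gamaij} must have full rank, via a contradiction in which a rank-deficient $\mathbf{A}_{i,j}$ is compressed by left-multiplication with an invertible $\mathbf{R}_{i,j}$ to yield a factorization with strictly fewer rows, violating minimality; it then invokes Sylvester's rank inequality in both directions to obtain $\gamma_{i,j}=\rank(\mathbf{M}_{i,j})$, and finishes the converse with another application of Sylvester. You instead establish $\gamma_{i,j}=\rank(\mathbf{M}_{i,j})$ first, using the elementary two-sided bound $\rank(\mathbf{M}_{i,j})\le\min\{\rank(\mathbf{A}_{i,j}),\rank(\mathbf{B}_{i,j})\}\le\row(\mathbf{A}_{i,j})$ for any factorization and exhibiting the standard column-basis full-rank factorization to attain it, and only then derive the ``if and only if'' characterization by chaining inequalities. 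Your route avoids Sylvester's inequality and the compression-by-invertible-matrix construction entirely, replacing them with the classical rank factorization and a pinching argument; it is arguably the more textbook path. One small point of rigor you glossed over: when you assert that the corresponding columns of $\mathbf{A}_{i,j}$ are linearly independent, the cleanest justification is that $\mathbf{B}_{i,j}$ has full column rank, so $\mathbf{v}\mapsto\mathbf{B}_{i,j}\mathbf{v}$ is injective and forces those columns of $\mathbf{A}_{i,j}$ to form an identity block, though this extra care is not actually needed for the upper bound $\gamma_{i,j}\le r$, which follows merely from the existence of the $r$-row factorization.
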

\begin{proof}
We first prove that $\row(\mathbf{A}_{i,j})=\gamma_{i,j}$ implies 
 $\rank(\mathbf{A}_{i,j})=\rank(\mathbf{B}_{i,j})=\row(\mathbf{A}_{i,j})$, and both $\mathbf{A}_{i,j}$ and $\mathbf{B}_{i,j}$ have full rank, which in turns validates
 $\gamma_{i,j}=\rank(\mathbf{M}_{i,j})$.
 
Assuming that  $\row(\mathbf{A}_{i,j})$ is with the minimum value, i.e., $\row(\mathbf{A}_{i,j})=\gamma_{i,j}$, we show by contradiction that  $\row(\mathbf{A}_{i,j})=\rank(\mathbf{A}_{i,j})$. 
Suppose $\rank(\mathbf{A}_{i,j})<\row(\mathbf{A}_{i,j})$. Then, there is an invertible matrix $\mathbf{R}_{i,j}$ satisfying $\mathbf{R}_{i,j}\mathbf{A}_{i,j}=\tiny\begin{bmatrix}
\mathbf{A}_{i,j}'\\
[\mathbf{0}]
\end{bmatrix}$, where $[\mathbf{0}]$ is a $(\row(\mathbf{A}_{i,j})-\rank(\mathbf{A}_{i,j})) \times m_{i}$ zero matrix, and $\rank(\mathbf{A}_{i,j}')=\row(\mathbf{A}_{i,j}')=\rank(\mathbf{A}_{i,j})$. We thus have $\mathbf{M}_{i,j}=\mathbf{B}_{i,j}\mathbf{R}_{i,j}^{-1}\mathbf{R}_{i,j}\mathbf{A}_{i,j}=\mathbf{B}_{i,j}\mathbf{R}_{i,j}^{-1} \tiny\begin{bmatrix}
\mathbf{A}_{i,j}'\\
[\mathbf{0}]
\end{bmatrix}$, which implies $\mathbf{M}_{i,j}=\mathbf{B}_{i,j}'\mathbf{A}_{i,j}'$ with $\mathbf{B}_{i,j}'$ being the first $\rank(\mathbf{A}_{i,j})$ columns of $\mathbf{B}_{i,j}\mathbf{R}_{i,j}^{-1}$. However, $\rank(\mathbf{A}_{i,j}')=\rank(\mathbf{A}_{i,j})< \gamma_{i,j}$ contradicts to the definition of $\gamma_{i,j}$. We therefore confirm that if $\row(\mathbf{A}_{i,j})=\gamma_{i,j}$, then $\row(\mathbf{A}_{i,j})=\rank(\mathbf{A}_{i,j})$. 
Similarly, we can show that if $\col(\mathbf{B}_{i,j})=\gamma_{i,j}$, then
$\col(\mathbf{B}_{i,j})=\rank(\mathbf{B}_{i,j})$. As $\row(\mathbf{A}_{i,j})=\col(\mathbf{B}_{i,j})$, we conclude that
$\row(\mathbf{A}_{i,j})=\gamma_{i,j}$ implies
$\rank(\mathbf{A}_{i,j})=\rank(\mathbf{B}_{i,j})=\row(\mathbf{A}_{i,j})=\col(\mathbf{B}_{i,j})$, and both $\mathbf{A}_{i,j}$ and $\mathbf{B}_{i,j}$ have full rank.
An immediate consequence of the above proof is that this pair of $\mathbf{A}_{i,j}$ and $\mathbf{B}_{i,j}$ is a minimizer of \eqref{eq:gamaij}.
By Sylvester's rank inequality, we have 
\begin{equation}
\rank(\mathbf{B}_{i,j})+\rank(\mathbf{A}_{i,j})-\row(\mathbf{A}_{i,j})=\gamma_{i,j}\le\rank(\mathbf{M}_{i,j}).
\end{equation}
It can also be inferred that
\begin{equation}
\rank(\mathbf{M}_{i,j})\le \min\{\rank(\mathbf{B}_{i,j}),\rank(\mathbf{A}_{i,j})\}=\gamma_{i,j}.
\end{equation}
Hence, 
\begin{equation}
\gamma_{i,j}=\rank(\mathbf{M}_{i,j}).
\end{equation}

We next show the converse statement, i.e., if both $\mathbf{A}_{i,j}$ and $\mathbf{B}_{i,j}$ are with full rank and $\rank(\mathbf{A}_{i,j})=\rank(\mathbf{B}_{i,j})=\row(\mathbf{A}_{i,j})$, then $\row(\mathbf{A}_{i,j})=\gamma_{i,j}$. 
Given $\rank(\mathbf{B}_{i,j})=\row(\mathbf{A}_{i,j})$, we  obtain by Sylvester's rank inequality that
\begin{equation}
\rank(\mathbf{B}_{i,j})+\rank(\mathbf{A}_{i,j})-\row(\mathbf{A}_{i,j})=\rank(\mathbf{A}_{i,j})\le\rank(\mathbf{M}_{i,j})=\gamma_{i,j},
\end{equation}
which, together with $\gamma_{i,j}=\rank(\mathbf{M}_{i,j})\le\rank(\mathbf{A}_{i,j})$,
establishes $\row(\mathbf{A}_{i,j})=\rank(\mathbf{A}_{i,j})=\gamma_{i,j}$.
This completes the proof.
\end{proof}

Theorem~\ref{th:0} indicates that $\gamma_{i,j}=\rank(\mathbf{M}_{i,j})$ is the minimum amount of symbols required to be sent from node $i$ to node $j$ when updating the data symbols stored in node $i$. We thus define
the update bandwidth $\gamma$ for a code $\mathcal{C}$ as the average required bandwidth, i.e., 
\begin{equation}\label{eq:defgamma}
\gamma=\frac{1}{n}\sum_{i=1}^{n}{\sum_{j\in [n] \setminus \{i\}}{\gamma_{i,j}}}.
\end{equation}
By Theorem~\ref{th:0}, the update bandwidth $\gamma$ can be achieved by adopting two full-rank matrices 
that fulfill $\mathbf{M}_{i,j}=\mathbf{B}_{i,j}\mathbf{A}_{i,j}$,
where 
$\mathbf{B}_{i,j}$ is a $p_{j} \times \gamma_{i,j}$ matrix and $\mathbf{A}_{i,j}$ is a $\gamma_{i,j} \times m_{i}$ matrix. 
In the rest of the paper, the full-rank matrices $\mathbf{B}_{i,j}$ and $\mathbf{A}_{i,j}$ used in our update model are fixed as the ones with rank $\gamma_{i,j}$. 


\subsection{Encoding aspect of the update model}\label{sec:enc}

The update model in the previous subsection can also be 
equivalently characterized via an encoding aspect from \eqref{eq:gamaij}.
Specifically, we can first calculate
\begin{equation}\label{eq:gi}
\mathbf{p}_{i,j} = \mathbf{A}_{i,j}\mathbf{x}_{i}  \qquad \forall i,j \in [n], i\neq j.
\end{equation}
Similar to \eqref{eq:q2}, since symbols in $\mathbf{p}_{i,j} \in \mathbb{F}_{q}^{\gamma_{i,j}}$ are possibly dependent, 
we can only obtain
\begin{equation}\label{eq:q1}
H_{q}(\mathbf{p}_{i,j}) \leq\gamma_{i,j} \qquad \forall i,j \in [n], i\neq j.
\end{equation}
Then, \eqref{eq:constr} can be rewritten as
\begin{equation}\label{eq:fi}
\mathbf{p}_j =\sum_{i =1,i\neq j}^{n}{\mathbf{B}_{i,j}\mathbf{p}_{i,j}} \qquad \forall j\in [n].
\end{equation}
\if
where $\mathbf{f}_j(x_1,x_2,\dots ,x_n)=\sum_{i=1}^n \mathbf{f}_{i,j}(x_i)$, and
\begin{equation}\label{eq:p^j}
\mathbf{p}^j=(\mathbf{p}_{1,j},\mathbf{p}_{2,j},\dots, \mathbf{p}_{n,j}) \qquad \forall j \in [n].
\end{equation}
\fi
As a result, the parity symbols are the coded symbols 
from two sets of encoding matrices $\{\mathbf{A}_{i,j}\}_{i,j \in [n]}$ and $\{\mathbf{B}_{i,j}\}_{i,j \in [n]}$. 
This encoding aspect of the update model will be adopted 
in later sections.
Since the number of symbols passed from \eqref{eq:gi} to
\eqref{eq:fi} is $\displaystyle\sum_{i=1}^{n}{\sum_{j\in [n] \setminus \{i\}}{\gamma_{i,j}}}=n\gamma$, the average number of symbols
transmitted among all nodes
during the encoding process is equal to the update bandwidth $\gamma$.


\section{Necessary condition for the existence of an irregular array code}\label{sec:UBMD}

In this section, we provide a necessary condition for the parameters $\{p_{j}\}_{j\in [n]}$ and $\{\gamma_{i,j}\}_{i \neq j \in [n]}$ such that an $(n,k,\mathbf{m})$ irregular array code $\mathcal{C}$, where retrieval of data symbols
can only be guaranteed by any other $k$ columns but not by any other $k-1$ columns, exists (cf.~Theorem \ref{th:x3} and Corollary \ref{cor:1}). 
For simplicity, we use $H(\cdot)$ and $I(\cdot \ ; \ \cdot)$ to represent $H_{q}(\cdot)$ and $I_{q}(\cdot \ ; \ \cdot)$ in this section. 


Some notations used in the proofs below are first introduced (cf.~Table~\ref{tab:1}).
For a subset $\mathcal{E} \subset [n]$ with $|\mathcal{E}|=n-k$, the elements in $\mathcal{E}$ are denoted as $e_{i}$ with $i\in[n-k]$. Similarly, denote the elements in  $\bar{\mathcal{E}} \triangleq [n] \setminus \mathcal{E}$ as $\bar{e}_i$ with $i\in [k]$. Let $\mathbf{X}_{\mathcal{E}}\triangleq[\mathbf{x}_{e_{1}}^{\Tr} \dots \mathbf{x}_{e_{n-k}}^{\Tr}]^{\Tr}$ and $\mathbf{X}_{\bar{\mathcal{E}}}\triangleq[\mathbf{x}_{\bar{e}_{1}}^{\Tr} \dots \mathbf{x}_{\bar{e}_{k}}^{\Tr}]^{\Tr}$, and $\mathbf{C}_{\mathcal{E}}$, $\mathbf{C}_{\bar{\mathcal{E}}}$, $\mathbf{P}_{\mathcal{E}}$ and $\mathbf{P}_{\bar{\mathcal{E}}}$ are similarly defined. 
Equation \eqref{eq:constr} can then be rewritten using these notations as
\begin{equation}
\begin{aligned}
\mathbf{p}_{j}=&\sum_{i\in [n-k]}{\mathbf{M}_{e_{i},j}\mathbf{x}_{e_{i}}} + \sum_{i\in [k]}{\mathbf{M}_{\bar{e}_{i},j}\mathbf{x}_{\bar{e}_{i}}}.\\
\end{aligned}
\end{equation}
Thus, we can write
\begin{equation}\label{eq:x13}
\mathbf{P}_{\mathcal{\bar{E}}}=\begin{bmatrix}
\mathbf{p}_{\bar{e}_{1}}\\
\mathbf{p}_{\bar{e}_{2}}\\
\vdots\\
\mathbf{p}_{\bar{e}_{k}}\\
\end{bmatrix}=\begin{bmatrix}
\mathbf{M}_{e_{1},\bar{e}_{1}} & \dots & \mathbf{M}_{e_{n-k},\bar{e}_{1}} \\
\mathbf{M}_{e_{1},\bar{e}_{2}} & \dots & \mathbf{M}_{e_{n-k},\bar{e}_{2}} \\
\vdots &\ddots & \vdots \\
\mathbf{M}_{e_{1},\bar{e}_{k}} & \dots & \mathbf{M}_{e_{n-k},\bar{e}_{k}} \\
\end{bmatrix} \mathbf{X}_{\mathcal{E}}+\begin{bmatrix}
 \mathbf{M}_{\bar{e}_{1},\bar{e}_{1}} & \dots & \mathbf{M}_{\bar{e}_{k},\bar{e}_{1}}\\
 \mathbf{M}_{\bar{e}_{1},\bar{e}_{2}} & \dots & \mathbf{M}_{\bar{e}_{k},\bar{e}_{2}}\\
 \vdots & \ddots & \vdots\\
 \mathbf{M}_{\bar{e}_{1},\bar{e}_{k}} & \dots & \mathbf{M}_{\bar{e}_{k},\bar{e}_{k}}\\
\end{bmatrix} \mathbf{X}_{\bar{\mathcal{E}}}.
\end{equation}
Let 
\begin{equation}\label{eq:x14}
\mathbf{M}_{\mathcal{E}}\triangleq\begin{bmatrix}
\mathbf{M}_{e_{1},\bar{e}_{1}} & \dots & \mathbf{M}_{e_{n-k},\bar{e}_{1}} \\
\mathbf{M}_{e_{1},\bar{e}_{2}} & \dots & \mathbf{M}_{e_{n-k},\bar{e}_{2}} \\
\vdots &\ddots & \vdots \\
\mathbf{M}_{e_{1},\bar{e}_{k}} & \dots & \mathbf{M}_{e_{n-k},\bar{e}_{k}} \\
\end{bmatrix}, \qquad 
\mathbf{M}_{\bar{\mathcal{E}}}\triangleq\begin{bmatrix}
 \mathbf{M}_{\bar{e}_{1},\bar{e}_{1}} & \dots & \mathbf{M}_{\bar{e}_{k},\bar{e}_{1}}\\
 \mathbf{M}_{\bar{e}_{1},\bar{e}_{2}} & \dots & \mathbf{M}_{\bar{e}_{k},\bar{e}_{2}}\\
 \vdots & \ddots & \vdots\\
 \mathbf{M}_{\bar{e}_{1},\bar{e}_{k}} & \dots & \mathbf{M}_{\bar{e}_{k},\bar{e}_{k}}\\
\end{bmatrix}.
\end{equation}
Then, from~\eqref{eq:x13} and~\eqref{eq:x14}, we establish
\begin{equation}\label{eq:x15}
\mathbf{P}_{\mathcal{\bar{E}}}=\mathbf{M}_{\mathcal{E}} \mathbf{X}_{\mathcal{E}}+\mathbf{M}_{\bar{\mathcal{E}}} \mathbf{X}_{\bar{\mathcal{E}}}.
\end{equation}

In the following, we provide four lemmas that will be useful
in characterizing a necessary condition for the existence 
of an $(n,k,\mathbf{m})$ irregular array code
in Theorem~\ref{th:x2}.

\begin{lemma}\label{le:x1}
Given a matrix $\mathbf{A} \in \mathbb{F}_{q}^{a \times b}$ and a random column vector $\mathbf{b} \in \mathbb{F}_{q}^{b}$, we have $H(\mathbf{A}\mathbf{b}) \leq\rank(\mathbf{A})$.
\end{lemma}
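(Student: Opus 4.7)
The plan is to bound the entropy of $\mathbf{A}\mathbf{b}$ by the $q$-ary logarithm of the size of its support. Since $\mathbf{A}\mathbf{b}$ is, by definition, a linear combination of the columns of $\mathbf{A}$, every realization of $\mathbf{A}\mathbf{b}$ lies in the column space of $\mathbf{A}$, which is an $\mathbb{F}_q$-subspace of $\mathbb{F}_q^a$ of dimension $\rank(\mathbf{A})$. Hence the support of $\mathbf{A}\mathbf{b}$ contains at most $q^{\rank(\mathbf{A})}$ elements, regardless of the distribution of $\mathbf{b}$.

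First I would cite the standard maximum-entropy inequality: for any discrete random vector $Y$ taking values in a finite set $\mathcal{S}$, one has $H_q(Y) \leq \log_q |\mathcal{S}|$. Applying this to $Y=\mathbf{A}\mathbf{b}$ with $\mathcal{S}$ the column space of $\mathbf{A}$ yields $H(\mathbf{A}\mathbf{b}) \leq \log_q q^{\rank(\mathbf{A})}=\rank(\mathbf{A})$, which is the claim.

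As an alternative (and as a warm-up for the full-rank decompositions used throughout the update-bandwidth analysis), I would write $\mathbf{A}=\mathbf{U}\mathbf{V}$ with $\mathbf{U}\in\mathbb{F}_q^{a\times r}$ and $\mathbf{V}\in\mathbb{F}_q^{r\times b}$, both of full rank $r=\rank(\mathbf{A})$. Then $\mathbf{A}\mathbf{b}=\mathbf{U}(\mathbf{V}\mathbf{b})$ is a deterministic function of $\mathbf{V}\mathbf{b}$, so $H(\mathbf{A}\mathbf{b})\le H(\mathbf{V}\mathbf{b})$, and since $\mathbf{V}\mathbf{b}\in\mathbb{F}_q^r$ we get $H(\mathbf{V}\mathbf{b})\le r$, finishing the argument.

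No step is genuinely difficult: the statement is essentially the observation that entropy is controlled by support size and that $\operatorname{Im}(\mathbf{A})$ has $\mathbb{F}_q$-dimension $\rank(\mathbf{A})$. The only thing to be careful about is consistently using the $q$-ary entropy convention (so that $\log_q q^r=r$) matching the earlier identities \eqref{eq:xi}--\eqref{eq:q2}.
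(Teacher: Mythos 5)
Your proof is correct. Your primary argument is more elementary and more direct than the paper's: you observe that $\mathbf{A}\mathbf{b}$ always lies in $\operatorname{Im}(\mathbf{A})$, a subspace of $\mathbb{F}_q^a$ of size $q^{\rank(\mathbf{A})}$, and then invoke the maximum-entropy bound $H_q(Y)\le\log_q|\mathcal{S}|$ for a random variable supported on a finite set $\mathcal{S}$. The paper instead constructs an invertible $\mathbf{R}$ so that $\mathbf{R}\mathbf{A}$ stacks a full-row-rank $\mathbf{A}'$ over a zero block, shows $H(\mathbf{A}\mathbf{b}|\mathbf{A}'\mathbf{b})=H(\mathbf{A}'\mathbf{b}|\mathbf{A}\mathbf{b})=0$ (so the two entropies coincide, via a mutual-information identity), and then bounds $H(\mathbf{A}'\mathbf{b})$ by its number of components $\rank(\mathbf{A})$. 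Your alternative route via a full-rank factorization $\mathbf{A}=\mathbf{U}\mathbf{V}$ is essentially the paper's argument stated more cleanly: since $\mathbf{A}\mathbf{b}$ is a deterministic function of $\mathbf{V}\mathbf{b}$, only the one-sided inequality $H(\mathbf{A}\mathbf{b})\le H(\mathbf{V}\mathbf{b})\le r$ is needed, and the mutual-information detour in the paper is avoidable. In summary, your support-size argument buys brevity and avoids any matrix manipulation, while the decomposition-based route (paper and your alternative) foreshadows the full-rank factorizations $\mathbf{M}_{i,j}=\mathbf{B}_{i,j}\mathbf{A}_{i,j}$ that appear throughout the rest of the analysis; both are fully rigorous.
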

\begin{proof} The lemma trivially holds when 
$\rank(\mathbf{A})=0$ or $\rank(\mathbf{A})=\row(\mathbf{A})$. Here, we provide the proof subject to $\row(\mathbf{A})>\rank(\mathbf{A})> 0$.
Given a matrix $\mathbf{A} \in \mathbb{F}_{q}^{a \times b}$, there is an invertible matrix $\mathbf{R}$ such that $\mathbf{R}\mathbf{A} = \tiny \begin{bmatrix}
\mathbf{A}'\\
[\mathbf{0}]
\end{bmatrix}$, where $[\mathbf{0}]$ is a $(\row(\mathbf{A})-\rank(\mathbf{A}) )\times\col(\mathbf{A})$ zero matrix. Then, given $\mathbf{A}'\mathbf{b}$, we can determine $\mathbf{A}\mathbf{b}$ via $\mathbf{A}\mathbf{b}=\mathbf{R}^{-1}\tiny \begin{bmatrix}
\mathbf{A}'\mathbf{b}\\
[\mathbf{0}]
\end{bmatrix}$, and vice versa. Thus, we have $H(\mathbf{A}\mathbf{b}|\mathbf{A}'\mathbf{b})=H(\mathbf{A}'\mathbf{b}|\mathbf{A}\mathbf{b})=0$. As $I(\mathbf{A}'\mathbf{b} ; \mathbf{A}\mathbf{b})=H(\mathbf{A}'\mathbf{b}) - H(\mathbf{A}'\mathbf{b}|\mathbf{A}\mathbf{b})=H(\mathbf{A}\mathbf{b}) - H(\mathbf{A}\mathbf{b}|\mathbf{A}'\mathbf{b})$, we conclude $H(\mathbf{A}\mathbf{b}) =H(\mathbf{A}'\mathbf{b}) \leq\row(\mathbf{A}'\mathbf{b})=\rank(\mathbf{A})$. This completes the proof.
\end{proof}

The next three lemmas associate $\mathbf{m}$, $\mathbf{p}$ and $\{\gamma_{i,j}=\rank(\mathbf{M}_{i,j})\}_{i,j\in [n]}$ through $\rank(\mathbf{M}_{\mathcal{E}})$.

\begin{lemma}\label{th:x2}
Given any $\mathcal{E}\subset [n]$ with $|\mathcal{E}|=n-k$, if each codeword $\mathbf{C}\in\mathcal{C}$ can be determined uniquely by $\mathbf{C}_{\bar{\mathcal{E}}}$, then we have $\sum_{i \in \mathcal{E}}{m_{i}} \leq\rank(\mathbf{M}_{\mathcal{E}})$.
\end{lemma}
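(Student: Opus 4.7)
The plan is to use the entropy identities in \eqref{eq:xi}--\eqref{eq:ij} together with the structural decomposition \eqref{eq:x15} to convert the decodability hypothesis into the desired rank inequality. The starting observation is that if $\mathbf{C}$ is uniquely determined by $\mathbf{C}_{\bar{\mathcal{E}}}=(\mathbf{X}_{\bar{\mathcal{E}}},\mathbf{P}_{\bar{\mathcal{E}}})$, then in particular $\mathbf{X}_{\mathcal{E}}$ is determined by $(\mathbf{X}_{\bar{\mathcal{E}}},\mathbf{P}_{\bar{\mathcal{E}}})$, which translates to the zero-conditional-entropy statement $H(\mathbf{X}_{\mathcal{E}}\mid\mathbf{X}_{\bar{\mathcal{E}}},\mathbf{P}_{\bar{\mathcal{E}}})=0$.

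Next, I would expand the entropy of $\mathbf{X}_{\mathcal{E}}$. By \eqref{eq:xi}--\eqref{eq:ij}, the data vectors $\{\mathbf{x}_i\}_{i\in\mathcal{E}}$ are independent and uniform, so $H(\mathbf{X}_{\mathcal{E}})=\sum_{i\in\mathcal{E}}m_i$; and independence of $\mathbf{X}_{\mathcal{E}}$ from $\mathbf{X}_{\bar{\mathcal{E}}}$ gives $I(\mathbf{X}_{\mathcal{E}};\mathbf{X}_{\bar{\mathcal{E}}})=0$. Using the chain rule,
\begin{equation*}
\sum_{i\in\mathcal{E}} m_i = H(\mathbf{X}_{\mathcal{E}}) = I(\mathbf{X}_{\mathcal{E}};\mathbf{X}_{\bar{\mathcal{E}}},\mathbf{P}_{\bar{\mathcal{E}}}) = I(\mathbf{X}_{\mathcal{E}};\mathbf{P}_{\bar{\mathcal{E}}}\mid\mathbf{X}_{\bar{\mathcal{E}}}) \le H(\mathbf{P}_{\bar{\mathcal{E}}}\mid\mathbf{X}_{\bar{\mathcal{E}}}).
\end{equation*}

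To bound $H(\mathbf{P}_{\bar{\mathcal{E}}}\mid\mathbf{X}_{\bar{\mathcal{E}}})$, I would invoke the key identity \eqref{eq:x15}, namely $\mathbf{P}_{\bar{\mathcal{E}}}=\mathbf{M}_{\mathcal{E}}\mathbf{X}_{\mathcal{E}}+\mathbf{M}_{\bar{\mathcal{E}}}\mathbf{X}_{\bar{\mathcal{E}}}$. Conditioning on $\mathbf{X}_{\bar{\mathcal{E}}}$ renders the term $\mathbf{M}_{\bar{\mathcal{E}}}\mathbf{X}_{\bar{\mathcal{E}}}$ a known constant, so $H(\mathbf{P}_{\bar{\mathcal{E}}}\mid\mathbf{X}_{\bar{\mathcal{E}}})=H(\mathbf{M}_{\mathcal{E}}\mathbf{X}_{\mathcal{E}}\mid\mathbf{X}_{\bar{\mathcal{E}}})=H(\mathbf{M}_{\mathcal{E}}\mathbf{X}_{\mathcal{E}})$, where the last equality uses the independence of $\mathbf{X}_{\mathcal{E}}$ and $\mathbf{X}_{\bar{\mathcal{E}}}$. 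Finally, Lemma \ref{le:x1} applied with $\mathbf{A}=\mathbf{M}_{\mathcal{E}}$ and $\mathbf{b}=\mathbf{X}_{\mathcal{E}}$ yields $H(\mathbf{M}_{\mathcal{E}}\mathbf{X}_{\mathcal{E}})\le\rank(\mathbf{M}_{\mathcal{E}})$, which chains with the above inequalities to deliver $\sum_{i\in\mathcal{E}}m_i\le\rank(\mathbf{M}_{\mathcal{E}})$.

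The argument is essentially a routine information-theoretic inequality once the correct decomposition \eqref{eq:x15} and Lemma \ref{le:x1} are in hand; there is no real obstacle. The only subtlety worth being careful about is the conditional-independence step that removes $\mathbf{M}_{\bar{\mathcal{E}}}\mathbf{X}_{\bar{\mathcal{E}}}$ from the conditional entropy and drops the conditioning, which relies crucially on \eqref{eq:ij} holding jointly across all indices and not merely pairwise. This is fine because the $\mathbf{x}_i$'s are assumed independent uniform vectors over disjoint coordinate blocks.
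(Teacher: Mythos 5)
Your proof is correct and follows essentially the same route as the paper: translate unique decodability into $H(\mathbf{X}_{\mathcal{E}}\mid\mathbf{C}_{\bar{\mathcal{E}}})=0$, use the chain rule and $I(\mathbf{X}_{\mathcal{E}};\mathbf{X}_{\bar{\mathcal{E}}})=0$ to reduce to $I(\mathbf{X}_{\mathcal{E}};\mathbf{P}_{\bar{\mathcal{E}}}\mid\mathbf{X}_{\bar{\mathcal{E}}})$, apply the decomposition \eqref{eq:x15} to identify this with $H(\mathbf{M}_{\mathcal{E}}\mathbf{X}_{\mathcal{E}})$, and finish with Lemma~\ref{le:x1}. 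The only cosmetic difference is that you bound $I(\mathbf{X}_{\mathcal{E}};\mathbf{P}_{\bar{\mathcal{E}}}\mid\mathbf{X}_{\bar{\mathcal{E}}})\le H(\mathbf{P}_{\bar{\mathcal{E}}}\mid\mathbf{X}_{\bar{\mathcal{E}}})$ where the paper computes the mutual information exactly (noting the second entropy term vanishes because $\mathbf{P}_{\bar{\mathcal{E}}}$ is a deterministic function of all the $\mathbf{x}_i$'s), but both land at the same equality $\sum_{i\in\mathcal{E}}m_i = H(\mathbf{M}_{\mathcal{E}}\mathbf{X}_{\mathcal{E}})$.
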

\begin{proof}
If the knowledge of $\mathbf{C}_{\bar{\mathcal{E}}}$
can reconstruct the entire $\mathcal{C}$, then $H(\mathbf{X}_{\mathcal{E}} | \mathbf{C}_{\bar{\mathcal{E}}})=0$. Thus, we have
\begin{equation}\label{eq:23}
I(\mathbf{C}_{\bar{\mathcal{E}}} ; \mathbf{X}_{\mathcal{E}})=H(\mathbf{X}_{\mathcal{E}})-H(\mathbf{X}_{\mathcal{E}} | \mathbf{C}_{\bar{\mathcal{E}}})=H(\mathbf{X}_{\mathcal{E}})=\sum_{i \in \mathcal{E}}{m_i}.
\end{equation}
Since $I(\mathbf{X}_{\bar{\mathcal{E}}};\mathbf{X}_{\mathcal{E}})=0$
as indicated by \eqref{eq:ij}, we get
\begin{equation}\label{eq:ice}
I(\mathbf{C}_{\bar{\mathcal{E}}} ; \mathbf{X}_{\mathcal{E}})=I(\mathbf{X}_{\bar{\mathcal{E}}},\mathbf{P}_{\bar{\mathcal{E}}};\mathbf{X}_{\mathcal{E}})
=I(\mathbf{X}_{\bar{\mathcal{E}}};\mathbf{X}_{\mathcal{E}})+I(\mathbf{P}_{\bar{\mathcal{E}}};\mathbf{X}_{\mathcal{E}}|\mathbf{X}_{\bar{\mathcal{E}}})
=I(\mathbf{P}_{\bar{\mathcal{E}}};\mathbf{X}_{\mathcal{E}}|\mathbf{X}_{\bar{\mathcal{E}}}).
\end{equation}
We then obtain from~\eqref{eq:x15} and~\eqref{eq:ice} that
\begin{IEEEeqnarray*}{rCl}
I(\mathbf{P}_{\bar{\mathcal{E}}};\mathbf{X}_{\mathcal{E}}|\mathbf{X}_{\bar{\mathcal{E}}})&=&I(\mathbf{M}_{\mathcal{E}} \mathbf{X}_{\mathcal{E}}+\mathbf{M}_{\bar{\mathcal{E}}} \mathbf{X}_{\bar{\mathcal{E}}};\mathbf{X}_{\mathcal{E}}|\mathbf{X}_{\bar{\mathcal{E}}})\notag\\
&=&H(\mathbf{M}_{\mathcal{E}} \mathbf{X}_{\mathcal{E}}+\mathbf{M}_{\bar{\mathcal{E}}} \mathbf{X}_{\bar{\mathcal{E}}}|\mathbf{X}_{\bar{\mathcal{E}}})
- H(\mathbf{M}_{\mathcal{E}} \mathbf{X}_{\mathcal{E}}+\mathbf{M}_{\bar{\mathcal{E}}} \mathbf{X}_{\bar{\mathcal{E}}}|\mathbf{X}_{\bar{\mathcal{E}}},\mathbf{X}_{\mathcal{E}})\notag\\
&=&H(\mathbf{M}_{\mathcal{E}} \mathbf{X}_{\mathcal{E}}|\mathbf{X}_{\bar{\mathcal{E}}})\notag\\
&=&H(\mathbf{M}_{\mathcal{E}} \mathbf{X}_{\mathcal{E}}),\label{eq:x19}
\end{IEEEeqnarray*}
which, together with \eqref{eq:23},~\eqref{eq:ice} and Lemma~\ref{le:x1}, implies
\begin{equation*}\label{eq:x16}
\sum_{i \in \mathcal{E}}{m_i}=I(\mathbf{C}_{\bar{\mathcal{E}}} ; \mathbf{X}_{\mathcal{E}})=I(\mathbf{P}_{\bar{\mathcal{E}}};\mathbf{X}_{\mathcal{E}}|\mathbf{X}_{\bar{\mathcal{E}}})=H(\mathbf{M}_{\mathcal{E}} \mathbf{X}_{\mathcal{E}})
\leq\rank(\mathbf{M}_{\mathcal{E}}).
\end{equation*}
\end{proof}

\begin{lemma}\label{le:x2}
$\rank(\mathbf{M}_{\mathcal{E}}) \leq \sum_{j \in \bar{\mathcal{E}}}\min\{p_{j}, \sum_{i \in \mathcal{E}}{\gamma_{i,j}}\}$.
\end{lemma}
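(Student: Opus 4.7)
The plan is to view $\mathbf{M}_{\mathcal{E}}$ as a block matrix indexed on the row side by $j \in \bar{\mathcal{E}}$ and on the column side by $i \in \mathcal{E}$, and then bound its rank by applying subadditivity of rank twice: once across block rows and once across block columns within a single block row.

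First, I would split $\mathbf{M}_{\mathcal{E}}$ into its $k$ block rows. Writing $\mathbf{R}_{j} \triangleq [\mathbf{M}_{e_1,j}\ \mathbf{M}_{e_2,j}\ \cdots\ \mathbf{M}_{e_{n-k},j}]$ for each $j \in \bar{\mathcal{E}}$, stacking these block rows gives back $\mathbf{M}_{\mathcal{E}}$. Since the row space of a vertically stacked matrix is the sum of the row spaces of its pieces, we have the subadditivity bound
\begin{equation*}
\rank(\mathbf{M}_{\mathcal{E}}) \;\le\; \sum_{j \in \bar{\mathcal{E}}} \rank(\mathbf{R}_{j}).
\end{equation*}

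Next, I would bound $\rank(\mathbf{R}_{j})$ in two distinct ways. Trivially, $\mathbf{R}_{j}$ has only $p_{j}$ rows, so $\rank(\mathbf{R}_{j}) \le p_{j}$. On the other hand, the column space of $\mathbf{R}_{j}$ is the sum of the column spaces of its $n-k$ horizontal blocks $\mathbf{M}_{e_i,j}$; by column-space subadditivity and Theorem~\ref{th:0}, which gives $\rank(\mathbf{M}_{i,j}) = \gamma_{i,j}$, we obtain
\begin{equation*}
\rank(\mathbf{R}_{j}) \;\le\; \sum_{i \in \mathcal{E}} \rank(\mathbf{M}_{i,j}) \;=\; \sum_{i \in \mathcal{E}} \gamma_{i,j}.
\end{equation*}
Combining these two bounds yields $\rank(\mathbf{R}_{j}) \le \min\{p_{j},\, \sum_{i \in \mathcal{E}} \gamma_{i,j}\}$, and plugging this into the block-row subadditivity bound gives the claimed inequality.

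There is no real obstacle: the proof is a double application of rank subadditivity, combined with the identification $\gamma_{i,j} = \rank(\mathbf{M}_{i,j})$ from Theorem~\ref{th:0}. The only bookkeeping issue is to ensure consistency with the definition of $\mathbf{M}_{\mathcal{E}}$ in~\eqref{eq:x14}, where the $(i,j)$-th block is $\mathbf{M}_{e_{j},\bar{e}_{i}}$ (i.e., the first subscript indexes the source data column, the second indexes the parity column); once this indexing is fixed, the two subadditivity steps apply verbatim.
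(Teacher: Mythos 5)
Your proof is correct and follows essentially the same route as the paper: a block-row rank subadditivity bound (the paper's Eq.~\eqref{eq:x20}), then for each block row the two bounds $\rank(\mathbf{R}_j)\le p_j$ and $\rank(\mathbf{R}_j)\le\sum_{i}\gamma_{i,j}$ via Theorem~\ref{th:0} (the paper's Eqs.~\eqref{eq:x21}--\eqref{eq:x22}), combined to a $\min$. The indexing remark at the end is also consistent with \eqref{eq:x14}, so there is nothing to change.
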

\begin{proof}
We first note from~\eqref{eq:x14} that
\begin{equation}\label{eq:x20}
\rank(\mathbf{M}_{\mathcal{E}}) \leq \sum_{j \in [k]}{\rank([
\mathbf{M}_{e_{1},\bar{e}_{j}} \dots \mathbf{M}_{e_{n-k},\bar{e}_{j}}
])}.
\end{equation}
Using $\gamma_{i,j}=\rank(\mathbf{M}_{i,j})$ from Theorem~\ref{th:0},
we obtain
\begin{equation}\label{eq:x21}
\rank([
\mathbf{M}_{e_{1},\bar{e}_{j}}  \dots  \mathbf{M}_{e_{n-k},\bar{e}_{j}}]) \leq \sum_{i \in [n-k]}{\rank(\mathbf{M}_{e_{i},\bar{e}_{j}})}=\sum_{i \in [n-k]}{\gamma_{e_{i}, \bar{e}_{j}}}.
\end{equation}
Next, we note that 
\begin{equation}\label{eq:x22}
\rank([
\mathbf{M}_{e_{1},\bar{e}_{j}}  \dots  \mathbf{M}_{e_{n-k},\bar{e}_{j}} ]) \leq\row([
\mathbf{M}_{e_{1},\bar{e}_{j}}  \dots  \mathbf{M}_{e_{n-k},\bar{e}_{j}} ])=p_{\bar{e}_{j}}.
\end{equation}
Combining~\eqref{eq:x21} and~\eqref{eq:x22} yields 
\begin{equation}\label{eq:x23}
\rank([
\mathbf{M}_{e_{1},\bar{e}_{j}}  \dots  \mathbf{M}_{e_{n-k},\bar{e}_{j}} ])  \leq \min\bigg\{p_{\bar{e}_{j}}, \sum_{i \in [n-k]}{\gamma_{e_{i}, \bar{e}_{j}}}\bigg\}.
\end{equation}
The validity of the lemma can thus be confirmed by~\eqref{eq:x20} and~\eqref{eq:x23}. 
\end{proof}

\begin{lemma}\label{le:x3}
$\rank(\mathbf{M}_{\mathcal{E}}) \leq \sum_{i \in \mathcal{E}}{\min\{m_{i}, \sum_{j \in \bar{\mathcal{E}}}{\gamma_{i,j}}\}}$.
\end{lemma}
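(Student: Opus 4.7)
The plan is to mirror the argument used in Lemma~\ref{le:x2}, but partition $\mathbf{M}_{\mathcal{E}}$ by column blocks rather than by row blocks. Looking at the definition in \eqref{eq:x14}, the $i$-th column block of $\mathbf{M}_{\mathcal{E}}$ (indexed by $e_i \in \mathcal{E}$) is the vertical stack $[\mathbf{M}_{e_i,\bar{e}_1}^{\Tr}\ \dots\ \mathbf{M}_{e_i,\bar{e}_k}^{\Tr}]^{\Tr}$, and the rank of a block matrix is upper-bounded by the sum of ranks of its column blocks. So first I would write
\begin{equation*}
\rank(\mathbf{M}_{\mathcal{E}}) \leq \sum_{i\in[n-k]} \rank\!\left(\begin{bmatrix}\mathbf{M}_{e_i,\bar{e}_1}\\ \vdots \\ \mathbf{M}_{e_i,\bar{e}_k}\end{bmatrix}\right).
\end{equation*}

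Next, I would bound each column-block rank in two different ways. On the one hand, the number of columns of each such block is $m_{e_i}$, so its rank is at most $m_{e_i}$. On the other hand, by subadditivity of rank applied to the vertical stack together with Theorem~\ref{th:0} (which identifies $\rank(\mathbf{M}_{i,j})$ with $\gamma_{i,j}$), the rank is at most $\sum_{j\in[k]}\rank(\mathbf{M}_{e_i,\bar{e}_j})=\sum_{j\in\bar{\mathcal{E}}}\gamma_{e_i,j}$. Taking the tighter of the two bounds yields
\begin{equation*}
\rank\!\left(\begin{bmatrix}\mathbf{M}_{e_i,\bar{e}_1}\\ \vdots \\ \mathbf{M}_{e_i,\bar{e}_k}\end{bmatrix}\right) \leq \min\!\Bigl\{m_{e_i},\ \sum_{j\in\bar{\mathcal{E}}}\gamma_{e_i,j}\Bigr\}.
\end{equation*}
Summing over $i\in[n-k]$, and re-indexing the sum as $i\in\mathcal{E}$, gives exactly the claimed inequality.

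This argument is essentially routine and symmetric to the one in Lemma~\ref{le:x2}; I do not expect a genuine obstacle. The only point that deserves a small amount of care is justifying the two rank bounds for a vertical block stack: the column-count bound is immediate, and the row-block subadditivity bound follows from the standard fact that $\rank\!\bigl(\bigl[\begin{smallmatrix}\mathbf{U}\\ \mathbf{V}\end{smallmatrix}\bigr]\bigr)\leq \rank(\mathbf{U})+\rank(\mathbf{V})$, iterated over the $k$ row blocks. No appeal to the information-theoretic machinery of Lemma~\ref{th:x2} is needed here — the statement is purely a linear-algebraic bound on $\mathbf{M}_{\mathcal{E}}$.
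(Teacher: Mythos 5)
Your proof is correct and essentially identical to the paper's: the paper first transposes $\mathbf{M}_{\mathcal{E}}$ and decomposes $\mathbf{M}_{\mathcal{E}}^{\Tr}$ into horizontal row-blocks, whereas you decompose $\mathbf{M}_{\mathcal{E}}$ directly into vertical column-blocks, but these are the same matrices and the two bounds (column count vs.\ rank subadditivity over the inner blocks) are applied in exactly the same way. The observation that the argument is purely linear-algebraic and mirrors Lemma~\ref{le:x2} matches the paper as well.
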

\begin{proof}
The proof of this lemma is similar to that of Lemma~\ref{le:x2}.
First from~\eqref{eq:x14}, we establish
\begin{equation}\label{eq:x24}
\rank(\mathbf{M}_{\mathcal{E}})=\rank(\mathbf{M}_{\mathcal{E}}^{\Tr}) \leq \sum_{i \in [n-k]}{\rank([
\mathbf{M}_{e_{i},\bar{e}_{1}}^{\Tr}  \dots  \mathbf{M}_{e_{i},\bar{e}_{k}}^{\Tr} 
])}.
\end{equation}
In parallel to~\eqref{eq:x21} and~\eqref{eq:x22}, we next derive $\rank([
\mathbf{M}_{e_{i},\bar{e}_{1}}^{\Tr}  \dots  \mathbf{M}_{e_{i},\bar{e}_{k}}^{\Tr} 
]) \leq \sum_{j \in [k]}{\gamma_{e_{i},\bar{e}_{j}}}$ and $\rank([
\mathbf{M}_{e_{i},\bar{e}_{1}}^{\Tr}  \dots  \mathbf{M}_{e_{i},\bar{e}_{k}}^{\Tr} 
]) \leq\row([
\mathbf{M}_{e_{i},\bar{e}_{1}}^{\Tr}  \dots  \mathbf{M}_{e_{i},\bar{e}_{k}}^{\Tr} 
])=m_{e_{i}}$, which immediately gives 
\begin{equation}\label{eq:x25}
\rank([
\mathbf{M}_{e_{i},\bar{e}_{1}}^{\Tr}  \dots  \mathbf{M}_{e_{i},\bar{e}_{k}}^{\Tr} 
]) \leq \min\bigg\{m_{e_{i}}, \sum_{j \in [k]}{\gamma_{e_{i},\bar{e}_{j}}}\bigg\}.
\end{equation}
The lemma then follows from~\eqref{eq:x24} and~\eqref{eq:x25}. 
\end{proof}

After establishing the above four lemmas, we are now ready to prove the main result in this section.

\begin{theorem}\label{th:x3} 
Given any $\mathcal{E}\subset [n]$ with $|\mathcal{E}|=n-k$, if each codeword $\mathbf{C}\in\mathcal{C}$ can be determined uniquely by $\mathbf{C}_{\bar{\mathcal{E}}}$, then the following inequalities must hold:
\begin{equation}\label{eq:x26}
\sum_{j \in \bar{\mathcal{E}}}{\gamma_{i,j}} \geq m_{i} \qquad \forall i \in \mathcal{E},
\end{equation}
\begin{equation}\label{eq:x27}
\sum_{i \in \mathcal{E}}{m_{i}} \leq \sum_{j \in \bar{\mathcal{E}}}{\min\bigg\{p_{j}, \sum_{i \in \mathcal{E}}{\gamma_{i,j}}\bigg\}}.
\end{equation}
\end{theorem}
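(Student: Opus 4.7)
The proof of Theorem~\ref{th:x3} will be a direct synthesis of the three lemmas already established (Lemmas~\ref{th:x2}, \ref{le:x2}, and \ref{le:x3}), so the plan is essentially to chain their inequalities in the right order.

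My plan is to handle \eqref{eq:x27} first, since it follows by two straightforward applications. Under the stated reconstructibility hypothesis, Lemma~\ref{th:x2} gives $\sum_{i\in\mathcal{E}} m_i \le \rank(\mathbf{M}_{\mathcal{E}})$, and Lemma~\ref{le:x2} gives $\rank(\mathbf{M}_{\mathcal{E}}) \le \sum_{j\in\bar{\mathcal{E}}} \min\{p_j, \sum_{i\in\mathcal{E}} \gamma_{i,j}\}$. Concatenating these two bounds yields \eqref{eq:x27} immediately.

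For \eqref{eq:x26}, the plan is to again invoke Lemma~\ref{th:x2}, but this time compose it with Lemma~\ref{le:x3} instead of Lemma~\ref{le:x2}. This produces
\begin{equation*}
\sum_{i\in\mathcal{E}} m_i \;\le\; \rank(\mathbf{M}_{\mathcal{E}}) \;\le\; \sum_{i\in\mathcal{E}}\min\bigl\{m_i,\; \textstyle\sum_{j\in\bar{\mathcal{E}}}\gamma_{i,j}\bigr\}.
\end{equation*}
The key observation is a simple termwise argument: for each $i\in\mathcal{E}$ we trivially have $\min\{m_i, \sum_{j\in\bar{\mathcal{E}}}\gamma_{i,j}\}\le m_i$, so the outer inequality above collapses to an equality, forcing $\min\{m_i,\sum_{j\in\bar{\mathcal{E}}}\gamma_{i,j}\}=m_i$ for every $i\in\mathcal{E}$. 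That is exactly \eqref{eq:x26}.

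Since all the genuine mathematical content (entropy bounds, Sylvester rank inequalities, block-decomposition of $\mathbf{M}_{\mathcal{E}}$) has already been absorbed into the lemmas, there is no real obstacle beyond bookkeeping. The only subtle point worth stating explicitly in the write-up is the termwise-equality argument for \eqref{eq:x26}; a reader might initially expect \eqref{eq:x26} to require a fresh entropy computation on an $i$-indexed submatrix, whereas in fact it drops out for free once Lemma~\ref{le:x3} is combined with Lemma~\ref{th:x2}. I will therefore present the proof in two short paragraphs, \eqref{eq:x27} first and \eqref{eq:x26} second, and highlight the collapse-to-equality step so the logic is transparent.
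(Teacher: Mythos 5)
Your proof is correct and takes essentially the same route as the paper: \eqref{eq:x27} by chaining Lemma~\ref{th:x2} with Lemma~\ref{le:x2}, and \eqref{eq:x26} by chaining Lemma~\ref{th:x2} with Lemma~\ref{le:x3}. The paper phrases the \eqref{eq:x26} step as a proof by contradiction (assume $\sum_{j\in\bar{\mathcal{E}}}\gamma_{u,j}<m_u$ for some $u$, derive $\rank(\mathbf{M}_{\mathcal{E}})<\sum_{i\in\mathcal{E}}m_i$), which is just the contrapositive packaging of your squeeze-to-equality argument, so the two are logically identical.
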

\begin{proof}
Inequality  
\eqref{eq:x27} is an immediate consequence of Lemmas~\ref{th:x2} and~\ref{le:x2}.

The inequality in~\eqref{eq:x26} can be proved by contradiction. 
Suppose   
$\sum_{j \in \bar{\mathcal{E}}}{\gamma_{u,j}} < m_{u}$ for some $u \in \mathcal{E}$.
Then, we can infer from Lemma~\ref{le:x3} that
\begin{equation}\label{eq:x28}
\rank(\mathbf{M}_{\mathcal{E}}) \leq
\sum_{i \in \mathcal{E} \setminus \{u\}}{\min\bigg\{m_{i}, \sum_{j \in \bar{\mathcal{E}}}{\gamma_{i,j}}\bigg\}}+
\sum_{j \in \bar{\mathcal{E}}}{\gamma_{u,j}}    < \sum_{i \in \mathcal{E} \setminus \{u\}}{m_{i}}+m_{u}   = \sum_{i \in \mathcal{E}}{m_{i}},
\end{equation}
which contradicts to Lemma~\ref{th:x2}.
Consequently, inequality \eqref{eq:x26} must hold for every
$i \in \mathcal{E}$. 
\end{proof}

For completeness, we conclude the section by reiterating the result in Theorem~\ref{th:x3} in the following corollary.

\begin{corollary}\label{cor:1}
An $(n,k,\mathbf{m})$ irregular array code $\mathcal{C}$ with construction matrices $\{\mathbf{M}_{i,j}\}_{i,j\in[n]}$ and 
numbers of parity symbols specified in 
$\mathbf{p}$ must fulfill \eqref{eq:x26} and~\eqref{eq:x27} for every $\mathcal{E} \subset [n]$ with $|\mathcal{E}|=n-k$.
\end{corollary}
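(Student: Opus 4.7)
The plan is to observe that Corollary~\ref{cor:1} is an immediate consequence of Theorem~\ref{th:x3}, so the only real content is to verify that the hypothesis of Theorem~\ref{th:x3} is met for \emph{every} $\mathcal{E}\subset [n]$ with $|\mathcal{E}|=n-k$ whenever $\mathcal{C}$ is an $(n,k,\mathbf{m})$ irregular array code. Once this hypothesis is in hand, inequalities~\eqref{eq:x26} and~\eqref{eq:x27} are applied verbatim for each such~$\mathcal{E}$.

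First, I would recall the definition of $(n,k,\mathbf{m})$ irregular array codes given in Section~\ref{sec:2a}: all data symbols $(\mathbf{x}_{1},\ldots,\mathbf{x}_{n})$ can be reconstructed from any $k$ columns of $\mathbf{C}$. Fixing an arbitrary $\mathcal{E}\subset [n]$ with $|\mathcal{E}|=n-k$, the complement $\bar{\mathcal{E}}$ has cardinality exactly $k$, so $\mathbf{C}_{\bar{\mathcal{E}}}$ already determines every data vector in $\mathbf{X}_{\mathcal{E}}$ (as well as those trivially exposed in $\mathbf{X}_{\bar{\mathcal{E}}}$).

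Next, I would close the loop from data symbols back to parity symbols. By the construction rule~\eqref{eq:constr}, each parity vector $\mathbf{p}_{j}$ is a fixed linear function of $(\mathbf{x}_{1},\ldots,\mathbf{x}_{n})$ through the construction matrices $\{\mathbf{M}_{i,j}\}$. Consequently, once $\mathbf{C}_{\bar{\mathcal{E}}}$ pins down all data symbols, the vectors $\mathbf{p}_{e_{1}},\ldots,\mathbf{p}_{e_{n-k}}$ (and hence $\mathbf{C}_{\mathcal{E}}$) are uniquely determined. Thus the entire codeword $\mathbf{C}$ is uniquely determined by $\mathbf{C}_{\bar{\mathcal{E}}}$, which is exactly the hypothesis of Theorem~\ref{th:x3}.

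Finally, I would invoke Theorem~\ref{th:x3} for this $\mathcal{E}$ to obtain inequalities~\eqref{eq:x26} and~\eqref{eq:x27}. Since $\mathcal{E}$ was arbitrary, both inequalities hold for every $\mathcal{E}\subset [n]$ with $|\mathcal{E}|=n-k$, as claimed. There is no real obstacle here; the only point that warrants a sentence of care is the passage from ``data symbols recoverable from any $k$ columns'' to ``the whole codeword recoverable from any $k$ columns'', which is handled cleanly by the linear-functional form of~\eqref{eq:constr}.
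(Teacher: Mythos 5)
Your proposal is correct and takes essentially the same route as the paper, which simply states that the corollary is a reiteration of Theorem~\ref{th:x3}. Your extra sentence verifying that recoverability of data symbols implies recoverability of the entire codeword (via the linearity of \eqref{eq:constr}) is a small but sound addition that makes explicit a step the paper leaves implicit.
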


\section{Lower bounds for code redundancy  and update bandwidth}\label{sec:MRMB}

In this section, three lower bounds will be established, 
which are lower bounds respectively for code redundancy and update bandwidth, and a lower bound for code redundancy subject to the minimum update bandwidth.
Their achievability by explicit constructions 
of irregular array codes under $k \mid m_{i}$ for all $i \in [n]$
will be shown in Section~\ref{sec:code}.
Without loss of generality, we assume in this section that
\begin{equation}\label{eq:mi}
m_1\geq m_2 \geq \dots \geq m_n\geq 0.
\end{equation}

\subsection{Minimization of code redundancy}

Theorem~\ref{th:x3} indicates that a lower bound for the code redundancy of an $(n,k,\mathbf{m})$ irregular array code can be obtained by solving the linear programming problem below.
\begin{lp}\label{lp:LP1} To minimize 
$R=\sum_{i=1}^{n}{p_i}$\,, subject to~\eqref{eq:x26} and~\eqref{eq:x27}
among all $\mathcal{E} \subset [n]$ with $|\mathcal{E}|=n-k$.
\end{lp}

Since the object function of Linear Programming \ref{lp:LP1} is only a function of $\mathbf{p}$,
a code redundancy $R$ is attainable due to a choice of $\mathbf{p}$, 
if there exists a set of corresponding $\{\gamma_{i,j}\}_{i\neq j\in[n]}$ that can validate both 
\eqref{eq:x26} and~\eqref{eq:x27}.
A valid selection of such $\{\gamma_{i,j}\}_{i\neq j\in[n]}$ for a given $\mathbf{p}$
is to persistently increase $\{\gamma_{i,j}\}_{i\neq j\in[n]}$ until 
both \eqref{eq:x26} and 
\begin{equation}
p_{j}\leq \sum_{i \in \mathcal{E}}{\gamma_{i,j}}\qquad \forall j \in \bar{\mathcal{E}}
\end{equation}
are satisfied for arbitrary choice of 
$\mathcal{E}\subset [n]$ with $|\mathcal{E}|=n-k$.
As a result, we can disregard \eqref{eq:x26} and reduce
\eqref{eq:x27}
to 
\begin{equation}\label{eq:x30}
\sum_{i \in \mathcal{E}}{m_{i}}\leq \sum_{j \in \bar{\mathcal{E}}}{p_{j}},
\end{equation}
leading to a new linear programming setup as follows.

\begin{lp}\label{lp:LP2}
To minimize $R=\sum_{i=1}^{n}{p_i}$, subject to \eqref{eq:x30}
among all $\mathcal{E} \subset [n]$ with $|\mathcal{E}|=n-k$.
\end{lp}

\begin{lemma}\label{le:x4}
Linear Programming~\ref{lp:LP1} is equivalent to Linear Programming~\ref{lp:LP2}.
\end{lemma}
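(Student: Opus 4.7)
My plan is to prove equivalence by showing that the two LPs share the same feasible set of $\mathbf{p}$-vectors; since both carry the common objective $R=\sum_{i=1}^{n} p_{i}$, this immediately yields equal optima. The argument splits into two containments.

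For the easy direction (LP2 optimum $\leq$ LP1 optimum), I would observe that if $(\mathbf{p}, \{\gamma_{i,j}\})$ satisfies \eqref{eq:x26}--\eqref{eq:x27}, then bounding $\min\{p_{j}, \sum_{i \in \mathcal{E}} \gamma_{i,j}\} \leq p_{j}$ inside \eqref{eq:x27} yields exactly \eqref{eq:x30}, so $\mathbf{p}$ is LP2-feasible with the same objective value.

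For the reverse direction, given $\mathbf{p}$ satisfying \eqref{eq:x30} for every admissible $\mathcal{E}$, I would follow the hint in the paragraph preceding the lemma and exhibit $\{\gamma_{i,j}\}$ so that $(\mathbf{p}, \{\gamma_{i,j}\})$ is LP1-feasible. The cleanest construction is to take $\gamma_{i,j}$ uniformly large, for instance $\gamma_{i,j} \equiv \max_{l} m_{l} + \max_{l} p_{l}$, which simultaneously forces $\sum_{j \in \bar{\mathcal{E}}} \gamma_{i,j} \geq m_{i}$ (validating \eqref{eq:x26}) and $\sum_{i \in \mathcal{E}} \gamma_{i,j} \geq p_{j}$, so the inner $\min$ in \eqref{eq:x27} collapses to $p_{j}$ and \eqref{eq:x27} degenerates into \eqref{eq:x30}, which holds by hypothesis.

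The main obstacle I anticipate is whether the rank interpretation $\gamma_{i,j} \leq \min\{m_{i}, p_{j}\}$ must be retained as an implicit LP1 constraint; if so, the ``make $\gamma_{i,j}$ huge'' trick is unavailable. I would handle this by instead choosing $\gamma_{i,j} := \min\{m_{i}, p_{j}\}$ and verifying \eqref{eq:x26}--\eqref{eq:x27} by case analysis. For \eqref{eq:x26}: either some $p_{j} \geq m_{i}$ in $\bar{\mathcal{E}}$, so that single term already supplies $m_{i}$; or $p_{j} < m_{i}$ throughout $\bar{\mathcal{E}}$, in which case $\sum_{j \in \bar{\mathcal{E}}} \min\{m_{i}, p_{j}\} = \sum_{j \in \bar{\mathcal{E}}} p_{j} \geq m_{i}$ by applying \eqref{eq:x30} to any $\mathcal{E} \ni i$. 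For \eqref{eq:x27}: a parallel dichotomy on whether some $m_{i} \geq p_{j}$ exists in $\mathcal{E}$ (forcing the inner $\min$ to equal $p_{j}$), or $p_{j}$ strictly dominates every $m_{i}$ in $\mathcal{E}$ (where a single such $j$-term already contributes $\sum_{i \in \mathcal{E}} m_{i}$) closes the verification.
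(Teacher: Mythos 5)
Your primary route is essentially the paper's: the paper assigns $\gamma_{i,j}=\max\{m_i,p_j\}$, a specific ``large enough'' choice that makes each inner $\min$ in \eqref{eq:x27} collapse to $p_j$, and your generic uniformly-large choice achieves the identical effect. Your worry about an implicit constraint $\gamma_{i,j}\le\min\{m_i,p_j\}$ is moot here: LP1 as stated carries only \eqref{eq:x26} and \eqref{eq:x27} as constraints, and the paper's own assignment $\gamma_{i,j}=\max\{m_i,p_j\}$ already exceeds $\min\{m_i,p_j\}$ whenever $m_i\ne p_j$, confirming that LP1 is a relaxation (it only needs to give a lower bound on $R$). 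Your fallback construction $\gamma_{i,j}=\min\{m_i,p_j\}$ does also work and is a genuinely different route, but the case analysis for \eqref{eq:x27} as you sketched it has a gap: when $p_j$ strictly dominates every $m_i$ in $\mathcal{E}$, the inner sum equals $\sum_{i\in\mathcal{E}}m_i$ and the outer term is $\min\{p_j,\sum_{i\in\mathcal{E}}m_i\}$; this equals $\sum_{i\in\mathcal{E}}m_i$ (closing the verification with a single $j$) only if $p_j$ also exceeds the whole sum, and otherwise it is again $p_j$, so you cannot stop at one term. The tidy fix is to show $\sum_{i\in\mathcal{E}}\min\{m_i,p_j\}\ge\min\{p_j,\sum_{i\in\mathcal{E}}m_i\}$ for every $j\in\bar{\mathcal{E}}$ (which holds: if $p_j\ge\sum_{i\in\mathcal{E}}m_i$ then each $m_i\le p_j$ and the left side is $\sum_{i\in\mathcal{E}}m_i$; otherwise either some $m_{i_0}\ge p_j$ forces the left side $\ge p_j$, or all $m_i<p_j$ makes the left side $\sum_{i\in\mathcal{E}}m_i>p_j$), then sum over $j\in\bar{\mathcal{E}}$ and apply \eqref{eq:x30}.
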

\begin{proof}
It is obvious that all minimizers of Linear Programming~\ref{lp:LP1} 
satisfy the constraint in Linear Programming~\ref{lp:LP2}.
On the contrary, given a minimizer $\mathbf{p}$ of Linear Programming~\ref{lp:LP2}, we can assign $\gamma_{i,j}=\max\{m_{i},p_{j}\}$ 
to satisfy the constraints in Linear Programming~\ref{lp:LP1}. Thus, Linear Programming~\ref{lp:LP1} and Linear Programming~\ref{lp:LP2}
are equivalent.
\end{proof}

{\bf Remark.}
Note that Linear Programming~\ref{lp:LP2}, which was first given in \cite{tosato2014}, is not related to the update bandwidth $\gamma$ of an irregular array code, while the proposed setup in Linear Programming~\ref{lp:LP1} is.
Thus, the latter setup can be used to determine an 
irregular array code of update-bandwidth efficiency 
 by replacing the object function $R$ with update bandwidth $\gamma$. However, for the minimization of code redundancy, the two linear programming settings are equivalent as confirmed in Lemma~\ref{le:x4}.

To solve Linear Programming~\ref{lp:LP2}, Tosato and Sandell \cite{tosato2014} introduced 
 a water level parameter $\mu$, defined as
\begin{equation}\label{eq:x29}
\mu=\max\left\{m_{n-k}, \left\lceil \frac{B}{k} \right\rceil \right\},
\end{equation}
where $B \triangleq \sum_{i \in [n]}{m_{i}}$ is the total number of data symbols,
and by following the assumption
in \eqref{eq:mi}, $m_{n-k}$ is the $(n-k)$-th largest element in vector $\mathbf{m}$.
It was shown in \cite{tosato2014} that the minimum code redundancy equals
\begin{equation}\label{eq:39}
R_{\min}=\sum_{i=1}^{n-k}([\mu-m_i]_{+}+m_i),
\end{equation} 
which can only be achieved by those $\mathbf{p}$'s satisfying
\begin{equation}\label{eq:47}
\begin{cases}
p_i=[\mu-m_i]_{+}\quad\text{for } 1\leq i \leq n-k,\\
p_i\leq[\mu-m_i]_+\quad\text{for }n-k<i\leq n,\\
\sum_{i=n-k+1}^{n}{p_i}=\sum_{i=1}^{n-k}{m_i},
\end{cases}
\end{equation}
where $[x]_{+}\triangleq\max\{0,x\}$. 
The class of $(n,k,\mathbf{m})$ irregular array codes conforming to \eqref{eq:47} is called irregular MDS array codes~\cite{tosato2014}. 
In particular, when $k \mid B$ and $m_1\leq {B}/{k}$, \eqref{eq:39} and \eqref{eq:47}  
can be respectively reduced to 
\begin{equation}\label{eq:37}
R_{\min}=\frac{(n-k)}{k}B,
\end{equation}
and
\begin{equation}\label{eq:37a}
p_{i}=\frac{B}{k}-m_{i}\qquad\forall i\in[n].
\end{equation}

\subsection{Minimization of update bandwidth}

We now turn to the determination of the minimum update bandwidth.
As similar to Linear Programming \ref{lp:LP1}, a lower bound 
for the update bandwidth of an $(n,k,\mathbf{m})$ irregular array code can be obtained using the linear programming below.

\begin{lp}\label{lp:LP4}
To minimize $\gamma=\frac{1}{n}\sum_{i=1}^{n}{\sum_{j\in [n] \setminus \{i\}}{\gamma_{i,j}}}$, subject to \eqref{eq:x26} and~\eqref{eq:x27}
among all $\mathcal{E} \subset [n]$ with $|\mathcal{E}|=n-k$.
\end{lp}
Since $\mathbf{p}$ is not used in the above object function, 
a choice of $\{\gamma_{i,j}\}_{i\neq j\in[n]}$ is feasible for the minimization of $\gamma$ 
as long as there is a corresponding $\mathbf{p}$ that validates both 
\eqref{eq:x26} and~\eqref{eq:x27}.
A valid selection of such $\mathbf{p}$ for given $\{\gamma_{i,j}\}_{i\neq j\in[n]}$
is to set $p_j=\sum_{i \in [n]}{\gamma_{i,j}}$, which reduces
\eqref{eq:x27} to a consequence of \eqref{eq:x26}, i.e., 
\begin{equation}\label{eq:x36}
\sum_{i \in \mathcal{E}}{m_{i}}\leq \sum_{j \in \bar{\mathcal{E}}}\sum_{i \in \mathcal{E}}{\gamma_{i,j}}.
\end{equation}
As a result, by following an analogous proof to that used in Lemma~\ref{le:x4},
Linear Programming \ref{lp:LP4} can also be  
solved through the following equivalent setup.

\begin{lp}\label{lp:LP4'}
To minimize $\gamma=\frac{1}{n}\sum_{i=1}^{n}{\sum_{j\in [n] \setminus \{i\}}{\gamma_{i,j}}}$ subject to \eqref{eq:x26} among all $\mathcal{E} \subset [n]$ with $|\mathcal{E}|=n-k$.
\end{lp}

The solution of Linear Programming \ref{lp:LP4'} is then given 
in the following theorem.

\begin{theorem}(Minimum update bandwidth)\label{the:2}
The minimum update bandwidth determined through Linear Programming \ref{lp:LP4'} is given by 
\begin{equation}\label{gamma_min}
\gamma_{\min} = \frac{B}{n}+ \frac{(n-k-1)}{n}\sum_{i \in [n]}{\left\lceil \frac{m_{i}}{k} \right\rceil}.
\end{equation}
Under $k<n-1$, the minimum update bandwidth can only be achieved by the assignment that satisfies 
for every $i\in[n]$, 
\begin{equation}\label{eq:xxx''}
\sum_{u\in[w_i]}\gamma_{i, j_u(i)}=w_i\left\lfloor \frac{m_{i}}{k} \right\rfloor,
\text{ and }\gamma_{i, j_u(i)}=
\left\lceil \frac{m_{i}}{k} \right\rceil\text{for }u\in[n-1]\setminus
[w_i],
\end{equation}
where $w_i\triangleq k\left\lceil\frac{m_i}{k}\right\rceil-m_i<k$, 
and for notational convenience, we let 
the indices $j_{1}(i), j_{2}(i), \ldots, j_{n-1}(i)$ be a permutation of $[n]\setminus\{i\}$ such that
\begin{equation}
\label{eq:sorting}	
0 \leq \gamma_{i, j_{1}(i)} \leq \dots \leq \gamma_{i, j_{n-1}(i)}\quad\mbox{for }i \in [n].
\end{equation}
When $k=n-1$, any $\{\gamma_{i,j}\}_{i\neq j\in[n]}$ that achieves $\gamma_{\min}$ must satisfy 
\begin{equation}\label{eq:x36'}
\sum_{j \in [n] \setminus \{i\}}{\gamma_{i,j}} = m_{i} \quad \forall  i \in [n]. 
\end{equation}

\end{theorem}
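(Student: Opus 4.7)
The plan is to exploit the fact that the constraint~\eqref{eq:x26} in Linear Programming~\ref{lp:LP4'} decouples across the node index~$i$. For a fixed $i$, as $\mathcal{E}$ ranges over $(n-k)$-subsets containing $i$, the complement $\bar{\mathcal{E}}$ ranges over all $k$-subsets of $[n]\setminus\{i\}$, so among all these inequalities the tightest one picks the $k$ smallest values of $\gamma_{i,j}$ over $j\in[n]\setminus\{i\}$, which by the convention in~\eqref{eq:sorting} reads $\sum_{u=1}^{k}\gamma_{i,j_u(i)}\ge m_i$. This single inequality implies every constraint in~\eqref{eq:x26} involving $i$. Moreover, the objective $n\gamma=\sum_{i}\sum_{u=1}^{n-1}\gamma_{i,j_u(i)}$ splits into $n$ independent contributions $S_i\triangleq\sum_{u=1}^{n-1}\gamma_{i,j_u(i)}$, reducing Linear Programming~\ref{lp:LP4'} to $n$ independent sub-problems, one per node.

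For each sub-problem with $k<n-1$, I would first prove the lower bound $S_i\ge m_i+(n-k-1)\lceil m_i/k\rceil$. Since $\gamma_{i,j_k(i)}$ is the largest of the first $k$ sorted values, $k\,\gamma_{i,j_k(i)}\ge\sum_{u=1}^{k}\gamma_{i,j_u(i)}\ge m_i$, and integrality of the ranks (Theorem~\ref{th:0}) forces $\gamma_{i,j_k(i)}\ge\lceil m_i/k\rceil$. The sorting in~\eqref{eq:sorting} then gives $\gamma_{i,j_u(i)}\ge\lceil m_i/k\rceil$ for every $u\ge k$, so $\sum_{u=k+1}^{n-1}\gamma_{i,j_u(i)}\ge(n-k-1)\lceil m_i/k\rceil$; combined with $\sum_{u=1}^{k}\gamma_{i,j_u(i)}\ge m_i$ this yields the bound. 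Achievability is exhibited by the assignment in~\eqref{eq:xxx''}: setting $\gamma_{i,j_u(i)}=\lfloor m_i/k\rfloor$ for $u\in[w_i]$ and $\gamma_{i,j_u(i)}=\lceil m_i/k\rceil$ otherwise, the identity $m_i=k\lceil m_i/k\rceil-w_i$ makes the first $k$ entries sum exactly to $m_i$, and $S_i$ evaluates to $m_i+(n-k-1)\lceil m_i/k\rceil$ as required.

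Uniqueness of the minimizer structure is then argued by tracing when the above inequalities are tight. Equality in the bound for $S_i$ forces $\sum_{u=1}^{k}\gamma_{i,j_u(i)}=m_i$ and $\gamma_{i,j_u(i)}=\lceil m_i/k\rceil$ for every $u\in\{k,k+1,\dots,n-1\}$. Writing $b=\lceil m_i/k\rceil$ and letting $s$ be the number of indices $u\in[k-1]$ with $\gamma_{i,j_u(i)}<b$, the inequality $\sum_{u=1}^{k-1}\gamma_{i,j_u(i)}\le(k-1-s)b+s(b-1)$ together with the required value $(k-1)b-w_i$ forces $s\le w_i$, and the sorting in~\eqref{eq:sorting} then implies $\gamma_{i,j_u(i)}=b$ for every $u>w_i$; subtracting the known tail gives $\sum_{u=1}^{w_i}\gamma_{i,j_u(i)}=m_i-(k-w_i)b=w_i(b-1)=w_i\lfloor m_i/k\rfloor$, matching~\eqref{eq:xxx''}. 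The case $k=n-1$ is treated separately: the per-node objective coincides with the single constraint $\sum_{u=1}^{n-1}\gamma_{i,j_u(i)}\ge m_i$, so its minimum is exactly $m_i$, forcing~\eqref{eq:x36'} and giving $\gamma_{\min}=B/n$, in agreement with~\eqref{gamma_min} since the coefficient $(n-k-1)/n$ vanishes.

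Summing the per-node bounds through $n\gamma_{\min}=\sum_i S_i$ yields the closed form $\gamma_{\min}=B/n+\frac{n-k-1}{n}\sum_{i\in[n]}\lceil m_i/k\rceil$. The main obstacle I anticipate is the uniqueness step for $k<n-1$: the lower bound is short, but forcing the exact structure of~\eqref{eq:xxx''} requires combining the integrality of $\gamma_{i,j}$, the sorting convention in~\eqref{eq:sorting}, and the counting argument above in tandem, since both the lower bound $\gamma_{i,j_k(i)}\ge\lceil m_i/k\rceil$ and the cap $\gamma_{i,j_u(i)}\le\lceil m_i/k\rceil$ for $u<k$ that sorting imposes must be used simultaneously.
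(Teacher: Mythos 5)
Your argument is correct and covers all three parts of the theorem (lower bound, achievability, uniqueness for both $k<n-1$ and $k=n-1$). The overall skeleton matches the paper's: both exploit that \eqref{eq:x26} decouples across the source index $i$, both extract the key inequalities $\sum_{u\in[k]}\gamma_{i,j_u(i)}\geq m_i$ and $\gamma_{i,j_k(i)}\geq\lceil m_i/k\rceil$ from sorting and integrality, and both verify \eqref{eq:xxx''} achieves the bound. Where you genuinely diverge is the uniqueness step for $k<n-1$: the paper runs a four-case analysis on the value of $\gamma'_{i',j_{w_{i'}+1}(i')}$ relative to $\lceil m_{i'}/k\rceil$ and on possible ``jumps'' in the sorted tail, whereas you reduce everything to a single counting inequality on $s$, the number of entries in $[k-1]$ strictly below $b=\lceil m_i/k\rceil$, deducing $s\leq w_i$ and then pinning down the whole tail in one stroke via the sorting convention. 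Your route is more compact and arguably more transparent, because it makes explicit that the cap $\gamma_{i,j_u(i)}\leq b$ for $u\leq k$ (from sorting and $\gamma_{i,j_k(i)}=b$), combined with the exact budget $\sum_{u\in[k-1]}\gamma_{i,j_u(i)}=(k-1)b-w_i$, forces the structure; the paper's case split accomplishes the same thing but with more bookkeeping. One stylistic difference worth flagging: you frame the problem as $n$ independent linear programs with objective $S_i$ and a single binding constraint $\sum_{u\in[k]}\gamma_{i,j_u(i)}\geq m_i$, while the paper keeps the argument inside a single chain of inequalities; your per-node framing also makes the $k=n-1$ degeneracy (where the tail is empty and only \eqref{eq:x36'} survives) immediate rather than requiring a separate contradiction argument as in the paper's Step 4.
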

\begin{proof}
The proof is divided into four steps. First, we show all choices of $\{\gamma_{i,j}\}_{i\neq j\in[n]}$
satisfying \eqref{eq:x26} yield an update bandwidth no less than the $\gamma_{\min}$ given in \eqref{gamma_min}. Second, we verify
 \eqref{eq:xxx''} can achieve $\gamma_{\min}$.
 Third, we prove that \eqref{eq:xxx''}
  is the only assignment that can achieve $\gamma_{\min}$ under $k<n-1$. Last, we complete the proof by considering separately the situation of $k=n-1$. 

\begin{description}
\item[{\it Step 1.}]\ \ Fix a set of $\{\gamma_{i,j}\}_{i\neq j\in[n]}$ satisfying \eqref{eq:x26}. 
Since \eqref{eq:x26} holds for arbitrary $\mathcal{E}$, we can let
$\bar{\mathcal{E}}=\{j_1(i),\ldots,j_k(i)\}$ and obtain
\begin{equation}\label{eq:27}
\sum_{u\in[k]} \gamma_{i,j_u(i)} \geq m_{i}.
\end{equation}
Noting that $\{\gamma_{i,j_u(i)}\}_{u\in[n-1]}$ is in ascending order (cf.~\eqref{eq:sorting}), 
and that $\gamma_{i,j}$ is a non-negative integer, we obtain from \eqref{eq:27} that
\begin{equation}\label{eq:28''}
\gamma_{i, j_k(i)} \geq \left\lceil \frac{m_{i}}{k} \right\rceil.
\end{equation}
We continue to derive
\begin{equation}\label{eq:57'}
\begin{aligned}
&\sum_{j\in [n]\setminus\{i\}}{\gamma_{i,j}}= \sum_{u\in[k]}{\gamma_{i, j_u(i)}}+ \sum_{u\in[n-1]\setminus[k]}{\gamma_{i, j_u(i)}} \geq \sum_{u\in[k]}{\gamma_{i, 
j_u(i)}}+ (n-k-1)\gamma_{i, j_k(i)}.
\end{aligned}
\end{equation}
Combining \eqref{eq:27}, \eqref{eq:28''} and \eqref{eq:57'} gives 
\begin{equation}\label{eq:57''}
\begin{aligned}
&\sum_{j\in [n]\setminus\{i\}}{\gamma_{i,j}} \geq m_i+(n-k-1)\left\lceil \frac{m_{i}}{k} \right\rceil,
\end{aligned}
\end{equation}
which implies 
\begin{equation}
\begin{aligned}
&\gamma= \frac{1}{n}\sum_{i\in [n]}\sum_{j\in [n]\setminus\{i\}}{\gamma_{i,j}}\geq \frac{B}{n}+ \frac{(n-k-1)}{n}\sum_{i \in [n]}{\left\lceil \frac{m_{i}}{k}  \right\rceil}  = \gamma_{\min}.
\end{aligned}
\end{equation}
\item[{\it Step 2.}]\ \ Next, we confirm \eqref{eq:xxx''} is a valid choice
of $\{\gamma_{i,j}\}_{i\neq j \in [n]}$ that achieves 
$\gamma_{\min}$. 
The validity of \eqref{eq:x26} can be confirmed by 
\begin{equation}\label{wirange}
w_i=k\left\lceil\frac{m_i}{k}\right\rceil-m_i<k\left(\frac{m_i}{k}+1\right)-m_i=k
\end{equation}
and
\begin{IEEEeqnarray}{rCl}
\sum_{j \in \bar{\mathcal{E}}}{\gamma_{i,j}}\geq \sum_{i\in[k]}{\gamma_{i, j_u(i)}}
&=&w_{i}\left\lfloor \frac{m_{i}}{k} \right\rfloor+(k-w_{i})\left\lceil \frac{m_{i}}{k} \right\rceil\\
&=&
k\left\lceil \frac{m_i}{k} \right\rceil-\left(k\left\lceil \frac{m_i}{k} \right\rceil - m_i\right)\left(\left\lceil \frac{m_i}{k} \right\rceil-\left\lfloor \frac{m_i}{k} \right\rfloor\right)\label{eq:52}\\
&=&\begin{cases}
k\left\lceil \frac{m_i}{k} \right\rceil-0,&k \mid m_{i}\\
k\left\lceil \frac{m_i}{k} \right\rceil-\left(k\left\lceil \frac{m_i}{k} \right\rceil - m_i\right),&k\nmid m_i
\end{cases}\\
&=&m_i. \label{eq:x39'}
\end{IEEEeqnarray}
Hence, we can derive based on \eqref{eq:sorting} and \eqref{eq:x39'} that 
\begin{equation}
\begin{aligned}
&\sum_{j\in [n]\setminus\{i\}}{\gamma_{i,j}}=\sum_{u\in[k]}{\gamma_{i, j_u(i)}}+ \sum_{u\in[n-1]\setminus[k]}{\gamma_{i, j_u(i)}} = m_{i}+(n-k-1)\left\lceil \frac{m_{i}}{k} \right\rceil,
\end{aligned}
\end{equation}
which immediately gives
\begin{equation}
\begin{aligned}
&\gamma= \frac{1}{n}\sum_{i\in [n]}\sum_{j\in [n]\setminus\{i\}}{\gamma_{i,j}} = \frac{B}{n}+ \frac{(n-k-1)}{n}\sum_{i\in [n]}{\left\lceil \frac{m_{i}}{k} \right\rceil} = \gamma_{\min}.
\end{aligned}
\end{equation}

\item[{\it Step 3}.]\ \ 
It remains to show by contradiction that no other assignment of $\{\gamma_{i,j}\}_{i\neq j\in[n]}$ can achieve $\gamma_{\min}$. The task will be done under $k<n-1$ in this step. The situation of $k=n-1$ will be separately considered in the next step.

\hspace*{6mm}Suppose $\{\gamma'_{i,j}\}_{i \neq j \in [n]}$ also achieves $\gamma_{\min}$. We then differentiate among four cases.
\begin{description}
\item[{\it Case 1:}]\quad If there is $i'\in [n]$ such that $\gamma'_{i', j_{w_{i'}+1}(i')} > \left\lceil \frac{m_{i'}}{k} \right\rceil$, then we obtain from \eqref{eq:sorting} and \eqref{wirange} that $\gamma'_{i', j_{k}(i')} > \left\lceil \frac{m_{i'}}{k} \right\rceil$,
which together with \eqref{eq:27} implies
\begin{equation}\label{eq:55o}
\sum_{j \in [n] \setminus \{i'\}}{\gamma'_{i',j}}= \sum_{u \in [k]}{\gamma'_{i',j_u(i')}} + \sum_{u \in [n-1]\setminus [k]}{\gamma'_{i',j_u(i')}} > m_{i'} + (n-k-1)
\left\lceil\frac{m_{i'}}{k}\right\rceil.
\end{equation}
Because $\{\gamma'_{i,j}\}_{i \neq j \in [n]}$ fulfills \eqref{eq:x26}
and hence validates \eqref{eq:57''}, we have 
\begin{IEEEeqnarray}{rCl}\label{eq:x49}
\gamma&=& \frac{1}{n}\sum_{i\in [n]}\sum_{j\in [n]\setminus \{i\}}{\gamma'_{i,j}}\\
&=& \frac{1}{n}\left(\sum_{j \in [n] \setminus \{i'\}}{\gamma'_{i',j}}+\sum_{i\in [n]\setminus \{i'\}}\sum_{j\in [n]\setminus \{i\}}{\gamma'_{i,j}}\right) \\
&>&\frac{1}{n}\bigg(\left(m_{i'}+(n-k-1)\left\lceil \frac{m_{i'}}{k} \right\rceil\right)+\sum_{i\in [n]\setminus \{i'\}}
\left(m_i+(n-k-1)\left\lceil \frac{m_{i}}{k} \right\rceil\bigg)
\right)\label{eq:contra}\\
&=& \gamma_{\min},
\end{IEEEeqnarray}
contradicting to the assumption of $\{\gamma'_{i,j}\}_{i \neq j \in [n]}$ achieving $\gamma_{\min}$.

\item[{\it Case 2:}]\quad If there is $i' \in [n]$ such that $\gamma'_{i', j_{w_{i'}+1}({i'})} < \left\lceil \frac{m_{i'}}{k} \right\rceil$ and $w_{i'}+1=k$, then 
 we can infer from \eqref{eq:sorting} that
\begin{equation}
\gamma'_{i', j_{w_{i'}}({i'})} \leq \left\lfloor \frac{m_{i'}}{k} \right\rfloor,
\end{equation}
and hence 
\begin{IEEEeqnarray}{rCl}
\sum_{u\in[k]}\gamma'_{i', j_u(i')}&=&
\sum_{u\in[w_{i'}]}\gamma'_{i', j_u(i')}+\gamma'_{i', j_{w_{i'}+1}(i')}
+\underbrace{\sum_{u\in[k]\setminus[w_{i'}+1]}\gamma'_{i', j_u(i')}}_{=0}\\
&<&w_{i'}\left\lfloor \frac{m_{i'}}{k} \right\rfloor
+\left\lceil \frac{m_{i'}}{k} \right\rceil+\underbrace{(k-w_{i'}-1)\left\lceil \frac{m_{i'}}{k}\right\rceil}_{=0}=m_{i'}\label{eq:case2-1}
\end{IEEEeqnarray}
where the first strict inequality in \eqref{eq:case2-1} is due to $\gamma'_{i', j_{w_{i'}+1}({i'})} < \left\lceil \frac{m_{i'}}{k} \right\rceil$, and the  last equality follows from 
a similar derivation to \eqref{eq:x39'}. The inequality \eqref{eq:case2-1}  then contradicts to \eqref{eq:27}.

\item[{\it Case 3:}]\quad If there is $i' \in [n]$ such that $\gamma'_{i', j_{w_{i'}+1}({i'})} < \left\lceil \frac{m_{i'}}{k} \right\rceil$ and $w_{i'}+1<k$, then we must have 
$\gamma'_{i', j_{k}({i'})} >
\left\lceil \frac{m_{i'}}{k} \right\rceil$.
This is because in case $\gamma'_{i', j_{k}({i'})} \leq 
\left\lceil \frac{m_{i'}}{k} \right\rceil$ under $\gamma'_{i', j_{w_{i'}+1}({i'})} < \left\lceil \frac{m_{i'}}{k} \right\rceil$ and $w_{i'}+1<k$, we can obtain from \eqref{eq:sorting} that
\begin{equation}
\gamma'_{i', j_{w_{i'}}({i'})} \leq \left\lfloor \frac{m_{i'}}{k} \right\rfloor,
\end{equation}
and hence a similar derivation to \eqref{eq:case2-1} gives
\begin{IEEEeqnarray}{rCl}
\sum_{u\in[k]}\gamma'_{i', j_u(i')}&=&
\sum_{u\in[w_{i'}]}\gamma'_{i', j_u(i')}+\gamma'_{i', j_{w_{i'}+1}(i')}
+\sum_{u\in[k]\setminus[w_{i'}+1]}\gamma'_{i', j_u(i')}\\
&<&w_{i'}\left\lfloor \frac{m_{i'}}{k} \right\rfloor
+\left\lceil \frac{m_{i'}}{k} \right\rceil+(k-w_{i'}-1)\left\lceil \frac{m_{i'}}{k}\right\rceil=m_{i'}.\label{eq:case2}
\end{IEEEeqnarray}
The inequality \eqref{eq:case2}  then contradicts to \eqref{eq:27},
and therefore $\gamma'_{i', j_{k}({i'})} >
\left\lceil \frac{m_{i'}}{k} \right\rceil$.
We continue to derive based on \eqref{eq:27} that 
\begin{equation}
\sum_{j \in [n] \setminus \{i'\}}{\gamma'_{i',j}}= \sum_{u \in [k]}{\gamma'_{i',j_u(i')}} + \sum_{u \in [n-1]\setminus [k]}{\gamma'_{i',j_u(i')}} > m_{i'} + (n-k-1)
\left\lceil\frac{m_{i'}}{k}\right\rceil,
\end{equation}
based on which the same contradiction as \eqref{eq:contra} can be resulted.

\item[{\it Case 4:}]\quad The previous three cases indicate that 
$\gamma'_{i', j_{w_{i'}+1}({i'})} = \left\lceil\frac{m_{i'}}{k}\right\rceil$ for all $i'\in[n]$.
Now if there is $i'\in[n]$ and $w_{i'}< u'\leq n-1$ such that $\gamma'_{i', j_{u'}({i'})}<\gamma'_{i', j_{u'+1}(i')}$, then we again use \eqref{eq:27} to obtain
\begin{IEEEeqnarray}{rCl}
\sum_{j \in [n] \setminus \{i'\}}{\gamma'_{i',j}}&=& \sum_{u\in [k]}{\gamma'_{i',j_u(i')}} + \sum_{u \in [n-1]\setminus [k]}{\gamma'_{i',j_u(i')}}\\
& >& m_{i'} + (n-k-1)\left\lceil\frac{m_{i'}}{k}\right\rceil,
\end{IEEEeqnarray}
based on which the same contradiction as \eqref{eq:contra} can, again, be resulted.
\end{description}
The above four cases conclude that $\gamma'_{i, j_{u}({i})}=\left\lceil\frac{m_i}k\right\rceil$ for $u\in[n-1]\setminus[w_i]$ and $i\in[n]$.
Finally, \eqref{eq:57''} implies
\begin{equation}\label{eq:final}
\begin{aligned}
&\sum_{j\in [n]\setminus\{i\}}{\gamma'_{i,j}} 
= \sum_{u\in[w_i]}{\gamma'_{i,j_u(i)}}+(n-w_i-1)\left\lceil\frac{m_i}{k}\right\rceil \geq m_i+(n-k-1)\left\lceil \frac{m_{i}}{k} \right\rceil.
\end{aligned}
\end{equation}
Since the sum of the left-hand-side of \eqref{eq:final} 
is equal to the sum of the right-hand-side of \eqref{eq:final}, which is exactly $\gamma_{\min}$, 
we must have
\begin{equation}
\begin{aligned}
&\sum_{u\in[w_i]}{\gamma'_{i,j_u(i)}}+(n-w_i-1)\left\lceil\frac{m_i}{k}\right\rceil 
=m_i+(n-k-1)\left\lceil \frac{m_{i}}{k} \right\rceil,
\end{aligned}
\end{equation}
which in turn gives
\begin{IEEEeqnarray}{rCl}
\sum_{u\in[w_i]}{\gamma'_{i,j_u(i)}}&=&
m_i+(w_i-k)\left\lceil \frac{m_{i}}{k} \right\rceil=w_i\left\lfloor \frac{m_{i}}{k}\right\rfloor,
\end{IEEEeqnarray}
where the last equality can be confirmed similarly as \eqref{eq:x39'}.

\item[{\it Step 4.}]\ \  Last, we prove \eqref{eq:x36'}. Note that the proofs in Steps 1 and 2 remain valid under $k=n-1$, but 
some derivations in Step 3, e.g., \eqref{eq:55o}, may not be applied when $k=n-1$.\footnote{Note that under $k=n-1$, \eqref{eq:xxx''} is no longer the only assignment that achieves $\gamma_{\min}$. For example, 
for an $(3,2,\mathbf{m}=[5\ 5\ 5]^{\Tr})$ irregular array code, the assignment 
of \eqref{eq:xxx''} gives $\gamma_{i, j_1(i)}=2$ and $\gamma_{i, j_2(i)}=3$ for $i\in[3]$, but 
\begin{equation}
\gamma_{i,j}=\begin{cases}
5,&(i,j)\in\{(1,3),(2,3),(3,2)\}\\
0,&(i,j)\in\{(1,2),(2,1),(3,1)
\end{cases}
\end{equation}
can also achieve $\gamma_{\min}=5$.
This justifies our separate consideration of the case of $k=n-1$.
}
In fact, when $k=n-1$, a larger class of assignments on $\{\gamma_{i,j}\}_{i\neq j\in[n]}$ can achieve $\gamma_{\min}$. We show \eqref{eq:x36'} by contradiction.
Suppose $\{\gamma'_{i,j}\}_{i \neq j \in [n]}$ achieves $\gamma_{\min}$ but satisfies $\sum_{j \in [n] \setminus \{i'\}}{\gamma_{i',j}'} > m_{i'}$ for some $i' \in [n]$.
Then, 
\begin{IEEEeqnarray}{rCl}
\gamma&=&\frac{1}{n}\left(\sum_{i \in [n] \setminus \{i'\}}\sum_{j \in [n] \setminus \{i\}}{\gamma_{i,j}'}+\sum_{j \in [n] \setminus \{i'\}}{\gamma_{i',j}'}\right)\\
&>&\frac{1}{n}\left(\sum_{i \in [n] \setminus \{i'\}}m_i+m_{i'}\right)\\
&=& \frac{B}{n}=\gamma_{\min},
\end{IEEEeqnarray}
which leads to a contradiction. 
\end{description}
\end{proof} 
 
\subsection{Determination of the smallest code redundancy
subject to $\gamma=\gamma_{\min}$}\label{sec:4c}

In Theorem \ref{the:2}, the class of optimal $\{\gamma_{i,j}\}_{i \neq j \in [n]}$ that achieve $\gamma_{\min}$ is also determined. In particular, 
when $k<n-1$ and $k\mid m_i$ for every $i$,
we have that 
\begin{equation}\label{eq:sl1}
\gamma_{i,j}=\frac{m_i}k\quad\forall i\neq j\in[n]
\end{equation}
uniquely achieves $\gamma_{\min}$.
This facilitates our finding the smallest code redundancy attainable subject to $\gamma=\gamma_{\min}$ as formulated in Linear Programming \ref{lp:LP5} below.

\begin{lp}\label{lp:LP5}
To minimize $R=\sum_{i=1}^{n}{p_i}$ subject to~\eqref{eq:x27} and~\eqref{eq:xxx''} among all $\mathcal{E} \subset [n]$ with $|\mathcal{E}|=n-k$,
provided $1\leq k < n-1$ and $k \mid m_{i}$ for all $i \in [n]$.
\end{lp}

\begin{theorem}\label{th:x4}
The solution of Linear Programming~\ref{lp:LP5} is given by 
\begin{equation}\label{eq:Rsma}
R_{\sma} \triangleq \frac{(n-1)}{k}\sum_{i=1}^{n-k}{m_{i}}+\frac{(n-k)}{k}m_{n-k+1},
\end{equation}
where by following the assumption
in \eqref{eq:mi}, $m_{i}$ is the $i$-th largest element in vector $\mathbf{m}$.
The smallest code redundancy subject to $\gamma=\gamma_{\min}$
is uniquely achieved by
\begin{equation}\label{eq:x54}
p_{j}= \begin{cases}
\frac 1{k}{\sum_{i=1}^{n-k+1}{m_{i}}-m_j} &\text{for } 1\leq j \leq n-k,\\
\frac 1k\sum_{i=1}^{n-k}{m_{i}} &\text{for } n-k < j \leq n.
\end{cases}
\end{equation}
\end{theorem}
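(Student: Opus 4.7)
The plan is to first fix the $\gamma_{i,j}$ values using Theorem~\ref{the:2}, then collapse the constraints in \eqref{eq:x27} into a simple family of lower bounds on each $p_j$, and finally choose each $p_j$ to meet its tightest bound. By Theorem~\ref{the:2}, under $k<n-1$ and $k\mid m_i$ we have $\lceil m_i/k\rceil=m_i/k$ and $w_i=0$, so \eqref{eq:xxx''} forces $\gamma_{i,j}=m_i/k$ uniquely for all $i\neq j\in[n]$ (this is \eqref{eq:sl1}). Linear Programming~\ref{lp:LP5} therefore reduces to minimizing $R=\sum_j p_j$ subject only to \eqref{eq:x27} with these fixed $\gamma_{i,j}$.

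Substituting $\gamma_{i,j}=m_i/k$ into \eqref{eq:x27} and writing $s_\mathcal{E}\triangleq\frac{1}{k}\sum_{i\in\mathcal{E}}m_i$, the quantity $\sum_{i\in\mathcal{E}}\gamma_{i,j}=s_\mathcal{E}$ becomes independent of $j$, so \eqref{eq:x27} reads $ks_\mathcal{E}\leq\sum_{j\in\bar{\mathcal{E}}}\min\{p_j,s_\mathcal{E}\}$. Since $|\bar{\mathcal{E}}|=k$ and each of the $k$ summands is at most $s_\mathcal{E}$, equality must hold in every summand, giving $p_j\geq s_\mathcal{E}$ for all $j\in\bar{\mathcal{E}}$. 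Hence \eqref{eq:x27} is equivalent to the family
\begin{equation*}
p_j\geq\frac{1}{k}\sum_{i\in\mathcal{E}}m_i\quad\text{for every }\mathcal{E}\subset[n]\text{ with }|\mathcal{E}|=n-k\text{ and }j\notin\mathcal{E}.
\end{equation*}

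These constraints decouple across $j$, so the unique minimizer is obtained by setting each $p_j$ equal to the maximum right-hand side over admissible $\mathcal{E}$. By the ordering in \eqref{eq:mi}, the maximizing $\mathcal{E}$ is $\{1,\dots,n-k\}$ when $j>n-k$, and $\{1,\dots,n-k+1\}\setminus\{j\}$ when $j\leq n-k$; these choices give exactly the assignment in \eqref{eq:x54}. A direct arithmetic simplification of $\sum_j p_j$, splitting the sum into $j\leq n-k$ and $j>n-k$, then produces \eqref{eq:Rsma}.

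The main obstacle is the constraint-reduction step: one must recognize that once $k\mid m_i$ is imposed, $\sum_{i\in\mathcal{E}}\gamma_{i,j}$ stops depending on $j$, so the right-hand side of \eqref{eq:x27} is a sum of $k$ identical caps, and all $k$ of them must saturate. This observation makes the feasible region decompose into independent per-$j$ constraints; the remainder is a routine greedy selection of the maximal excluded-index sum followed by an arithmetic simplification.
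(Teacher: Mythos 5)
Your proposal is correct and follows essentially the same route as the paper's proof: fix $\gamma_{i,j}=m_i/k$ via Theorem~\ref{the:2}, observe that \eqref{eq:x27} forces $p_j\geq \frac{1}{k}\sum_{i\in\mathcal{E}}m_i$ for every admissible $\mathcal{E}$ and $j\in\bar{\mathcal{E}}$ (the paper establishes this by contradiction, you by a direct saturation argument on the $k$ equal caps, which is logically the same step), then take the maximum over $\mathcal{E}$ using the ordering \eqref{eq:mi} and sum. The uniqueness claim follows identically in both, so this is a faithful match to the paper's argument.
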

\begin{proof}
We first prove by contradiction that
\begin{equation}\label{eq:x57}
p_{j}\geq \sum_{i \in \mathcal{E}}{\gamma_{i,j}}=\frac 1k\sum_{i \in \mathcal{E}}{m_{i}} \qquad \forall \mathcal{E}\text{ and }\forall j \in \bar{\mathcal{E}}.
\end{equation}
Suppose there are $\mathcal{E} \subset [n]$ with $|\mathcal{E}| =n-k$ and $j' \in \bar{\mathcal{E}}$ such that
\begin{equation}
p_{j'} < \sum_{i \in \mathcal{E}}{\gamma_{i,j'}} = \frac 1k\sum_{i \in \mathcal{E}}{m_{i}}.
\end{equation}
Then,~\eqref{eq:x27} results in a contradiction as follows:
\begin{IEEEeqnarray}{rCl}
\sum_{i \in \mathcal{E}}{m_{i}} &\leq& \min\left\{p_{j'}, \sum_{i \in \mathcal{E}}{\gamma_{i,j'}}\right\} + \sum_{j \in \bar{\mathcal{E}} \setminus \{j'\}}{\min\left\{p_{j}, \sum_{i \in \mathcal{E}}{\gamma_{i,j}}\right\}} \\
&<& \sum_{j \in \bar{\mathcal{E}}}\sum_{i \in \mathcal{E}}{\gamma_{i,j}}\\
&=&\sum_{j \in \bar{\mathcal{E}}}\sum_{i \in \mathcal{E}}{\frac{m_i}k}
=\sum_{i \in \mathcal{E}}{{m_i}},\label{eq:78}
\end{IEEEeqnarray}
where $\eqref{eq:78}$ follows from \eqref{eq:sl1}.
Thus, \eqref{eq:x57} holds for arbitrary $\mathcal{E}\subset[n]\setminus\{j\}$.
As a result, we have
\begin{equation}\label{eq:x56}
p_{j}\geq \max_{\mathcal{E}\subset[n]\setminus\{j\}:|\mathcal{E}|=n-k}\frac 1k\sum_{i \in \mathcal{E}}{m_{i}}
=\begin{cases}
\frac 1{k}\left({\sum_{i=1}^{n-k+1}{m_{i}}-m_j}\right) &\text{for } 1\leq j \leq n-k,\\
\frac 1{k}{\sum_{i=1}^{n-k}{m_{i}}} &\text{for } n-k < j \leq n,
\end{cases}
\end{equation}
which implies
\begin{equation}\label{eq:x58}
R =\sum_{j=1}^{n}{p_j}\geq \frac{(n-1)}{k}\sum_{i=1}^{n-k}{m_{i}}+\frac{(n-k)}{k}m_{n-k+1} = R_{\sma}.
\end{equation}
Since any $\{p_j\}_{j\in[n]}$ that satisfies \eqref{eq:x56} with strict inequality
for some $j\in[n]$ cannot achieve $R_{\sma}$,
the smallest code redundancy subject to $\gamma=\gamma_{\min}$
is uniquely achieved by the one that fulfills \eqref{eq:x56} with equality.
\end{proof}

The contradiction proof 
in \eqref{eq:78} requires
$\sum_{j\in\bar{\mathcal{E}}}\gamma_{i,j}=m_i$, which 
is guaranteed by \eqref{eq:xxx''} when $k\mid m_i$ for all $i\in[n]$.
However, without $k\mid m_i$ for all $i\in[n]$, the 
$\sum_{j\in\bar{\mathcal{E}}}\gamma_{i,j}$
 in \eqref{eq:xxx''} may not achieve $m_i$
but generally lies between $m_i$ and $k\left\lceil\frac{m_i}k\right\rceil$.
Our preliminary study indicates that 
the general formula of $R_{\sma}$ for arbitrary $k<n-1$ and arbitrary $\mathbf{m}$ does not seem to have a simple expression but depends on the pattern of $\mathbf{w}=[w_1\ w_2\ \cdots w_n]^{\Tr}$. 
Theorem~\ref{th:xx5} only deals with $\mathbf{w}=[0\ 0\ \cdots\ 0]^{\Tr}$.
The establishment of the smallest code redundancy
for cases that allow $k \nmid m_{i}$ is left as a future research.

Surprisingly, in the particular case of $k=n-1$, we found 
$R_{\sma}=R_{\min}$  due to the fact that 
$\sum_{j\in\bar{\mathcal{E}}}\gamma_{i,j}=m_i$ is guaranteed by  
\eqref{eq:x36'}.

\begin{lp}\label{lp:LP6}
To minimize $R=\sum_{i=1}^{n}{p_i}$ subject to~\eqref{eq:x27} and~\eqref{eq:x36'} 
among all $\mathcal{E} \subset [n]$ with $|\mathcal{E}|=n-k$,
provided $k = n-1$.
\end{lp}
\begin{theorem}\label{th:xx5}
The solution of Linear Programming~\ref{lp:LP6} is given by the $R_{\min}$ in \eqref{eq:39}, which can only be achieved by those $\mathbf{p}$'s 
satisfying \eqref{eq:47}.
\end{theorem}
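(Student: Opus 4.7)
The plan is to show that, under $k=n-1$, the feasibility constraint on $\mathbf{p}$ in Linear Programming~\ref{lp:LP6} reduces exactly to the one in Linear Programming~\ref{lp:LP2}, after which the characterization of the minimum code redundancy in \eqref{eq:39}--\eqref{eq:47} established in \cite{tosato2014} can be directly invoked.

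First, since $n-k=1$, every $\mathcal{E}\subset[n]$ with $|\mathcal{E}|=n-k$ is a singleton $\{i\}$, so \eqref{eq:x27} becomes $m_i \leq \sum_{j\in[n]\setminus\{i\}}\min\{p_j,\gamma_{i,j}\}$ for every $i\in[n]$. On the other hand, \eqref{eq:x36'} forces $\sum_{j\neq i}\gamma_{i,j}=m_i$, while $\min\{p_j,\gamma_{i,j}\}\le\gamma_{i,j}$ by definition. Hence the above inequality can only hold with $\min\{p_j,\gamma_{i,j}\}=\gamma_{i,j}$ for every $j\neq i$, which is exactly $\gamma_{i,j}\le p_j$. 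Therefore the joint constraints \eqref{eq:x27} and \eqref{eq:x36'} are equivalent to requiring that, for each $i\in[n]$, there exist non-negative integers $\{\gamma_{i,j}\}_{j\neq i}$ satisfying $\sum_{j\neq i}\gamma_{i,j}=m_i$ and $\gamma_{i,j}\le p_j$ for every $j\neq i$.

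Second, by a standard greedy-distribution argument, such $\{\gamma_{i,j}\}_{j\neq i}$ exist if and only if $\sum_{j\neq i}p_j\ge m_i$. When $k=n-1$, this condition ranging over all $i\in[n]$ is precisely \eqref{eq:x30} specialized to $|\mathcal{E}|=1$, which already constitutes the entire feasibility region of Linear Programming~\ref{lp:LP2}. Since the objective $R=\sum_i p_i$ is the same for both programs, Linear Programming~\ref{lp:LP6} and Linear Programming~\ref{lp:LP2} have identical feasible sets on $\mathbf{p}$ and identical objectives, and the conclusion then follows verbatim from the Tosato--Sandell characterization \eqref{eq:39}--\eqref{eq:47}.

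The principal obstacle lies in the first step, where one must recognize that \eqref{eq:x36'} forces each of the bounding minima in \eqref{eq:x27} to be tight, collapsing the ostensibly stricter condition of Linear Programming~\ref{lp:LP6} down to the simpler sum-based constraint \eqref{eq:x30}; once this reduction is secured, the rest of the argument is essentially bookkeeping and a straightforward feasibility check.
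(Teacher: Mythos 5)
Your proof is correct and follows essentially the same route as the paper: you show that the constraints of Linear Programming~\ref{lp:LP6} and Linear Programming~\ref{lp:LP2} define the same feasible set of $\mathbf{p}$'s under $k=n-1$ and then invoke the Tosato--Sandell characterization \eqref{eq:39}--\eqref{eq:47}. Your first step spells out the tightness observation $\gamma_{i,j}\le p_j$ a bit more explicitly than the paper does, but the underlying equivalence argument and the converse greedy construction are the same.
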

\begin{proof}
It suffices to prove that Linear Programming 
\ref{lp:LP2} and Linear Programming \ref{lp:LP6} are equivalent under $n-k=1$.

We first note that under $n-k=1$, all feasible $\mathbf{p}$ and $\{\gamma_{i,j}\}_{i\neq j\in[n]}$
satisfying \eqref{eq:x27} and \eqref{eq:x36'}, i.e., 
\begin{equation}\label{eq:82}
m_i=\sum_{j \in [n] \setminus \{i\}}{\gamma_{i,j}} \leq \sum_{j \in [n]\setminus\{i\}}{\min\{p_{j}, \gamma_{i,j}\}}
\quad\forall i\in[n],
\end{equation}
must validate \eqref{eq:x30}, i.e., 
\begin{equation}\label{eq:83}
m_{i}\leq \sum_{j \in [n]\setminus\{i\}}{p_{j}}\quad\forall i\in[n].
\end{equation}
On the contrary, for every $\mathbf{p}$ that fulfills \eqref{eq:83}, we can 
always construct $\{\gamma_{i,j}\}_{i\neq j\in[n]}$ with $\gamma_{i,j}\leq p_j$
such that \eqref{eq:82} holds.
Thus, Linear Programming \ref{lp:LP2} 
is equivalent to Linear Programming \ref{lp:LP6}.
\end{proof}


\section{Explicit constructions of MUB and MR-MUB codes}\label{sec:code}



\subsection{MR-MUB and MUB codes}

Based on the previous section, we can now define two particular classes
of irregular array codes.

\begin{definition}\label{def:1}
A Minimal Update Bandwidth (MUB) code 
is an $(n,k,\mathbf{m})$ irregular array code
with update bandwidth equal to $\gamma_{\min}$.
\end{definition}

\begin{definition}\label{def:2}
A Minimum Redundancy and Minimum Update Bandwidth (MR-MUB) code
is an $(n,k,\mathbf{m})$ irregular array code,
of which the code redundancy and the update bandwidth
are equal to $R_{\min}$ and $\gamma_{\min}$, respectively.
\end{definition}

Note that the existence of MR-MUB codes for certain 
parameters $n$, $k$ and $\mathbf{m}$ is not guaranteed.
In certain cases, we can only have $R_{\sma}>R_{\min}$, i.e., the smallest code redundancy subject to $\gamma=\gamma_{\min}$ is strictly larger than the minimum code redundancy among all irregular array codes.
An example is given in Fig. \ref{fig:fig3},
where we can obtain from \eqref{eq:Rsma} that
the smallest code redundancy of $(4,2,\mathbf{m}=[4 \ 2 \ 2 \ 0]^{\Tr})$ irregular array codes
is equal to
\begin{equation}
R_{\sma} = \frac{3}{2}\sum_{i=1}^{2}{m_{i}}+\frac{2}{2}\,m_{3}
=\frac{3}{2}(4+2)+2=11,
\end{equation}
while the minimum code redundancy in \eqref{eq:39} is given by
\begin{equation}
R_{\min}=([4-4]_++4)+([4-2]_{+}+2)=8.
\end{equation} 
It can be verified that the code redundancy of the irregular array code
in Fig.~\ref{fig:fig3} achieves $p_1+p_2+p_3+p_4=11=R_{\sma}$.

To confirm that the code in Fig.~\ref{fig:fig3} is an MUB code,
we note that 
the update of the first node
has to send $\Delta x_{1,1}$ and $\Delta x_{1,2}$ to node $2$, $\Delta x_{1,3}$ and $\Delta x_{1,4}$ to node $3$, $(\Delta x_{1,1}+\Delta x_{1,3})$ and $(\Delta x_{1,2}+\Delta x_{1,4})$ to node $4$, respectively. Thus, the required update bandwidth 
for node 1 is $6$. Similarly, we can verify that
the required bandwidths of the second, the third and the fourth nodes are $3$, $3$ and $0$, respectively.
As a result, $\gamma=\frac 14(6+3+3+0)=3$, which equals $\gamma_{\min}$ in 
\eqref{gamma_min}.

Two particular situations, which
guarantee the existence of MR-MUB codes, are $k=1$ and $k=n-1$.
In the former situation, we can obtain from \eqref{eq:Rsma} and \eqref{eq:39} that
\begin{equation}
R_{\sma}=R_{\min}=\frac{(n-k)}kB=\frac{(n-k)}k\sum_{i\in[n]}m_i,
\end{equation}
while the latter has been proven 
in Theorem \ref{th:xx5}. 
For $1<k<n-1$, however,
it is interesting to find that an MR-MUB code exists only when $\mathbf{m}$
is either an extremely balanced all-equal vector or an extremely unbalanced all-zero-but-one vector, which is proven in the next theorem under
$k \mid m_{i}$ for all $i \in [n]$.

\begin{theorem}\label{th:xx6}
Under $1< k<n-1$ and $k \mid m_{i}$ for all $i \in [n]$, 
$(n,k,\mathbf{m})$ MR-MUB codes exist if, and only if,
one of the two situations occurs:
\begin{align}\label{eq:67}
\begin{cases}
&m_{i}=\frac Bn\ \forall i\in[n],\\
&p_{j}=\frac{(n-k)}{nk}B\ \forall j\in[n].
\end{cases}
\end{align}
and
\begin{align}\label{eq:93pn}
\begin{cases}
&m_{1}=B, \text{ and }m_{i}=0\text{ for }2\leq i\leq n,\\
&p_{1}=0,\text{ and }p_{j}=\frac{B}{k}\text{ for }2 \leq j \leq n.
\end{cases}
\end{align}
In either situation, $\{\gamma_{i,j}\}_{i\neq j\in[n]}$
follows from \eqref{eq:sl1}.

\end{theorem}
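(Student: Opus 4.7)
The plan is to reduce the claim to a direct comparison of two characterizations. Because Theorem~\ref{the:2} already establishes that, under $1<k<n-1$ and $k\mid m_i$ for every $i$, the minimum update bandwidth $\gamma_{\min}$ is uniquely attained by $\gamma_{i,j}=m_i/k$ (cf.~\eqref{eq:sl1}), and Theorem~\ref{th:x4} establishes that the smallest code redundancy $R_{\sma}$ subject to $\gamma=\gamma_{\min}$ is uniquely attained by the $\mathbf{p}$ in \eqref{eq:x54}, an MR-MUB code exists if and only if this unique $\mathbf{p}$ lies in the feasible set \eqref{eq:47} of $R_{\min}$-achieving vectors, i.e., if and only if $R_{\sma}=R_{\min}$. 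The theorem thus reduces to identifying the $\mathbf{m}$'s for which these two characterizations of $\mathbf{p}$ coincide.

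To carry this out I would substitute \eqref{eq:x54} into \eqref{eq:47}. Writing $S\triangleq\sum_{i=1}^{n-k+1}m_i$ and noting that $k\mid m_i$ forces $k\mid B$ so that \eqref{eq:x29} becomes $\mu=\max\{m_{n-k},B/k\}$, the equality $(S-m_j)/k=[\mu-m_j]_+$ for each $j\leq n-k$ splits into two subcases: if $m_j\leq\mu$, then $m_j=(k\mu-S)/(k-1)$; if $m_j>\mu$, then $m_j=S$, which forces $m_i=0$ for all $i\in[n-k+1]\setminus\{j\}$.

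The main case analysis then splits on $\mu$. If $\mu=m_{n-k}>B/k$: the decreasing ordering of $\mathbf{m}$ rules out $m_j>\mu$ (such a $j$ would force $m_{n-k}=0$, contradicting $m_{n-k}=\mu>0$), hence $m_1=\cdots=m_{n-k}\triangleq m$, and substituting into $(k-1)m=k\mu-S$ collapses to $m_{n-k+1}=-(n-k-1)m$, impossible under $k<n-1$ and $B>0$. If $\mu=B/k$: either (a) every $m_j\leq\mu$ for $j\leq n-k$, so $m_1=\cdots=m_{n-k}\triangleq m$, and the constraints $p_j\leq[\mu-m_j]_+$ for $j>n-k$ together with $\sum_{j>n-k}p_j=\sum_{i\leq n-k}m_i$ in \eqref{eq:47} are tight only when $m_{n-k+1}=\cdots=m_n$; a short algebraic manipulation of $(k-1)m=k\mu-S$ then yields the common value $B/n$, establishing \eqref{eq:67}; or (b) some $m_j>\mu$ for $j\leq n-k$, which by the ordering must be $j=1$, yielding $m_1=B$ and $m_i=0$ for all $i\geq 2$ (by the forced vanishing of $m_2,\ldots,m_{n-k+1}$ together with monotonicity), establishing \eqref{eq:93pn}.

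For the converse, a direct substitution shows that in either \eqref{eq:67} or \eqref{eq:93pn} the $\mathbf{p}$ of \eqref{eq:x54} satisfies \eqref{eq:47}, so $R_{\sma}=R_{\min}$; explicit MR-MUB codes realizing these parameters (a vertical $(n,k)$ MDS array code with uniform payload $B/n$ for the balanced case, and a horizontal MDS code with all data placed in node~$1$ for the unbalanced case) will be provided in Section~\ref{sec:code}. The assertion that $\gamma_{i,j}=m_i/k$ is the unique bandwidth assignment in either situation is inherited directly from Theorem~\ref{the:2}. I anticipate the main obstacle to be handling the case $\mu=m_{n-k}>B/k$ cleanly, where the hypothesis $k<n-1$ (equivalently $n-k-1\geq 1$) is used in an essential way to force the contradiction $m_{n-k+1}=-(n-k-1)m<0$.
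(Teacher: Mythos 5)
Your proof is correct and the governing reduction is the same as the paper's: an MR-MUB code exists iff the unique $\mathbf{p}$ from Theorem~\ref{th:x4}, namely \eqref{eq:x54}, is also a member of the $R_{\min}$-achieving family \eqref{eq:47}, which is equivalent to $R_{\sma}=R_{\min}$. However, your case analysis is organized quite differently from, and is noticeably heavier than, the paper's. The paper observes that equating only the first component already suffices: setting $p_1=[\mu-m_1]_+=\frac{1}{k}\sum_{i=2}^{n-k+1}m_i$ and splitting on $p_1=0$ versus $p_1>0$, the former case immediately collapses via $m_1\geq m_2\geq\cdots$ to $m_1=B$, $m_i=0$ ($i\geq 2$), while the latter forces $\mu>m_1\geq m_{n-k}$, hence $\mu=B/k$, and then $(k-1)m_1=m_{n-k+2}+\cdots+m_n$ with all summands bounded by $m_1$ forces $m_1=\cdots=m_n$. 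By contrast, you compare all $n$ components of \eqref{eq:x54} against \eqref{eq:47}, which obliges you to (i) explicitly rule out the branch $\mu=m_{n-k}>B/k$ — a dead end the paper never has to visit, since its ordering argument discards it silently; and (ii) in subcase (a) of $\mu=B/k$, first get $m_1=\cdots=m_{n-k}$ and then separately pin down $m_{n-k+1}=\cdots=m_n$ by arguing that the inequality constraints $p_j\leq[\mu-m_j]_+$ for $j>n-k$ must actually be tight. That tightness step is the fragile point in your write-up: it is not stated in \eqref{eq:47} as an equality, so you should make explicit that the $\mathbf{p}$ of \eqref{eq:x54} automatically satisfies the sum identity $\sum_{j>n-k}p_j=\sum_{i\leq n-k}m_i$, that under $m_j\leq\mu$ (which the ordering guarantees in subcase (a)) one has $\sum_{j>n-k}[\mu-m_j]_+=\sum_{i\leq n-k}m_i$, and that equality of the sums of the two sides of $k$ one-sided inequalities forces each to hold with equality. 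With that spelled out your argument is complete; it just trades the paper's one-line ordering trick for a sum-of-inequalities argument plus an extra vacuous case. One small terminology nit: the code realizing \eqref{eq:93pn} is not a horizontal array code in the sense defined in Section~\ref{sec:2a} (there, $p_i=0$ for all $i\in[k]$, whereas here only $p_1=0$ while $p_2,\dots,p_k=B/k$); it is just the special case of the $\mathcal{C}_{\text{U}}$ construction of Section~\ref{sec:6b} with $m_1=B$ and all other $m_i=0$.
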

\begin{proof} 
The theorem can be proved by simply equating the two $p_1$'s that respectively achieve   
$R_{\min}$ and $R_{\sma}$.
Specifically, \eqref{eq:47} indicates that $R_{\min}$ is achieved 
by $p_{1}=[\mu-m_{1}]_{+}$, where $\mu$
is given in \eqref{eq:x29}. From \eqref{eq:x54},
$R_{\sma}$ is reached when 
\begin{equation}
p_{1}=\frac 1k\left(\sum_{i=1}^{n-k+1}{m_{i}}-m_{1}\right)=\frac 1k\sum_{i=2}^{n-k+1}{m_{i}}.
\end{equation}
We thus have
\begin{equation}\label{eq:x65}
p_{1}=[\mu-m_{1}]_{+}=\frac 1k\sum_{i=2}^{n-k+1}{m_{i}}.
\end{equation}
We then distinguish between two cases: $p_1=0$ and $p_1>0$.

Consider $p_{1}=\frac 1k\sum_{i=2}^{n-k+1}m_i=0$, which from~\eqref{eq:mi}, immediately leads to $m_{1}=B$ and $m_{2}=m_{3}=\dots=m_{n}=0$. 
Thus, we obtain from \eqref{eq:x29} and \eqref{eq:47} that 
$\mu=\frac{B}{k}$ and $p_{j}=\frac{B}{k}$ for $2 \leq j \leq n$.
As anticipated, this $\mathbf{p}$ also satisfies \eqref{eq:x54} and validates
$R_{\min}=R_{\sma}$.

Next, we consider $p_{1}=[\mu-m_{1}]_{+} > 0$, which leads to $\mu > m_{1}$. 
As $m_{n-k} \leq m_{1}$ from~\eqref{eq:x29}, we have $\mu=\frac{B}{k} > m_{1}$. Thus, \eqref{eq:x65} becomes
\begin{equation}
\frac{B}{k}-m_{1}=\frac 1k\sum_{i=2}^{n-k+1}{m_{i}},
\end{equation}
which implies 
\begin{equation}
(k-1)m_{1}=\underbrace{m_{n-k+2}+\dots+m_{n}}_{k-1}.
\end{equation}
We can then conclude from \eqref{eq:mi} that $m_{1}=m_{2}=\dots=m_{n}$. 
The verification of $R_{\min}=R_{\sma}$ straightforwardly follows.
\end{proof}

In practice, it may be unusual to place all data symbols in one node.
Thus, we will focus on the construction of MR-MUB codes
that follows \eqref{eq:67} in the next subsection.
In other words, the $(n,k,\mathbf{m}=[m\ m\ \cdots\ m]^{\Tr})$ MR-MUB codes considered in the rest of the paper 
are $(n,k)$ vertical MDS array codes with 
each node containing $m$ data symbols and
$p=\frac{(n-k)}km$ parity symbols subject to $k\mid m$.

Note that Theorem \ref{th:xx6} seems limited in 
its applicability 
since \eqref{eq:67} simply shows 
vertical MDS codes can achieve both
$R_{\min}$ and $\gamma_{\min}$ under a particular case of $k\mid m$.
However, without the condition of $k\mid m$, vertical MDS array codes may not 
form a sub-class of MR-MUB codes.
This can be justified by two observations.
First, it can be verified from \eqref{eq:47}
that the fulfillment of both 
$p_i=[\mu-m_i]_{+}$ for $1\leq i \leq n-k$
and $\sum_{i=n-k+1}^{n}{p_i}=\sum_{i=1}^{n-k}{m_i}$
under each $p_i=p$ and each $m_i=m$
requires $k\mid nm$. Thus, under $k\nmid nm$, 
vertical MDS array codes cannot achieve the minimum code redundancy,
and hence cannot be MR-MUB codes.
Second, when $k\mid nm$ but $k\nmid m$,
examples and counterexamples for vertical MDS array codes 
being able to achieve 
simultaneously $R_{\min}$ and $\gamma_{\min}$ 
can both be constructed.\footnote{
A supporting example follows when $n=6$, $k=3$ and $m=4$, where
setting $p_i=p=4$ and 
\begin{equation}
\gamma_{i,j}=\begin{cases}
1,&j\in\{(i\,\text{mod}\,6)+1, [(i+1)\,\text{mod}\,6]+1\}\\
2,&\text{otherwise}
\end{cases}
\end{equation}
for $i\neq j\in[n]$ fulfills both \eqref{eq:x26} and \eqref{eq:x27}, and achieves 
simultaneously $R_{\min}$ and $\gamma_{\min}$.
 
A counterexample exists when $n=9$, $k=6$ and $m=2$.
From \eqref{eq:47}, we know $R_{\min}$ can only be achieved by adopting $p_j=1$ for $j\in[n]$.
By \eqref{eq:xxx''}, the achievability 
of $\gamma_{\min}$ requires  
$\gamma_{i,j_1(i)}=\gamma_{i,j_2(i)}=\gamma_{i,j_3(i)}=\gamma_{i,j_4(i)}=0$ 
and $\gamma_{i,j_5(i)}=\gamma_{i,j_6(i)}=\gamma_{i,j_7(i)}=\gamma_{i,j_8(i)}=1$ 
for every $i\in[n]$. Then, the pigeon hole principle implies that there is
$j'$ such that $\{\gamma_{i,j'}\}_{i\in[n]\setminus\{j'\}}$
contains at least four $0$'s. Let $\gamma_{i_1,j'}=\gamma_{i_2,j'}=\gamma_{i_3,j'}=\gamma_{i_4,j'}=0$ and $\mathcal{E}=\{i_1,i_2,i_3\}$, where $j'\not\in\{i_1,i_2,i_3,i_4\}$. 
A violation to \eqref{eq:x27} can thus be obtained as follows:
\begin{equation}
\begin{cases}
\sum_{i \in \mathcal{E}}{m_{i}}=|\mathcal{E}|\, m=6\\
\sum_{j \in \bar{\mathcal{E}}}{\min\big\{p_{j}, \sum_{i \in \mathcal{E}}{\gamma_{i,j}}\big\}}=\min\big\{p_{j'}, \underbrace{\mbox{$\sum_{i \in \mathcal{E}}{\gamma_{i,j'}}$}}_{=0}\big\}
+\sum_{j \in \bar{\mathcal{E}}\setminus\{j'\}}{\min\big\{p_{j}, \sum_{i \in \mathcal{E}}{\gamma_{i,j}}\big\}}\leq 5.
\end{cases}
\end{equation}
Consequently, $(9,6)$ vertical MDS array codes with each node having $m=2$ data symbols cannot be MR-MUB codes.
} Hence, we conjecture that $k\mid m$ is also a necessary condition for 
vertical MDS array codes 
being MUB codes, provided $k\nmid n$. 

Theorem~\ref{th:xx6} only deals with the situation of $1<k<n-1$. For completeness,
the next corollary incorporates also the two particular cases of $k=1$ and $k=n-1$.

\begin{corollary}
Unde $1\leq k<n$ and $k\mid m$, an $(n,k,m\one)$ MR-MUB code must parameterize with
\begin{equation}\label{eq:sl3}
p_{j}=\frac{(n-k)}km \triangleq p \quad \forall j \in [n],
\end{equation}
\begin{equation}\label{eq:sl2}
\gamma_{i,j}=\frac{m}{k} \quad \forall i \neq j \in [n],
\end{equation}
where $\one\triangleq[1\ 1\ \cdots\ 1]^{\Tr}$ is the all-one vector.
\end{corollary}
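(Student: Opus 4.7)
The plan is to split the proof into three cases based on the value of $k$: the main regime $1 < k < n-1$, and the two edge cases $k = 1$ and $k = n - 1$. The first case follows immediately from Theorem~\ref{th:xx6}. Since $\mathbf{m} = m\one$ is the all-equal vector, the second situation \eqref{eq:93pn} of that theorem, which requires $m_1 = B$ and $m_i = 0$ for $i \geq 2$, can only occur in the trivial instance $m = 0$. Thus situation \eqref{eq:67} applies, yielding $p_j = (n-k)B/(nk) = (n-k)m/k$ for every $j$, while the accompanying \eqref{eq:sl1} gives $\gamma_{i,j} = m_i/k = m/k$ for every $i \neq j$.

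For the edge case $k = 1$, I would substitute $\mathbf{m} = m\one$ directly into \eqref{eq:39} and \eqref{eq:47}. With $B = nm$ and $\mu = \max\{m_{n-1}, B\} = nm$, the formulas force $p_i = (n-1)m$ for every $i \in [n-1]$, and the remaining constraint $\sum_{i=n}^n p_i = \sum_{i=1}^{n-1} m_i$ pins $p_n = (n-1)m$ as well, matching \eqref{eq:sl3}. For the bandwidth parameters, note that $w_i = k\lceil m/k \rceil - m = 0$, so \eqref{eq:xxx''} of Theorem~\ref{the:2} reduces to $\gamma_{i, j_u(i)} = \lceil m/k \rceil = m$ for every $u \in [n-1]$; this forces the uniform value $\gamma_{i,j} = m = m/k$.

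The hard part, and the main obstacle, is the edge case $k = n-1$: as the footnote of Theorem~\ref{the:2} illustrates, the constraint \eqref{eq:x36'} alone leaves $\{\gamma_{i,j}\}$ non-unique, so one must exploit the minimum-redundancy requirement in tandem with \eqref{eq:x27}. I would first apply \eqref{eq:47} to pin down $\mathbf{p}$: since $(n-1) \mid m$ gives $\mu = nm/(n-1)$, we get $p_1 = m/(n-1)$, and the $n-1$ remaining counts $p_2, \ldots, p_n$, each bounded by $m/(n-1)$ yet summing to $m$, are forced to be uniformly $m/(n-1)$. Then I would invoke \eqref{eq:x27} of Theorem~\ref{th:x3} with $|\mathcal{E}| = 1$: for any singleton $\mathcal{E} = \{e\}$, the chain
\[
m \leq \sum_{j \neq e} \min\bigl\{p_j, \gamma_{e,j}\bigr\} \leq \sum_{j \neq e} \frac{m}{n-1} = m
\]
must hold with equality throughout, forcing $\gamma_{e,j} \geq m/(n-1)$ for every $j \neq e$; combined with $\sum_{j \neq e}\gamma_{e,j} = m$ from \eqref{eq:x36'}, this yields the claimed $\gamma_{e,j} = m/(n-1) = m/k$ and completes the proof.
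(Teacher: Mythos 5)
Your proof is correct and follows essentially the same route as the paper's: defer the main regime $1<k<n-1$ to Theorem~\ref{th:xx6}, use the characterization of minimum-redundancy $\mathbf{p}$'s (you via \eqref{eq:47}, the paper via the simplified \eqref{eq:37a}) to pin down \eqref{eq:sl3}, use the uniqueness clause of Theorem~\ref{the:2} for $k=1$, and combine \eqref{eq:x27} with \eqref{eq:x36'} to force \eqref{eq:sl2} for $k=n-1$ --- the paper phrases this last step as an explicit contradiction, but your tight-chain-of-inequalities version is logically identical. The only cosmetic caveat is that your case~$k=1$ implicitly assumes $n>2$ (so that $k<n-1$ and \eqref{eq:xxx''} applies); this is harmless since $n=2$ with $k=1$ falls into your $k=n-1$ case, but you could note the overlap explicitly.
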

\begin{proof} 
We only substantiate the corollary for $k=1$ and $k=n-1$ 
since the situation of $1<k<n-1$ have been proved in Theorem \ref{th:xx6}.
The validity of \eqref{eq:sl3} under $k=1$ and $k=n-1$
can be confirmed by \eqref{eq:37a}. 
We can also obtain from \eqref{eq:sl1} that \eqref{eq:sl2}
holds under $k=1$. It remains to verify \eqref{eq:sl2} under $k=n-1$ by contradiction.

Fix $k=n-1$. Suppose there is a $j' \in [n]\setminus \{i\}$ such that $\gamma_{i,j'} < \frac{m}{k} = p_{j'}$. A contradiction can be established from \eqref{eq:x27} as follows:
\begin{equation}
m=m_i \leq \sum_{j \in [n]\setminus\{i\}}{\min\{p_{j}, \gamma_{i,j}\}} \leq \sum_{j \in [n] \setminus \{i,j'\}}{p_{j}}+\gamma_{i,j'}< \sum_{j \in [n]\setminus\{i\}}{p_{j}}=m.
\end{equation}
Accordingly, $\gamma_{i,j} \geq \frac{m}{k}$ for all $i \neq j \in [n]$, which implies 
\begin{equation}\label{eq:pn}
\sum_{j \in [n] \setminus \{i\}}{\gamma_{i,j}} \geq \frac mk(n-1)=m.
\end{equation}
By noting from \eqref{eq:x36'} that the inequality in \eqref{eq:pn} must 
be replaced by an equality, \eqref{eq:sl2} holds under $k=n-1$.
\end{proof}

\subsection{Construction of MR-MUB codes}\label{sec:5A}


For the construction of $(n,k,m\one)$ MR-MUB code, denoted as $\mathcal{C}_{\text{O}}$ for convenience,
we require $\mathbf{x}_{i} \in \mathbb{F}_{q}^{m}$ and $\mathbf{p}_{j} \in \mathbb{F}_{q}^{p}$ with $p=\frac{(n-k)}km$ 
for $i,j \in [n]$. The construction of $\{\mathbf{A}_{i,j}\}_{i \neq j \in [n]}$ and 
$\{\mathbf{B}_{i,j}\}_{i \neq j \in [n]}$ associated with $\mathcal{C}_{\text{O}}$
are then addressed as follows.


First, we construct $\{\mathbf{A}_{i,j}\}_{i \neq j \in [n]}$ of dimension $\frac mk\times m$. 
Choose an $(n-1,k)$ MDS array code $\mathcal{M}$ over $\mathbb{F}_{q}$ with encoding function $\mathcal{F}: \mathbb{F}_{q}^{m} \longrightarrow \mathbb{F}_{q}^{\frac{m}{k} \times (n-1)}$, where $\frac{m}{k}$ is the number of rows of the MDS array code, and $m$ is the number of data symbols 
in each row.
As an example, $\mathcal{M}$ can be a Reed-Solomon (RS) code 
subject to $q\geq n-1$.
Denote $\mathbf{F}_{i}\triangleq \mathcal{F}(\mathbf{x}_{i})$. Then, 
$\{\mathbf{p}_{i,j}\}_{i\neq j \in [n]}$ defined 
in \eqref{eq:gi}, as well as $\{\mathbf{A}_{i,j}\}_{i \neq j \in [n]}$, 
can be characterized via
\begin{equation}
\mathbf{p}_{i,[(i+j-1)\bmod n]+1}=\mathbf{A}_{i,[(i+j-1)\bmod n]+1}\mathbf{x}_{i}=(\mathbf{F}_{i})_{j}\quad\forall j\in[n],
\end{equation}
where $(\mathbf{F}_{i})_{j}$ is the $j$-th column of the matrix $\mathbf{F}_{i}$. This indicates that 
\begin{equation}\label{eq:53}
\mathbf{F}_{i}=[\mathbf{p}_{i,i+1} \ \dots \ \mathbf{p}_{i,n} \ \mathbf{p}_{i,1} \ \dots \ \mathbf{p}_{i,i-1}] \quad \forall i \in [n].
\end{equation}

Next, we construct $\{\mathbf{B}_{i,j}\}_{i \neq j \in [n]}$ of dimension $p\times \frac{m}{k}$. Choose a $p \times \frac{(n-1)m}{k} $ matrix $\mathbf{V}$ over $\mathbb{F}_{q}$ such that arbitrary selection of $p$ columns of $\mathbf{V}$ form an invertible matrix. For example,  $\mathbf{V}$ can be a Vandermonde matrix 
subject to $q\geq\frac{(n-1)}{k}m$. 
We then let 
\begin{equation}
\mathbf{B}_{[(i+j-1)\bmod n]+1,j}=[(\mathbf{V})_{(i-1)\frac{m}{k}+1} \ (\mathbf{V})_{(i-1)\frac{m}{k}+2} \dots (\mathbf{V})_{i\frac{m}{k}}]
\quad\forall i \in [n-1]\text{ and }j \in [n],
\end{equation}
which implies that
\begin{equation}\label{eq:VV}
\mathbf{V}=\begin{bmatrix}
\mathbf{B}_{j+1,j} & \dots & \mathbf{B}_{n,j} & \mathbf{B}_{1,j} & \dots & \mathbf{B}_{j-1,j}
\end{bmatrix}. 
\end{equation}
Note that the right-hand-side of \eqref{eq:VV} 
remains constant regardless of $j\in[n]$. 
Thus, we can obtain from \eqref{eq:fi} that
\begin{equation}\label{eq:pjv}
\mathbf{p}_{j}=\mathbf{V}\,\begin{bmatrix}
\mathbf{p}_{j+1,j}^{\Tr} & \dots & \mathbf{p}_{n,j}^{\Tr} & \mathbf{p}_{1,j}^{\Tr} & \dots & \mathbf{p}_{j-1,j}^{\Tr}\\  
\end{bmatrix}^{\Tr}. 
\end{equation}

We now prove the code so constructed is an MR-MUB code.

\begin{theorem}\label{th:x6}
$\mathcal{C}_{\text{O}}$ is an $(n,k,m\one)$ MR-MUB code.
\end{theorem}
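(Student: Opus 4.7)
The plan is to verify the three defining properties of an $(n,k,m\one)$ MR-MUB code in turn: (i) the code redundancy equals $R_{\min}$, (ii) the update bandwidth equals $\gamma_{\min}$, and (iii) the construction is a valid $(n,k,m\one)$ irregular array code, i.e., the data are recoverable from any $k$ columns but not from any $k-1$ columns. Items (i) and (ii) are parameter-counting exercises once the ranks of $\mathbf{A}_{i,j}$ and $\mathbf{B}_{i,j}$ are pinned down, while (iii) carries the bulk of the work.

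For (i), each node stores exactly $p=(n-k)m/k$ parity symbols, so $R=np=(n-k)nm/k=(n-k)B/k$ with $B=nm$. Since $k\mid m$ yields $k\mid B$ and $m_{1}=m\leq B/k$, formula~\eqref{eq:37} identifies this with $R_{\min}$. For (ii), I will show $\gamma_{i,j}=\rank(\mathbf{M}_{i,j})=m/k$ for every $i\neq j$. The matrix $\mathbf{A}_{i,j}$ realises the map from $\mathbf{x}_i$ to a single column of the $(n-1,k)$ MDS codeword $\mathbf{F}_i$, so it must have full row rank $m/k$, otherwise $\mathcal{M}$ would violate the MDS property. The matrix $\mathbf{B}_{i,j}$ is formed by $m/k$ columns of $\mathbf{V}$, and since any $p\geq m/k$ columns of $\mathbf{V}$ are linearly independent by construction, $\mathbf{B}_{i,j}$ has full column rank $m/k$. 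Sylvester's rank inequality, as used in the proof of Theorem~\ref{th:0}, then forces $\rank(\mathbf{B}_{i,j}\mathbf{A}_{i,j})=m/k$. Substituting $\gamma_{i,j}=m/k$ into~\eqref{eq:defgamma} and comparing with~\eqref{gamma_min} evaluated at $\mathbf{m}=m\one$ confirms $\gamma=(n-1)m/k=\gamma_{\min}$.

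For (iii), fix a surviving set $S\subset[n]$ with $|S|=k$. Since $\{\mathbf{x}_i\}_{i\in S}$ is directly available, every $\mathbf{p}_{i,j}=\mathbf{A}_{i,j}\mathbf{x}_i$ with $i\in S$ can be computed. For each $j\in S$, subtracting the known contributions in~\eqref{eq:pjv} isolates
\begin{equation*}
\mathbf{p}_{j}-\sum_{i\in S\setminus\{j\}}\mathbf{B}_{i,j}\mathbf{p}_{i,j}=\sum_{i\in[n]\setminus S}\mathbf{B}_{i,j}\mathbf{p}_{i,j},
\end{equation*}
whose right-hand side is a $p\times p$ submatrix of $\mathbf{V}$ (gathering the $(n-k)(m/k)=p$ columns of $\mathbf{V}$ associated with the $n-k$ non-surviving nodes, as prescribed by~\eqref{eq:VV}) applied to the stack of unknowns $\{\mathbf{p}_{i,j}\}_{i\notin S}$. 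By the Vandermonde hypothesis on $\mathbf{V}$, this $p\times p$ submatrix is invertible, so $\{\mathbf{p}_{i,j}\}_{i\notin S}$ is recovered. Sweeping $j$ over $S$ collects, for each $i\notin S$, exactly $k$ distinct columns of the MDS codeword $\mathbf{F}_i$, from which $\mathbf{x}_i$ is decoded using $\mathcal{M}$. That $k-1$ columns are insufficient follows from a Singleton-style dimension count: $(k-1)(m+p)=(k-1)nm/k<nm=B$ symbols cannot uniquely determine $B$ uniformly distributed independent data symbols.

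The main obstacle I expect lies in the bookkeeping of item (iii): the cyclic indexing in~\eqref{eq:VV} means that for each $j\in S$ the $p$ columns of $\mathbf{V}$ selected to form the invertible submatrix depend on $j$, so one must check for every $j$ separately that the selected columns are $p$ distinct columns indexed by the non-surviving nodes. Once this indexing is carefully unwound, the Vandermonde property supplies invertibility uniformly in $j$, and the MDS decoding property of $\mathcal{M}$ completes the recovery.
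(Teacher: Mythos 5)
Your proposal is correct and follows essentially the same route as the paper's proof: verify $R=R_{\min}$ by a parameter count, verify $\gamma=\gamma_{\min}$ by computing each $\gamma_{i,j}$, and verify the $(n,k)$ recovery property by solving the linear system over the $p\times p$ submatrix of $\mathbf{V}$ indexed by the failed nodes and then MDS-decoding each $\mathbf{F}_i$. Two small additions you made are worth noting: you supply an explicit rank argument (Sylvester) to pin down $\rank(\mathbf{M}_{i,j})=m/k$, where the paper instead leans implicitly on the fact that $\gamma\geq\gamma_{\min}$ from Theorem~\ref{the:2} forces equality, and you include the dimension count showing $k-1$ columns cannot suffice, which the paper leaves implicit. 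As for the bookkeeping worry you flagged at the end, it is resolved exactly by the observation the paper makes after~\eqref{eq:VV}: the right-hand side of~\eqref{eq:VV} is the fixed matrix $\mathbf{V}$ for every $j$, only the $\mathbf{B}_{i,j}$ labels rotate cyclically with $j$; hence for any surviving set $S$ and any $j\in S$, the blocks $\{\mathbf{B}_{i,j}\}_{i\notin S}$ are always $p$ distinct columns of the same $\mathbf{V}$, so invertibility is uniform in $j$.
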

%
%
%
\begin{proof} 
The proof requires verifying two properties, which are
$i)$ $\mathcal{C}_{\text{O}}$ being an $(n,k,m\one)$ 
array code, and $ii)$ $\mathcal{C}_{\text{O}}$ achieving 
$R_{\min}$ and $\gamma_{\min}$.

First, we justify $i)$, i.e., 
$\mathcal{C}_{\text{O}}$ 
satisfying that 
given any set $\mathcal{E} \subset [n]$ with $|\mathcal{E}|=n-k$, the codeword $\mathbf{C}$ of $\mathcal{C}_{\text{O}}$ can be reconstructed from $\mathbf{C}_{\bar{\mathcal{E}}}$.
When $\mathbf{C}_{\bar{\mathcal{E}}}$
is given, both $\mathbf{X}_{\bar{\mathcal{E}}}$ and $\mathbf{P}_{\bar{\mathcal{E}}}$ are known, and so are $\{\mathbf{p}_{\bar{e}_{i},\bar{e}_{j}}\}_{i\neq j\in[k]}$
according to \eqref{eq:gi}. We can then establish from \eqref{eq:fi} that
\begin{equation}\label{eq:56}
\begin{aligned}
&\mathbf{p}_{\bar{e}_{j}}-\sum_{i=1,i\neq j}^{k}{\mathbf{B}_{\bar{e}_{i},\bar{e}_{j}}\mathbf{p}_{\bar{e}_{i},\bar{e}_{j}}}
=\sum_{i=1}^{n-k}{\mathbf{B}_{e_{i},\bar{e}_{j}}\mathbf{p}_{e_{i},\bar{e}_{j}}}
=\begin{bmatrix}
\mathbf{B}_{e_{1},\bar{e}_{j}} & \dots & \mathbf{B}_{e_{n-k},\bar{e}_{j}}
\end{bmatrix}
\begin{bmatrix}
\mathbf{p}_{e_{1},\bar{e}_{j}}\\
\vdots\\
\mathbf{p}_{e_{n-k},\bar{e}_{j}}\\
\end{bmatrix} \quad \forall j \in [k],
\end{aligned}
\end{equation}
Since $\mathbf{p}_{\bar{e}_{j}}-\sum_{i=1,i\neq j}^{k}{\mathbf{B}_{\bar{e}_{i},\bar{e}_{j}}\mathbf{p}_{\bar{e}_{i},\bar{e}_{j}}}$ 
is known and any $p$ columns of $\mathbf{V}$, 
as defined in \eqref{eq:VV}, forms an invertible matrix,
we can obtain 
$\{\mathbf{p}_{e_{i},\bar{e}_{j}}\}_{i \in [n-k], j\in [k]}$ 
by left-multiplying \eqref{eq:56}
by $[\mathbf{B}_{e_{1},\bar{e}_{j}} \ \dots \ \mathbf{B}_{e_{n-k},\bar{e}_{j}}]^{-1}$.
With the knowledge of $k$ columns $\{\mathbf{p}_{e_{i},\bar{e}_{j}}\}_{j\in [k]}$ of $\mathbf{F}_{e_i}$
in \eqref{eq:53}, 
we can recover  $\mathbf{x}_{e_{i}}$ 
via the decoding algorithm of the $(n-1,k)$ MDS array code $\mathcal{M}$. 
By this procedure, $\{\mathbf{x}_{i}\}_{i \in [n]}$ can all be recovered.

Next, we verify $ii)$. From~\eqref{eq:53}, we have $\mathbf{p}_{i,j} \in \mathbb{F}_{q}^{\frac{m}{k}}$ and hence $\gamma_{i,j}=\frac{m}{k}$, 
which leads to $\gamma=\gamma_{\min}$ as pointed out in \eqref{eq:sl2}. In addition, \eqref{eq:pjv} shows
$p_{j}=\frac{(n-k)}{k}m$ for $j \in [n]$, and hence $R_{\min}$ is achieved 
as addressed in  
\eqref{eq:sl3}. The justification of the two required properties of $\mathcal{C}_{\text{O}}$ is thus completed.
\end{proof}

The $(4,2,2\,\one)$ MR-MUB code in Fig.~\ref{fig:subfig:b} can be constructed via the proposed procedure. 
 First, with $\mathbf{x}_{i}=[x_{i,1} \ x_{i,2}]^{\Tr}$, $\mathcal{M}$ is chosen as a $(3,2)$ parity-check code over $\mathbb{F}_{q}$, which gives
\begin{equation}\label{eq:74}
\mathbf{F}_{i}=\begin{bmatrix}
x_{i,1} & x_{i,2} & x_{i,1}+x_{i,2}
\end{bmatrix} \quad \forall i \in [n].
\end{equation}
Thus, from~\eqref{eq:53}, we have $\mathbf{p}_{1,2}=x_{1,1}$, $\mathbf{p}_{1,3}=x_{1,2}$, $\mathbf{p}_{1,4}=x_{1,1}+x_{1,2}$. 
The remaining $\mathbf{p}_{i,j}$ can be similarly obtained 
and are listed in Table~\ref{fig:fig4}. 

\begin{table}[b]
\caption{\label{fig:fig4} $\{\mathbf{p}_{i,j}\}_{i\neq j\in[n]}$ of the MR-MUB code presented in Fig.~\ref{fig:subfig:b},
where the element in the $i$-th row and the $j$-th column is $\mathbf{p}_{i,j}$.}
\centering
\begin{tabular}{|c|c|c|c|}
\hline
null & $x_{1,1}$ & $x_{1,2}$ & $x_{1,1}+x_{1,2}$\\
\hline
$x_{2,1}+x_{2,2}$ & null & $x_{2,1}$ & $x_{2,2}$\\
\hline
$x_{3,2}$ & $x_{3,1}+x_{3,2}$ & null & $x_{3,1}$\\
\hline
$x_{4,1}$ & $x_{4,2}$ & $x_{4,1}+x_{4,2}$ & null\\
\hline
\end{tabular}
\end{table} 

Next, we specify 
\begin{equation}
\mathbf{V}=\begin{bmatrix}
0 & 1 & 1\\
1 & 1 & 0\\
\end{bmatrix},
\end{equation}
which satisfies that the selection of any two columns 
forms an invertible matrix. By~\eqref{eq:pjv}, we have  
\begin{equation}
\mathbf{p}_{1}=\mathbf{V} \, [\mathbf{p}_{2,1}^{\Tr} \ \mathbf{p}_{3,1}^{\Tr} \ \mathbf{p}_{4,1}^{\Tr}]^{\Tr}=\begin{bmatrix}
0 & 1 & 1\\
1 & 1 & 0\\
\end{bmatrix} \, \begin{bmatrix}
x_{2,1}+x_{2,2}\\
x_{3,2}\\
x_{4,1}\\ 
\end{bmatrix}=\begin{bmatrix}
x_{3,2}+x_{4,1}\\
x_{2,1}+x_{2,2}+x_{3,2}
\end{bmatrix}.
\end{equation}
$\mathbf{p}_{2}$, $\mathbf{p}_{3}$ and $\mathbf{p}_{4}$ 
can be similarly obtained and can be found in Fig.~\ref{fig:subfig:b}. 


We now demonstrate via this example how erased nodes can be systematically recovered based on the chosen $\mathcal{M}$ and $\mathbf{V}$.
Suppose nodes~$1$ and~$2$ are erased.
As knowing from \eqref{eq:pjv} that
\begin{equation}
\mathbf{p}_{3}=\begin{bmatrix}
0 &1 &1 \\
1 & 1 & 0\\
\end{bmatrix} \, \begin{bmatrix}
\mathbf{p}_{4,3}\\
\mathbf{p}_{1,3}\\
\mathbf{p}_{2,3}\\
\end{bmatrix},
\end{equation} 
we perform \eqref{eq:56} to obtain
\begin{equation}\label{left}
\mathbf{p}_{3}-\begin{bmatrix}
0\\
1\\
\end{bmatrix}\mathbf{p}_{4,3}=\begin{bmatrix}
1 & 1\\
1 & 0\\
\end{bmatrix}\begin{bmatrix}
\mathbf{p}_{1,3}\\
\mathbf{p}_{2,3}\\
\end{bmatrix}.
\end{equation}
Since $\mathbf{p}_{3}$ is known and 
$\mathbf{p}_{4,3}$ can be obtained from $\mathbf{x}_{4}$ via $\mathbf{p}_{4,3}=\mathbf{A}_{4,3}\mathbf{x}_{4}$, 
we can recover $\mathbf{p}_{1,3}$ and $\mathbf{p}_{2,3}$ via 
\begin{equation}\label{just1}
\begin{bmatrix}
\mathbf{p}_{1,3}\\
\mathbf{p}_{2,3}\\
\end{bmatrix}=\begin{bmatrix}
1 & 1\\
1 & 0\\
\end{bmatrix}^{-1}\left(\mathbf{p}_{3}-\begin{bmatrix}
0\\
1\\
\end{bmatrix}\mathbf{p}_{4,3}\right).
\end{equation}
The recovery of $\mathbf{p}_{1,4}$ and $\mathbf{p}_{2,4}$ can be similarly done
via
\begin{equation}\label{just2}
\begin{bmatrix}
\mathbf{p}_{1,4}\\
\mathbf{p}_{2,4}\\
\end{bmatrix}=\begin{bmatrix}
0 & 1\\
1 & 1\\
\end{bmatrix}^{-1}\left(\mathbf{p}_{4}-\begin{bmatrix}
1\\
0\\
\end{bmatrix}\mathbf{p}_{3,4}\right).
\end{equation}
We then note from \eqref{eq:53} that $\mathbf{F}_1=\mathcal{F}(\mathbf{x}_1)=[\mathbf{p}_{1,2} \ \mathbf{p}_{1,3} \ \mathbf{p}_{1,4}]$ is a codeword  
of $\mathcal{M}$, corresponding to $\mathbf{x}_{1}$, and
its second and third columns are just recovered via \eqref{just1} and \eqref{just2}.
By equating the second and the third columns of $\mathbf{F}_1$ 
with \eqref{eq:74},  
the recovery of $\mathbf{x}_{1}$ is done.
We can similarly recover $\mathbf{x}_2$ by using
the recovered $\mathbf{p}_{2,3}$ and $\mathbf{p}_{2,4}$ in \eqref{just1} and \eqref{just2}. 
The recovery of the two erased nodes is thus completed.

\subsection{Construction of MUB codes with the smallest code redundancy}\label{sec:6b}

We  continue to propose a construction of $(n,k,\mathbf{m})$ MUB codes with the smallest code redundancy, and denote the code to be constructed as $\mathcal{C}_{\text{U}}$ for notational convenience.
This can be considered a generalization of the code construction in the previous subsection.

For the construction of $\mathcal{C}_{\text{U}}$,
we require $\mathbf{x}_{i} \in \mathbb{F}_{q}^{m_i}$ and $\mathbf{p}_{j} \in \mathbb{F}_{q}^{p_j}$ with $\{p_j\}_{j\in[n]}$ specified in \eqref{eq:x54}
for $i,j \in [n]$. The construction of $\{\mathbf{A}_{i,j}\}_{i \neq j \in [n]}$ and 
$\{\mathbf{B}_{i,j}\}_{i \neq j \in [n]}$ associated with $\mathcal{C}_{\text{U}}$
are then addressed as follows.

First, for $i\neq j\in[n]$, we construct $\mathbf{A}_{i,j}$ of dimension $\frac {m_i}k\times m_i$. For each $i\in[n]$, choose an $(n-1,k)$ MDS array code $\mathcal{M}_i$ over $\mathbb{F}_{q}$ with encoding function $\mathcal{F}_i: \mathbb{F}_{q}^{m_i} \longrightarrow \mathbb{F}_{q}^{\frac{m_i}{k} \times (n-1)}$, where $\frac{m_i}{k}$ is the number of rows of the MDS array code, and $m_i$ is the number of data symbols 
in each row.
Denote $\mathbf{F}_{i}\triangleq \mathcal{F}_i(\mathbf{x}_{i})$. 
Then, 
$\{\mathbf{p}_{i,j}\}_{i\neq j \in [n]}$ defined 
in \eqref{eq:gi}, as well as $\{\mathbf{A}_{i,j}\}_{i \neq j \in [n]}$, 
can be characterized via
\begin{equation}
\mathbf{p}_{i,[(i+j-1)\bmod n]+1}=\mathbf{A}_{i,[(i+j-1)\bmod n]+1}\mathbf{x}_{i}=(\mathbf{F}_{i})_{j}\quad\forall j\in[n].
\end{equation}
This indicates that 
\begin{equation}\label{eq:x83}
\mathbf{F}_{i}=[\mathbf{p}_{i,i+1} \ \dots \ \mathbf{p}_{i,n} \ \mathbf{p}_{i,1} \ \dots \ \mathbf{p}_{i,i-1}] \quad \forall i \in [n].
\end{equation}

Next, for $i\neq j\in[n]$, we construct $\mathbf{B}_{i,j}$ of dimension $p_j\times \frac{m_i}{k}$. Choose a $p_j \times \sum_{i\in[n]\setminus\{j\}}\frac{m_i}{k} $ matrix $\mathbf{V}_j$ over $\mathbb{F}_{q}$ such that arbitrary selection of $p_j$ columns of $\mathbf{V}_j$ form an invertible matrix. 
We then get $\{\mathbf{B}_{i,j}\}_{i\neq j\in[n]}$ from 
\begin{equation}\label{eq:VVV}
\mathbf{V}_{j}=\begin{bmatrix}
\mathbf{B}_{j+1,j} & \dots & \mathbf{B}_{n,j} & \mathbf{B}_{1,j} & \dots & \mathbf{B}_{j-1,j}\\  
\end{bmatrix} \quad \forall j\in[n].
\end{equation}
Thus, we can obtain from \eqref{eq:fi} that
\begin{equation}\label{eq:pjv'}
\mathbf{p}_{j}=\mathbf{V}_{j}\,\begin{bmatrix}
\mathbf{p}_{j+1,j}^{\Tr} & \dots & \mathbf{p}_{n,j}^{\Tr} & \mathbf{p}_{1,j}^{\Tr} & \dots & \mathbf{p}_{j-1,j}^{\Tr}\\  
\end{bmatrix}^{\Tr}. 
\end{equation}

We now prove the code so constructed is an MUB code with the smallest code redundancy.

\begin{theorem}
$\mathcal{C}_{\text{U}}$ is an $(n,k,\mathbf{m})$ MUB code with the smallest code redundancy.
\end{theorem}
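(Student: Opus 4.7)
The plan is to mirror the proof of Theorem~\ref{th:x6}, generalizing each step from the regular case $\mathbf{m}=m\one$ to arbitrary $\mathbf{m}$ subject to $k\mid m_i$. Three properties must be verified: (a) that $\mathcal{C}_{\text{U}}$ really is an $(n,k,\mathbf{m})$ irregular array code, i.e., every codeword is recoverable from any $k$ columns $\mathbf{C}_{\bar{\mathcal{E}}}$; (b) that the update bandwidth equals $\gamma_{\min}$; and (c) that the code redundancy equals $R_{\sma}$.

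Properties (b) and (c) should drop out almost immediately from the parameters of the construction. Each $\mathbf{A}_{i,j}$ is of dimension $\frac{m_i}{k}\times m_i$ and, being a sub-block of the encoder of the MDS array code $\mathcal{M}_i$, has full row rank; hence $\gamma_{i,j}=\rank(\mathbf{M}_{i,j})=\frac{m_i}{k}$ for every $i\neq j$. Since $k\mid m_i$, this assignment coincides with \eqref{eq:sl1}, which by Theorem~\ref{the:2} is the unique minimizer achieving $\gamma_{\min}$. For the redundancy, each $\mathbf{p}_j\in\mathbb{F}_q^{p_j}$ with $p_j$ prescribed by \eqref{eq:x54}, so $\sum_j p_j=R_{\sma}$ by Theorem~\ref{th:x4}.

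The bulk of the work lies in (a). Given any $\bar{\mathcal{E}}\subset[n]$ with $|\bar{\mathcal{E}}|=k$, I first compute $\{\mathbf{p}_{\bar{e}_i,\bar{e}_j}\}_{i\neq j\in[k]}$ directly from $\mathbf{X}_{\bar{\mathcal{E}}}$ via $\mathbf{p}_{\bar{e}_i,\bar{e}_j}=\mathbf{A}_{\bar{e}_i,\bar{e}_j}\mathbf{x}_{\bar{e}_i}$. For each $j\in[k]$, moving the known terms to the left of \eqref{eq:fi} yields, as in \eqref{eq:56},
\begin{equation*}
\mathbf{p}_{\bar{e}_j}-\sum_{i\in[k]\setminus\{j\}}\mathbf{B}_{\bar{e}_i,\bar{e}_j}\mathbf{p}_{\bar{e}_i,\bar{e}_j}
=\begin{bmatrix}\mathbf{B}_{e_1,\bar{e}_j}&\cdots&\mathbf{B}_{e_{n-k},\bar{e}_j}\end{bmatrix}
\begin{bmatrix}\mathbf{p}_{e_1,\bar{e}_j}^{\Tr}&\cdots&\mathbf{p}_{e_{n-k},\bar{e}_j}^{\Tr}\end{bmatrix}^{\Tr}.
\end{equation*}
The coefficient matrix on the right is a $p_{\bar{e}_j}\times\sum_{i=1}^{n-k}\frac{m_{e_i}}{k}$ sub-block of $\mathbf{V}_{\bar{e}_j}$ (the block indexed by $\mathcal{E}$). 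By~\eqref{eq:x54} together with the defining inequality \eqref{eq:x56}, $p_{\bar{e}_j}\geq\frac 1k\sum_{i\in\mathcal{E}}m_{e_i}$, so this sub-block is tall (or square), and since any $p_{\bar{e}_j}$ columns of $\mathbf{V}_{\bar{e}_j}$ are linearly independent by construction, it has full column rank. Thus $\{\mathbf{p}_{e_i,\bar{e}_j}\}_{i\in[n-k]}$ can be solved by left multiplication with a left inverse. Ranging $j$ over $[k]$ produces $k$ columns of each $\mathbf{F}_{e_i}$ in \eqref{eq:x83}, and the MDS decoder of $\mathcal{M}_{e_i}$ then yields $\mathbf{x}_{e_i}$ for every $i\in[n-k]$.

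The main obstacle is the column-rank argument in the last step: one must see that the smallest-redundancy choice \eqref{eq:x54} is in fact \emph{tight enough} to leave each local coefficient matrix $[\mathbf{B}_{e_1,\bar{e}_j}\ \cdots\ \mathbf{B}_{e_{n-k},\bar{e}_j}]$ with enough rows to resolve the unknowns for \emph{every} erasure pattern $\mathcal{E}$ simultaneously. The bookkeeping reduces precisely to \eqref{eq:x56}, which was derived by taking the maximum over $\mathcal{E}$, so the construction threshold and the decoding threshold meet. Once this is in hand, the MDS property of the codes $\{\mathcal{M}_i\}_{i\in[n]}$ closes the argument, and the tightness of \eqref{eq:x54} combined with $\gamma_{i,j}=m_i/k$ certifies both optimality claims.
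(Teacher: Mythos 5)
Your proposal is correct and follows essentially the same route as the paper's proof of this theorem: it reduces (b)~and~(c) to the parameter identities $\gamma_{i,j}=m_i/k$ and $p_j$ as in \eqref{eq:x54}, and settles (a) by moving the known terms $\{\mathbf{B}_{\bar e_i,\bar e_j}\mathbf{p}_{\bar e_i,\bar e_j}\}$ to the left in \eqref{eq:fi}, observing that $p_{\bar e_j}\geq\frac1k\sum_{i\in\mathcal{E}}m_{e_i}$ (the paper's \eqref{eq:118pn}) so that the coefficient sub-block of $\mathbf{V}_{\bar e_j}$ is tall and of full column rank, and then invoking the MDS decoder of $\mathcal{M}_{e_i}$ on the recovered $\{\mathbf{p}_{e_i,\bar e_j}\}_{j\in[k]}$. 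The only cosmetic difference is that you explicitly trace the row/column count back to the $\max$ over $\mathcal{E}$ in \eqref{eq:x56}, which the paper leaves implicit.
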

\begin{proof} 
Similar to the proof of Theorem~\ref{th:x6}, 
the substantiation of this theorem requires verifying two properties:
$i)$ $\mathcal{C}_{\text{U}}$ is an $(n,k,\textbf{m})$ 
array code, and $ii)$ $\mathcal{C}_{\text{U}}$ achieves $R_{\sma}$ and 
$\gamma_{\min}$.

First, we justify $i)$, i.e., $\mathcal{C}_{\text{U}}$ 
satisfying that 
given any set $\mathcal{E} \subset [n]$ with $|\mathcal{E}|=n-k$, the codeword $\mathbf{C}$ of $\mathcal{C}_{\text{U}}$ can be reconstructed from $\mathbf{C}_{\bar{\mathcal{E}}}$.
When $\mathbf{C}_{\bar{\mathcal{E}}}$
is given, both $\mathbf{X}_{\bar{\mathcal{E}}}$ and $\mathbf{P}_{\bar{\mathcal{E}}}$ are known, and so are $\{\mathbf{p}_{\bar{e}_{i},\bar{e}_{j}}\}_{i\neq j\in[k]}$
according to \eqref{eq:gi}. We can then establish from \eqref{eq:fi} that
\begin{equation}\label{eq:x86}
\begin{aligned}
&\mathbf{p}_{\bar{e}_{j}}-\sum_{i=1,i\neq j}^{k}{\mathbf{B}_{\bar{e}_{i},\bar{e}_{j}}\mathbf{p}_{\bar{e}_{i},\bar{e}_{j}}}
=\begin{bmatrix}
\mathbf{B}_{e_{1},\bar{e}_{j}} & \dots & \mathbf{B}_{e_{n-k},\bar{e}_{j}}
\end{bmatrix}
\begin{bmatrix}
\mathbf{p}_{e_{1},\bar{e}_{j}}\\
\vdots\\
\mathbf{p}_{e_{n-k},\bar{e}_{j}}\\
\end{bmatrix} \quad \forall j \in [k].
\end{aligned}
\end{equation}
According to \eqref{eq:x54}, we have 
\begin{equation}\label{eq:118pn}
\row([\mathbf{B}_{e_{1},\bar{e}_{j}} \ \dots \ \mathbf{B}_{e_{n-k},\bar{e}_{j}}])=p_{j} \geq \sum_{i \in [n-k]}{\frac{m_{e_{i}}}{k}} =\sum_{i \in [n-k]}{\col(\mathbf{B}_{e_{i},\bar{e}_{j}})}.
\end{equation}
Since 
any $p_j$ columns of $\mathbf{V}_j$, 
as defined in \eqref{eq:VVV}, forms an invertible matrix,
we obtain from \eqref{eq:118pn} that $[\mathbf{B}_{e_{1},\bar{e}_{j}}$ $\dots$ $ \mathbf{B}_{e_{n-k},\bar{e}_{j}}]$ is of full column rank, and 
hence $\{\mathbf{p}_{e_{i},\bar{e}_{j}}\}_{i \in [n-k], j\in [k]}$ 
can be solved via~\eqref{eq:x86}.
With the knowledge of $k$ columns $\{\mathbf{p}_{e_{i},\bar{e}_{j}}\}_{j\in [k]}$ of $\mathbf{F}_{e_i}$
in \eqref{eq:x83}, 
we can recover  $\mathbf{x}_{e_{i}}$ 
via the decoding algorithm of the $(n-1,k)$ MDS array code $\mathcal{M}_{e_i}$. 
By this procedure, $\{\mathbf{x}_{i}\}_{i \in [n]}$ can all be recovered.

Next, we verify $ii)$. From~\eqref{eq:x83}, we have $\mathbf{p}_{i,j} \in \mathbb{F}_{q}^{\frac{m_i}{k}}$ and hence $\gamma_{i,j}=\frac{m_i}{k}$, 
which leads to $\gamma=\gamma_{\min}$ as pointed out in \eqref{eq:sl1}. In addition, \eqref{eq:pjv'} shows
$\{p_{j}\}_{j\in[n]}$ follows \eqref{eq:x54}, and hence $R_{\sma}$ is achieved 
as addressed in  Theorem \ref{th:x4}.
The justification of the two required properties of $\mathcal{C}_{\text{U}}$ is thus completed.
\end{proof}

We demonstrate that the $(4,2,\mathbf{m}=[4 \ 2 \ 2 \ 0]^{\Tr})$ MUB code in Fig.~\ref{fig:fig3} 
can be constructed via the proposed procedure.
First, with $\mathbf{x}_{1}=[x_{1,1} \dots x_{1,4}]^{\Tr}$, 
$\mathcal{M}_{1}$ is chosen as a $(3,2)$ MDS array code,
which encodes $\mathbf{x}_1$ into
\begin{equation}
\mathbf{F}_{1}=\begin{bmatrix}
x_{1,1} & x_{1,3} & x_{1,1}+x_{1,3}\\
x_{1,2} & x_{1,4} & x_{1,2}+x_{1,4}\\
\end{bmatrix}=\begin{bmatrix}\mathbf{p}_{1,2} & \mathbf{p}_{1,3} &\mathbf{p}_{1,4}
\end{bmatrix}.
\end{equation} 
For $i=2$ and $3$,  
$\mathcal{M}_{i}$ is chosen 
to be a $(3,2)$ parity check code over $\mathbb{F}_{q}$, as 
the one in \eqref{eq:74}.
Since $m_4=0$, $\{\mathbf{p}_{4,j}\}_{j\in[4]\setminus\{4\}}$
are null vectors.
The resulting $\{\mathbf{p}_{i,j}\}_{i\neq j\in[n]}$ 
are listed in Table~\ref{fig:fig5}. 

\begin{table}[b]
\caption{\label{fig:fig5} $\{\mathbf{p}_{i,j}\}_{i\neq j\in[n]}$ of the MUB code presented in Fig.~\ref{fig:fig3}, where the element in the $i$-th row and the $j$-th column is $\mathbf{p}_{i,j}$.}
\centering
\begin{tabular}{|c|c|c|c|}
\hline
null & $\Bigg[\begin{array}{c}
x_{1,1}\\
x_{1,2}\\
\end{array}\Bigg]$
& $\Bigg[\begin{array}{c}
x_{1,3}\\
x_{1,4}\\
\end{array}\Bigg]$
& $\Bigg[\begin{array}{c}
x_{1,1}+x_{1,3}\\
x_{1,2}+x_{1,4}\\
\end{array}\bigg]$\\[3mm]\hline
$x_{2,1}+x_{2,2}$ & null & $x_{2,1}$ & $x_{2,2}$ \\
\hline
$x_{3,2}$ & $x_{3,1}+x_{3,2}$ & null & $x_{3,1}$ \\
\hline
null&null&null&null\\\hline
\end{tabular}
\end{table} 

Next, we obtain from~\eqref{eq:x54} that
$p_{1}=2$ and $p_{2}=p_{3}=p_{4}=3$, and specify
\begin{equation}\label{eq:v4}
\mathbf{V}_{1}=\begin{bmatrix}
1 & 0\\
0 & 1\\
\end{bmatrix}, \ \mathbf{V}_{2}=\mathbf{V}_{3}=\begin{bmatrix}
1 & 0 & 0\\
0 & 1 & 0\\
0 & 0 & 1\\
\end{bmatrix},
\quad\text{and}\quad
\mathbf{V}_{4}=\begin{bmatrix}
1 & 0 &0 &1 \\
0 & 1 & 0 & 1\\
0 & 0 & 1 & 1\\
\end{bmatrix},
\end{equation}
where the selection of any $p_i$ columns from $\mathbf{V}_i$ forms an invertible matrix.
By~\eqref{eq:pjv'} and Table~\ref{fig:fig5}, we have 
\begin{equation}\label{eq:p14}
\begin{aligned}
\mathbf{p}_{1}=\begin{bmatrix}
x_{2,1}+x_{2,2}\\
x_{3,2}\\
\end{bmatrix}, \quad
\mathbf{p}_{2}=\begin{bmatrix}
x_{1,1}\\
x_{1,2}\\
x_{3,1}+x_{3,2}\\
\end{bmatrix}, \quad
\mathbf{p}_{3}=\begin{bmatrix}
x_{1,3}\\
x_{1,4}\\
x_{2,1}\\
\end{bmatrix},\quad\text{and}\quad
\mathbf{p}_{4}
=\begin{bmatrix}
x_{1,1}+x_{1,3}+x_{3,1}\\
x_{1,2}+x_{1,4}+x_{3,1}\\
x_{2,2}+x_{3,1}\\
\end{bmatrix},
\end{aligned}
\end{equation}
as presented in Fig.~\ref{fig:fig3}. 

Based on this example, the systematic recovery of erased nodes 
can be demonstrated as follows.
Suppose nodes~$1$ and~$2$ are erased. Then, through 
\eqref{eq:pjv'},~\eqref{eq:x86} and~\eqref{eq:v4}, we have 
\begin{equation}
\mathbf{p}_{3}=\begin{bmatrix}
\mathbf{p}_{1,3}\\
\mathbf{p}_{2,3}\\
\end{bmatrix}, \quad\text{and}\quad \mathbf{p}_{4}-\begin{bmatrix}
1\\
1\\
1\\
\end{bmatrix} \, \mathbf{p}_{3,4}=\begin{bmatrix}
1 & 0 & 0\\
0 & 1 & 0\\
0 & 0 & 1\\
\end{bmatrix} \, \begin{bmatrix}
\mathbf{p}_{1,4}\\
\mathbf{p}_{2,4}\\
\end{bmatrix}.
\end{equation}
We can thus obtain $\mathbf{p}_{1,3}$, $\mathbf{p}_{1,4}$, $\mathbf{p}_{2,3}$ and $\mathbf{p}_{2,4}$. 
By noting $\mathbf{p}_{1,3}=[x_{1,3} \quad x_{1,4}]^{\Tr}$ and $\mathbf{p}_{1,4}=[x_{1,1}+x_{1,3} \quad x_{1,2}+x_{1,4}]^{\Tr}$,
the recovery of $\mathbf{x}_{1}$ is done
via the erasure correcting of $\mathcal{M}_{1}$.
We can similarly recover $\mathbf{x}_{2}$ from $\mathbf{p}_{2,3}$ and $\mathbf{p}_{2,4}$. 
The recovery of the two erased nodes is therefore completed.

\section{Update complexity of MR-MUB codes}\label{sec:UC}

The update complexity of an array code, denoted as $\theta$, is defined as the average number of parity symbols affected by updating a single data symbol \cite{blaum1996}.
For an $(n,k,\mathbf{m})$ irregular array codes, a definition-implied lower bound for update complexity is $\theta \geq n-k$. This lower bound can be easily justified by contradiction.
If $\theta<n-k$, then at most $(n-k-1)$ nodes are affected when updating a data symbol, which leads to a contradiction that this data symbol cannot be reconstructed by the remaining $k$ unaffected nodes.

Previous results on update complexity indicate that
the lower bound $n-k$ is not attainable by $(n,k)$ horizontal MDS array codes with $1<k<n-1$ \cite{blaum1996}. 
Later, Xu and Bruck \cite{xu1999:b} introduced an $(n,k)$ vertical MDS array code 
that can achieve $\theta=n-k$. Since 
the proposed $(n,k,m\,\one)$ MR-MUB codes in the previous section are a class of vertical MDS array codes, a query that naturally follows
is whether or not the update complexity of MR-MUB codes can reach the definition-implied lower bound. Unfortunately, 
we found the answer is negative under $k > 1$,
and will show in Theorem~\ref{th:x5} that the update complexity of MR-MUB codes is lower-bounded by $n-k+\frac{k-1}k$.



In order to facilitate the presentation of the result in Theorem \ref{th:x5},
five lemmas are addressed first.
The first lemma indicates it suffices to consider the MR-MUB codes 
with $\{\mathbf{M}_{i,i}\}_{i\in[n]}$ being zero matrices; hence, we do not need to consider $\{\mathbf{M}_{i,i}\}_{i\in[n]}$ in the calculation of update complexity (cf.~Lemma \ref{le:cii}). The second lemma shows that for the determination 
of a lower bound of update complexity, we can focus
on the decomposition of 
$\mathbf{M}_{i,j}=\mathbf{B}_{i,j}'\mathbf{A}_{i,j}'$ with $\mathbf{A}_{i,j}'$
containing an $\gamma_{i,j}\times\gamma_{i,j}$ identity submatrix.
As a result, $\mathbf{B}_{i,j}'$ is a submatrix of $\mathbf{M}_{i,j}$ 
and the column weights of $\mathbf{M}_{i,j}$ are lower-bounded 
by the column weights of $\mathbf{B}_{i,j}'$ 
(cf.~Lemma~\ref{le:cij}). The next two lemmas
then study the column weights of general  $\mathbf{B}_{i,j}$
that is not necessarily a submatrix of $\mathbf{M}_{i,j}$ (cf.~Lemmas~\ref{le:invert} and \ref{le:bij}).
The last lemma accounts for 
the number of non-zero columns in $\mathbf{M}_i^{(\ell)}\triangleq[(\mathbf{M}_{i,1})_{\ell} \ \dots (\mathbf{M}_{i,i-1})_{\ell} \quad (\mathbf{M}_{i,i+1})_{\ell} \dots \ (\mathbf{M}_{i,n})_{\ell}]$, where $(\mathbf{M}_{i,j})_{\ell}$ denotes the $\ell$-th column of matrix $\mathbf{M}_{i,j}$,


\begin{lemma}\label{le:cii}
For any $(n,k,\mathbf{m})$ irregular array code $\mathcal{C}$
with construction matrices $\{\mathbf{M}_{i,j}\}_{i,j\in[n]}$,
we can construct another $(n,k,\mathbf{m})$ irregular array code $\mathcal{C}'$
with $\{\mathbf{M}'_{i,i}=[\mathbf{0}]\}_{i\in[n]}$
such that both codes have the same code redundancy and update bandwidth.
\end{lemma}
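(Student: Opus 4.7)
The plan is to define $\mathcal{C}'$ by absorbing the self-encoding term $\mathbf{M}_{i,i}\mathbf{x}_i$ into the locally available data at node $i$, thereby zeroing out the diagonal construction matrices without altering either the number of parity symbols or the cross-node dependencies. Concretely, for every $i \in [n]$, I would define the modified parity vector
\begin{equation*}
\mathbf{p}'_i \triangleq \mathbf{p}_i - \mathbf{M}_{i,i}\mathbf{x}_i = \sum_{j \in [n] \setminus \{i\}} \mathbf{M}_{j,i}\mathbf{x}_j,
\end{equation*}
so that the new construction matrices are $\mathbf{M}'_{i,i}=[\mathbf{0}]$ and $\mathbf{M}'_{j,i}=\mathbf{M}_{j,i}$ for $j\neq i$. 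The corresponding codeword column $\mathbf{c}'_i=[\mathbf{x}_i^{\Tr}\;(\mathbf{p}'_i)^{\Tr}]^{\Tr}$ preserves $m_i$ and $p_i$, so the number of parity symbols in each node, and hence the code redundancy $R=\sum_{i=1}^n p_i$, is unchanged.

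Next I would verify that the update bandwidth is preserved. Since the definition \eqref{eq:gamaij} of $\gamma_{i,j}$ only involves $i\neq j$, and $\mathbf{M}'_{i,j}=\mathbf{M}_{i,j}$ for every such pair, Theorem~\ref{th:0} immediately gives $\gamma'_{i,j}=\rank(\mathbf{M}'_{i,j})=\rank(\mathbf{M}_{i,j})=\gamma_{i,j}$, and therefore $\gamma'=\gamma$ by \eqref{eq:defgamma}. The remaining task is to confirm that $\mathcal{C}'$ is still an $(n,k,\mathbf{m})$ irregular array code, in the sense that retrieval of all data symbols is possible from any $k$ columns but not from any $k-1$ columns. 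This follows from the key observation that the map $\mathbf{c}_i \leftrightarrow \mathbf{c}'_i$ is an invertible local transformation: given $\mathbf{c}'_i$, one recovers $\mathbf{c}_i$ via $\mathbf{p}_i=\mathbf{p}'_i+\mathbf{M}_{i,i}\mathbf{x}_i$, and conversely. Consequently, for every $\bar{\mathcal{E}}\subset[n]$, the columns $\{\mathbf{c}_i\}_{i\in\bar{\mathcal{E}}}$ and $\{\mathbf{c}'_i\}_{i\in\bar{\mathcal{E}}}$ determine each other, and any decoder for $\mathcal{C}$ induces a decoder for $\mathcal{C}'$ with identical input and output, and vice versa.

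There is no substantive obstacle here; the lemma is essentially a gauge-fixing statement. The only subtlety worth stating carefully is that the invertible local transformation between $\mathbf{c}_i$ and $\mathbf{c}'_i$ ensures the two codes have exactly the same set of correctable erasure patterns, so the parameter $k$ is simultaneously sufficient and tight for both $\mathcal{C}$ and $\mathcal{C}'$, and the dimension vector $\mathbf{m}$ is preserved unchanged. This closes the verification of the three required properties (code redundancy, update bandwidth, and the $(n,k,\mathbf{m})$ structure), completing the construction.
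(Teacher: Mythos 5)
Your proposal is correct and follows essentially the same route as the paper's own proof: defining $\mathbf{p}'_i=\mathbf{p}_i-\mathbf{M}_{i,i}\mathbf{x}_i$ so that the map $\mathbf{c}_i\mapsto\mathbf{c}'_i$ is an invertible column-local transformation, then reading off the preservation of $\gamma_{i,j}$ (via Theorem~\ref{th:0} and $\mathbf{M}'_{i,j}=\mathbf{M}_{i,j}$ for $i\neq j$), of each $p_i$, and of the $(n,k,\mathbf{m})$ erasure-recovery property.
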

\begin{proof}
Let the construction matrices of $\mathcal{C}'$ be defined as
\begin{equation}\label{eq:cij1}
\mathbf{M}_{i,j}'=\begin{cases}
\mathbf{M}_{i,j} & i\neq j;\\
[\mathbf{0}] & i=j.
\end{cases}
\end{equation}
Then, there exists an invertible mapping  
between codewords of 
$\mathcal{C}'$ and $\mathcal{C}$, i.e., 
\begin{equation}\label{eq:con1}
\mathbf{c}_{j}'=\begin{bmatrix}
\mathbf{x}_{j}\\
\mathbf{p}_{j}'
\end{bmatrix}
=\begin{bmatrix}
\mathbf{x}_{j}\\
\mathbf{p}_{j}-\mathbf{M}_{j,j}\mathbf{x}_j
\end{bmatrix}=
\begin{bmatrix}
\mathbf{I}&[\mathbf{0}]\\
-\mathbf{M}_{j,j}&\mathbf{I}
\end{bmatrix}
\begin{bmatrix}
\mathbf{x}_j\\
\mathbf{p}_j
\end{bmatrix}
=
\begin{bmatrix}
\mathbf{I}&[\mathbf{0}]\\
-\mathbf{M}_{j,j}&\mathbf{I}
\end{bmatrix}
\mathbf{c}_j
\quad\text{for } j \in [n],
\end{equation}
where $\mathbf{I}$ denotes an identity matrix of proper size.
A consequence of \eqref{eq:con1} is that all data symbols can be retrieved 
by accessing any $k$ columns of the corresponding codeword of $\mathcal{C}$ if, and only if,
the same can be done by accessing any $k$ columns of the corresponding codeword of $\mathcal{C}'$.
As $\mathcal{C}$ is an $(n,k,\mathbf{m})$ irregular array code,
we confirm that $\mathcal{C}'$ is also an $(n,k,\mathbf{m})$ irregular array code.
Since $\gamma_{i,j}'=\rank(\mathbf{M}_{i,j}')=\rank(\mathbf{M}_{i,j})=\gamma_{i,j}$ with $i\neq j\in[n]$,
the update bandwidth of $\mathcal{C}'$ 
remains the same as that of $\mathcal{C}$ according to \eqref{eq:defgamma}.
The relation of $\mathbf{p}_j'=\mathbf{p}_j-\mathbf{M}_{j,j}\mathbf{x}_j$
indicates $p_j'=\row(\mathbf{p}_j')=\row(\mathbf{p}_j)=p_j$ for $j\in[n]$, 
confirming $\mathcal{C}'$ and $\mathcal{C}$ have the same code redundancy.
The lemma is therefore substantiated.
\end{proof}

For an $(n,k,\mathbf{m})$ irregular array codes, 
the number of symbols affected by the update of the $\ell$-th symbol in $\mathbf{x}_{i}$ is 
\begin{equation}
\theta_{i}^{(\ell)}=\sum_{j \in [n]\setminus i}{\wt((\mathbf{M}_{i,j})_{\ell})},
\end{equation}
 and we can now omit $\mathbf{M}_{i,i}$ due to Lemma \ref{le:cii}. 
The update complexity $\theta$ of an irregular array code  
is therefore given by
\begin{equation}\label{eq:the}
\theta=\frac 1B \sum_{i\in[n]}\sum_{\ell \in [m_{i}]}{\theta_{i}^{(\ell)}}.
\end{equation}
Since the update complexity is only related to the column weights of construction matrices, the next lemma provides a structure to be considered in the calculation of $\theta$ in \eqref{eq:the}.

\begin{lemma}\label{le:cij}
There exists a full rank decomposition 
of construction matrix $\mathbf{M}_{i,j}=\mathbf{B}_{i,j}'\mathbf{A}_{i,j}'$ such that $\mathbf{A}_{i,j}'$ contains a $\gamma_{i,j} \times \gamma_{i,j}$ identity submatrix.
\end{lemma}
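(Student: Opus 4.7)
The plan is to construct the desired decomposition explicitly by choosing a maximal set of linearly independent columns of $\mathbf{M}_{i,j}$ as the columns of $\mathbf{B}_{i,j}'$. Since $\rank(\mathbf{M}_{i,j})=\gamma_{i,j}$, there exist column indices $s_1<s_2<\dots<s_{\gamma_{i,j}}$ in $[m_i]$ such that the corresponding columns $(\mathbf{M}_{i,j})_{s_1},\ldots,(\mathbf{M}_{i,j})_{s_{\gamma_{i,j}}}$ of $\mathbf{M}_{i,j}$ are linearly independent. I would let $\mathbf{B}_{i,j}'$ be the $p_j\times\gamma_{i,j}$ matrix whose $t$-th column equals $(\mathbf{M}_{i,j})_{s_t}$; by construction $\rank(\mathbf{B}_{i,j}')=\gamma_{i,j}$, so $\mathbf{B}_{i,j}'$ has full column rank.

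Next, since the $\gamma_{i,j}$ chosen columns form a basis of the column space of $\mathbf{M}_{i,j}$, every column $(\mathbf{M}_{i,j})_\ell$ can be written uniquely as a linear combination $\sum_{t=1}^{\gamma_{i,j}} a_{t,\ell}\,(\mathbf{B}_{i,j}')_t$. I would then define $\mathbf{A}_{i,j}'\in\mathbb{F}_q^{\gamma_{i,j}\times m_i}$ by $(\mathbf{A}_{i,j}')_{t,\ell}=a_{t,\ell}$, so that by definition $\mathbf{M}_{i,j}=\mathbf{B}_{i,j}'\mathbf{A}_{i,j}'$. Observing that when $\ell=s_{t'}$ the representation is trivial, i.e., $a_{t,s_{t'}}=\delta_{t,t'}$, the submatrix of $\mathbf{A}_{i,j}'$ formed by the columns indexed by $\{s_1,\dots,s_{\gamma_{i,j}}\}$ is precisely the $\gamma_{i,j}\times\gamma_{i,j}$ identity matrix.

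Finally, to confirm this is a full rank decomposition, I would note that $\mathbf{A}_{i,j}'$ contains a $\gamma_{i,j}\times\gamma_{i,j}$ identity submatrix and therefore has rank at least $\gamma_{i,j}$; together with $\row(\mathbf{A}_{i,j}')=\gamma_{i,j}$, this yields $\rank(\mathbf{A}_{i,j}')=\gamma_{i,j}$. Combined with $\rank(\mathbf{B}_{i,j}')=\gamma_{i,j}$ established earlier, the decomposition $\mathbf{M}_{i,j}=\mathbf{B}_{i,j}'\mathbf{A}_{i,j}'$ is a full rank decomposition with the required identity-submatrix property. There is no substantive obstacle here; the statement is essentially the standard fact that any rank-$r$ matrix admits a full rank factorization in which the right factor can be taken to contain an $r\times r$ identity submatrix, obtained by picking any basis of independent columns from the original matrix itself.
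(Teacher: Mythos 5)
Your proof is correct, and it takes a genuinely different route from the paper's. The paper starts from the full-rank decomposition $\mathbf{M}_{i,j}=\mathbf{B}_{i,j}\mathbf{A}_{i,j}$ already established in Section~\ref{sec:ub}: since $\mathbf{A}_{i,j}$ has full row rank, there is an invertible $\mathbf{R}_{i,j}$ with $\mathbf{A}_{i,j}'=\mathbf{R}_{i,j}\mathbf{A}_{i,j}$ containing an identity submatrix, and one then sets $\mathbf{B}_{i,j}'=\mathbf{B}_{i,j}\mathbf{R}_{i,j}^{-1}$. You instead build the factorization from scratch by selecting a column basis of $\mathbf{M}_{i,j}$ as $\mathbf{B}_{i,j}'$ and reading off the coefficient matrix as $\mathbf{A}_{i,j}'$. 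Both are standard linear-algebra arguments and equally short, but your construction has a concrete advantage in context: it makes it immediate that $\mathbf{B}_{i,j}'$ is literally a subset of columns of $\mathbf{M}_{i,j}$, which is precisely the property invoked right after the lemma (that the column weights of $\mathbf{M}_{i,j}$ lower-bound those of $\mathbf{B}_{i,j}'$). In the paper's version this fact is a small extra observation, namely that the columns of $\mathbf{M}_{i,j}$ indexed by the identity-submatrix positions of $\mathbf{A}_{i,j}'$ recover the columns of $\mathbf{B}_{i,j}'$. The paper's approach, on the other hand, is more modular in that it reuses the existence result from Section~\ref{sec:ub} rather than rebuilding the decomposition.
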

\begin{proof}
The existence of a full rank decomposition $\mathbf{M}_{i,j}=\mathbf{B}_{i,j}\mathbf{A}_{i,j}$ has been confirmed in Section \ref{sec:ub}.
As $\mathbf{A}_{i,j}$ is with full row rank, there exists an invertible matrix $\mathbf{R}_{i,j}$ such that $\mathbf{A}_{i,j}'=\mathbf{R}_{i,j}\mathbf{A}_{i,j}$, where $\mathbf{A}_{i,j}'$ contains a $\gamma_{i,j} \times \gamma_{i,j}$ identity submatrix. We can then obtain a new full rank decomposition $\mathbf{M}_{i,j}=\mathbf{B}_{i,j}'\mathbf{A}_{i,j}'$ with $\mathbf{B}_{i,j}'=\mathbf{B}_{i,j}\mathbf{R}_{i,j}^{-1}$. 
\end{proof}

When $\mathbf{A}_{i,j}'$ contains a $\gamma_{i,j} \times \gamma_{i,j}$ identity submatrix, 
$\mathbf{B}_{i,j}'$ must be a submatrix of $\mathbf{M}_{i,j}$. 
Thus, the column weights of $\mathbf{M}_{i,j}$ are lower-bounded by the column weights of $\mathbf{B}_{i,j}'$. 

This brings up the study 
of the next two lemmas, which hold not just for a submatrix $\mathbf{B}_{i,j}'$
of $\mathbf{M}_{i,j}$ but for general full-rank decomposition $\mathbf{B}_{i,j}$.

\begin{lemma}\label{le:invert}
Given an $(n,k,m\one)$ MR-MUB code with construction matrices  
$\{\mathbf{M}_{i,j}=\mathbf{B}_{i,j}\mathbf{A}_{i,j}\}_{i\neq j\in[n]}$,
$\mathbf{B}_{\mathcal{E},j}\triangleq [\mathbf{B}_{e_{1}, j}\ \dots\ \mathbf{B}_{e_{n-k}, j}]$
is an invertible matrix for every $\mathcal{E}\subset[n]$
with $|\mathcal{E}|=n-k$ and for every $j \notin \mathcal{E}$.
\end{lemma}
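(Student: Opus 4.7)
The plan is to leverage the sandwiching of $\rank(\mathbf{M}_{\mathcal{E}})$ between Lemma~\ref{th:x2} and Lemma~\ref{le:x2} to pin down its exact value, then exploit the block structure of $\mathbf{M}_{\mathcal{E}}$ to push the rank information down to each individual block row, and finally peel off a block-diagonal $\mathbf{A}$-factor to isolate the desired invertibility of $\mathbf{B}_{\mathcal{E},j}$.

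First, I would substitute the MR-MUB parameters $m_i=m$, $p_j=\frac{(n-k)m}{k}$, and $\gamma_{i,j}=\frac{m}{k}$ (supplied by \eqref{eq:sl3} and \eqref{eq:sl2}) into the inequality chain given by the two lemmas:
\begin{equation*}
(n-k)m \;=\; \sum_{i \in \mathcal{E}} m_i \;\leq\; \rank(\mathbf{M}_{\mathcal{E}}) \;\leq\; \sum_{j \in \bar{\mathcal{E}}} \min\Bigl\{p_j,\, \sum_{i \in \mathcal{E}}\gamma_{i,j}\Bigr\} \;=\; k\cdot\frac{(n-k)m}{k} \;=\; (n-k)m.
\end{equation*}
Both ends agree, so $\rank(\mathbf{M}_{\mathcal{E}}) = (n-k)m$ exactly.

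Second, the block-row structure \eqref{eq:x14} partitions $\mathbf{M}_{\mathcal{E}}$ into $k$ horizontal block rows, each having $p_{\bar{e}_j}=\frac{(n-k)m}{k}$ rows. Thus each block row has rank at most $\frac{(n-k)m}{k}$, and for the $k$ block-row ranks to sum to the just-established value $(n-k)m$, every block row $[\mathbf{M}_{e_1,\bar{e}_j}\ \dots\ \mathbf{M}_{e_{n-k},\bar{e}_j}]$ must attain its maximum possible rank $p_{\bar{e}_j}$.

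Third, applying the full-rank decompositions $\mathbf{M}_{e_i,\bar{e}_j} = \mathbf{B}_{e_i,\bar{e}_j}\mathbf{A}_{e_i,\bar{e}_j}$ to each block, I would factor the block row as
\begin{equation*}
[\mathbf{M}_{e_1,\bar{e}_j}\ \dots\ \mathbf{M}_{e_{n-k},\bar{e}_j}] \;=\; \mathbf{B}_{\mathcal{E},\bar{e}_j}\;\mathrm{diag}\bigl(\mathbf{A}_{e_1,\bar{e}_j},\ldots,\mathbf{A}_{e_{n-k},\bar{e}_j}\bigr).
\end{equation*}
By Theorem~\ref{th:0} every $\mathbf{A}_{e_i,\bar{e}_j}$ has full row rank $\frac{m}{k}$, so this block-diagonal factor has rank $(n-k)\cdot\frac{m}{k}=p_{\bar{e}_j}$. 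Since the product itself has rank $p_{\bar{e}_j}$ and $\mathbf{B}_{\mathcal{E},\bar{e}_j}$ is a $p_{\bar{e}_j}\times p_{\bar{e}_j}$ square matrix, the standard inequality $\rank(\mathbf{X}\mathbf{Y})\leq\rank(\mathbf{X})$ forces $\rank(\mathbf{B}_{\mathcal{E},\bar{e}_j})=p_{\bar{e}_j}$, i.e., $\mathbf{B}_{\mathcal{E},\bar{e}_j}$ is invertible. Since $\bar{e}_j$ ranges over all elements outside $\mathcal{E}$ and $\mathcal{E}$ is an arbitrary $(n-k)$-subset of $[n]$, the claim follows. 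The only delicate point is the very first step: collapsing the inequality chain to a tight equality is precisely where the MDS data-recovery hypothesis (via Lemma~\ref{th:x2}) and the minimum-update-bandwidth parameters must conspire exactly; everything afterwards is routine block-matrix rank bookkeeping.
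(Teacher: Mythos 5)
Your proof is correct, and it takes a genuinely different route from the paper's. The paper argues by contradiction with an information-theoretic calculation: assuming $\rank(\mathbf{B}_{\mathcal{E},j})<\frac{(n-k)m}{k}$, it bounds $H(\mathbf{p}_j \mid \mathbf{X}_{\bar{\mathcal{E}}})$ by $\rank(\mathbf{B}_{\mathcal{E},j})$ and shows $I(\mathbf{C}_{\bar{\mathcal{E}}};\mathbf{X}_{\mathcal{E}}) < H(\mathbf{X}_{\mathcal{E}})$, contradicting the recoverability of $\mathbf{X}_{\mathcal{E}}$. You instead take a purely linear-algebraic route: you reuse Lemma~\ref{th:x2} and Lemma~\ref{le:x2} to pin $\rank(\mathbf{M}_{\mathcal{E}})$ to exactly $(n-k)m$, then squeeze the block-row subadditivity inequality $\rank(\mathbf{M}_{\mathcal{E}})\leq\sum_j\rank([\mathbf{M}_{e_1,\bar e_j}\ \dots\ \mathbf{M}_{e_{n-k},\bar e_j}])\leq (n-k)m$ to force every block row to have full row rank $p_{\bar e_j}$, and finally read off the invertibility of $\mathbf{B}_{\mathcal{E},\bar e_j}$ from the factorization $\mathbf{B}_{\mathcal{E},\bar e_j}\,\mathrm{diag}(\mathbf{A}_{e_1,\bar e_j},\ldots,\mathbf{A}_{e_{n-k},\bar e_j})$ and the rank inequality $\rank(\mathbf{XY})\leq\rank(\mathbf{X})$. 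Your version is direct rather than by contradiction, leans entirely on the already-proved rank lemmas and Theorem~\ref{th:0}, and avoids re-deriving the entropy chain; the paper's version is a bit more self-contained in that it only uses Lemma~\ref{le:x1} and the reconstruction hypothesis. Both are sound; yours is arguably tighter bookkeeping, the paper's is more in the spirit of the earlier entropy-based necessary conditions. One minor wording issue in your write-up: the block-row ranks do not literally ``sum to'' $\rank(\mathbf{M}_{\mathcal{E}})$ in general, they upper-bound it, and it is the simultaneous lower bound $(n-k)m\leq\rank(\mathbf{M}_{\mathcal{E}})$ and upper bound $\sum_j\rank(\cdot)\leq(n-k)m$ that collapse everything to equality; the inequalities you use are right, so this is only a matter of phrasing.
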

\begin{proof}
First, we note respectively from \eqref{eq:sl2} and \eqref{eq:sl3} that
$\gamma_{i,j}=\frac mk$ and $p_j=\frac{(n-k)m}k$. Hence,
$\mathbf{B}_{e_i,j}$ is an $\frac{(n-k)m}k\times\frac mk$ matrix,
implying $\mathbf{B}_{\mathcal{E},j}$ is an $\frac{(n-k)m}k\times\frac{(n-k)m}k$
square matrix. We then prove the lemma by contradiction.

Suppoe $\mathbf{B}_{\mathcal{E},j}$ is not invertible for some $\mathcal{E}$ 
with $|\mathcal{E}|=n-k$ and some 
$j\not\in\mathcal{E}$. Then, $\rank(\mathbf{B}_{\mathcal{E},j})
<\frac{(n-k)m}k$.
According to  \eqref{eq:gi} and \eqref{eq:fi}, we have
\begin{IEEEeqnarray}{rCl}\label{eq:48}
\mathbf{p}_{j}&=&\sum_{\ell \in \bar{\mathcal{E}}}{\mathbf{B}_{\ell,j}\mathbf{p}_{\ell, j}}+\sum_{i \in [n-k]}{\mathbf{B}_{e_{i},j}\mathbf{p}_{e_{i},j}}\\
&=&\sum_{\ell\in \bar{\mathcal{E}}}\mathbf{B}_{\ell,j}\mathbf{A}_{\ell,j}\mathbf{x}_\ell+\sum_{i \in [n-k]}{\mathbf{B}_{e_{i},j}\mathbf{p}_{e_{i},j}}\\
&=&[\mathbf{B}_{\bar{e}_1,j}\mathbf{A}_{\bar{e}_1,j}\ \cdots\ \mathbf{B}_{\bar{e}_k,j}\mathbf{A}_{\bar{e}_k,j}]\mathbf{X}_{\bar{\mathcal{E}}}+\mathbf{B}_{\mathcal{E},j}\mathbf{p}_{\mathcal{E},j},
\end{IEEEeqnarray}
where $\mathbf{p}_{\mathcal{E},j}\triangleq[\mathbf{p}_{e_{1}, j}^{\Tr} \ \dots \ \mathbf{p}_{e_{n-k}, j}^{\Tr}]^{\Tr}$. This implies
\begin{equation}
H(\mathbf{p}_{j} \mid \mathbf{X}_{\bar{\mathcal{E}}})=H(\mathbf{B}_{\mathcal{E},j}\mathbf{p}_{\mathcal{E},j} \mid \mathbf{X}_{\bar{\mathcal{E}}})\leq H(\mathbf{B}_{\mathcal{E},j}\mathbf{p}_{\mathcal{E},j})\leq\rank(\mathbf{B}_{\mathcal{E},j}),
\end{equation}
where the last inequality follows from Lemma~\ref{le:x1}.
We then derive based on \eqref{eq:ice} that
\begin{IEEEeqnarray}{rCl}
I(\mathbf{C}_{\bar{\mathcal{E}}} ; \mathbf{X}_{\mathcal{E}})=I(\mathbf{X}_{\mathcal{E}}; \mathbf{P}_{\bar{\mathcal{E}}} \mid \mathbf{X}_{\bar{\mathcal{E}}})&=&H(\mathbf{P}_{\bar{\mathcal{E}}} \mid \mathbf{X}_{\bar{\mathcal{E}}})\\
&\leq &\sum_{\ell \in  \bar{\mathcal{E}} \setminus \{j\}}{H(\mathbf{p}_{\ell} \mid \mathbf{X}_{\bar{\mathcal{E}}})} + H(\mathbf{p}_{j} \mid \mathbf{X}_{\bar{\mathcal{E}}})\\
&\leq & \sum_{\ell \in  \bar{\mathcal{E}} \setminus\{j\}}{H(\mathbf{p}_{\ell})}  + H(\mathbf{p}_{j} \mid \mathbf{X}_{\bar{\mathcal{E}}})\\
&\leq & \sum_{\ell \in  \bar{\mathcal{E}} \setminus\{j\}}{p_\ell}  + 
\rank(\mathbf{B}_{\mathcal{E},j})\\
&<&(n-k)m=H(\mathbf{X}_{\mathcal{E}}),\label{eq:pn135}
\end{IEEEeqnarray}
where the last strict inequality holds due to $\rank(\mathbf{B}_{\mathcal{E},j})<\frac{(n-k)m}k$.
The derivation in \eqref{eq:pn135}
indicates that $\mathbf{X}_{\mathcal{E}}$ cannot be reconstructed
from $\mathcal{C}_{\bar{\mathcal{E}}}$, 
leading to a contradiction to the definition of $(n,k,m\one)$ array codes.
\end{proof}

\begin{lemma}\label{le:bij}
For an $(n,k,m\one)$ MR-MUB code with construction matrices  
$\{\mathbf{M}_{i,j}$ $=$ $\mathbf{B}_{i,j}\mathbf{A}_{i,j}\}_{i\neq j\in[n]}$,  
\begin{equation}\label{eq:pn136}
\mathbf{B}_{j}\triangleq [\mathbf{B}_{1,j} \ \dots \mathbf{B}_{j-1,j} \ \mathbf{B}_{j+1,j} \ \dots \mathbf{B}_{n,j}]
\end{equation}
must contain at least $\frac{(k-1)m}{k}$ columns whose weight is no less than $2$.
\end{lemma}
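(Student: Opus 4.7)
The plan is to prove Lemma~\ref{le:bij} through a counting argument that exploits Lemma~\ref{le:invert}. First, I will translate the target into a more convenient form: since $\mathbf{B}_j$ has $p_j=\frac{(n-k)m}{k}$ rows and $(n-1)\cdot\frac{m}{k}=\frac{(n-1)m}{k}$ columns, the claim is equivalent to showing that at most $\frac{(n-k)m}{k}$ columns of $\mathbf{B}_j$ can have weight at most $1$. I will establish this via a pigeonhole argument on row positions, using that each block $\mathbf{B}_{i,j}$ is a $p_j\times\frac{m}{k}$ matrix of full column rank $\gamma_{i,j}=\frac{m}{k}$, which follows from the full-rank decomposition set up in Section~\ref{sec:ub} together with $\gamma_{i,j}=\frac{m}{k}$ from Theorem~\ref{th:xx6}.

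The first step will be to rule out weight-$0$ columns in any $\mathbf{B}_{i,j}$ with $i\neq j$: if such a zero column existed, then any $\mathcal{E}\subset[n]\setminus\{j\}$ with $|\mathcal{E}|=n-k$ and $i\in\mathcal{E}$ would force $\mathbf{B}_{\mathcal{E},j}$ to inherit that zero column, contradicting its invertibility via Lemma~\ref{le:invert}. Consequently every weight-$\leq 1$ column of $\mathbf{B}_j$ is in fact weight-$1$ and carries a well-defined row position $\rho\in[p_j]$. The core step is to prove that two distinct weight-$1$ columns cannot share a row position. For two such columns within a single block $\mathbf{B}_{i,j}$, a shared row makes them scalar multiples, violating the full column rank $\frac{m}{k}$ of $\mathbf{B}_{i,j}$. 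For two such columns sitting in different blocks $\mathbf{B}_{i_1,j}$ and $\mathbf{B}_{i_2,j}$, a shared row produces two proportional columns inside $\mathbf{B}_{\mathcal{E},j}$ for any $\mathcal{E}$ of size $n-k$ containing both $i_1$ and $i_2$, again contradicting Lemma~\ref{le:invert}. Once this row-position injectivity is established, the total number of weight-$1$ columns across all blocks is at most $p_j=\frac{(n-k)m}{k}$, and subtracting from $\frac{(n-1)m}{k}$ yields the desired lower bound of $\frac{(k-1)m}{k}$ weight-$\geq 2$ columns.

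The step I expect to require the most care is the cross-block part of the injectivity argument: I must guarantee that for any two blocks $i_1,i_2\in[n]\setminus\{j\}$ one can always choose $\mathcal{E}$ of size $n-k$ with $\{i_1,i_2\}\subseteq \mathcal{E}$, which is precisely where the condition $n-k\geq 2$ (equivalently $k\leq n-2$) enters. The degenerate case $k=n-1$, in which $\mathcal{E}$ collapses to a singleton and the cross-block half of the argument vanishes, should be flagged as outside the scope of the lemma; this is consistent with the intended applicability of the subsequent update-complexity theorem. Everything else (restating in terms of row and column counts, and using the full column rank to forbid within-block proportionality) is routine bookkeeping.
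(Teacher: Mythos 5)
Your proof is correct and follows essentially the same approach as the paper: invoke Lemma~\ref{le:invert} to forbid zero columns, bound the number of weight-one columns by the row count $p_j=\frac{(n-k)m}{k}$, and subtract from $\col(\mathbf{B}_j)=\frac{(n-1)m}{k}$. The paper's own proof dispatches the middle step by asserting that $\mathbf{B}_j$ ``has no identical columns'' and then declares the weight-one count is at most $p_j$. Your version is actually tighter on two points. First, over $\mathbb{F}_q$ with $q>2$, ``no identical columns'' alone does not cap the weight-one columns at $p_j$ (two distinct weight-one columns can share a row position); the correct obstruction is \emph{proportionality}, i.e.\ linear dependence inside an invertible $\mathbf{B}_{\mathcal{E},j}$, which is exactly what you use. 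Second, you explicitly flag that the cross-block half of the argument requires choosing $\mathcal{E}$ of size $n-k$ containing both $i_1$ and $i_2$, hence $n-k\geq 2$; this restriction is real (for $n-k=1$ the statement fails, e.g.\ for $(n,k,m)=(3,2,2)$ where $\mathbf{B}_j$ is $1\times 2$ and every column has weight $1<2$), but the paper leaves it implicit. So the overall route is the one the paper takes, and your proposal fills in precisely the two places where the paper's wording is loose.
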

\begin{proof}
Lemma \ref{le:invert}
shows $\mathbf{B}_{\mathcal{E},j}$ is invertible
for arbitrary $\mathcal{E}$ with $|\mathcal{E}|=n-k$; hence,
$\mathbf{B}_{j}$ in \eqref{eq:pn136} contains no zero column, 
and also has no identical columns.
As each column of $\mathbf{B}_{j}$ consists of $\frac{(n-k)m}{k}$ components,
the number of weight-one columns of $\mathbf{B}_{j}$
must be at most $\frac{(n-k)m}{k}$.
We thus conclude that there are at least
\begin{equation}
\col(\mathbf{B}_{j})-\frac{(n-k)m}{k}=\frac{(n-1)m}{k}-\frac{(n-k)m}{k}=\frac{(k-1)m}{k}
\end{equation} columns of $\mathbf{B}_j$ with weights no less than $2$. This completes the proof.
\end{proof}

As previously mentioned, since the above two lemmas hold for the 
full-rank submatrix $\mathbf{B}_{i,j}'$ of $\mathbf{M}_{i,j}$,
a lower bound of update complexity can thus be established.

\begin{corollary}\label{co:the2}
The construction matrices $\{\mathbf{M}_{i,j}\}_{i\neq j \in [n]}$ of an $(n,k,m\one)$ MR-MUB code
must have at least $\frac{(k-1)mn}{k}$ columns with weights 
no less than $2$.
\end{corollary}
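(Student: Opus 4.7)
The plan is to combine Lemmas~\ref{le:cij} and~\ref{le:bij} in a two-stage counting argument. First, for each pair $i\neq j\in[n]$, I would invoke Lemma~\ref{le:cij} to choose a full-rank decomposition $\mathbf{M}_{i,j}=\mathbf{B}_{i,j}'\mathbf{A}_{i,j}'$ in which $\mathbf{A}_{i,j}'$ contains a $\gamma_{i,j}\times\gamma_{i,j}$ identity submatrix. The key observation is that the existence of such an identity block forces $\mathbf{B}_{i,j}'$ to appear as a column-submatrix of $\mathbf{M}_{i,j}$: if columns $c_1,\dots,c_{\gamma_{i,j}}$ of $\mathbf{A}_{i,j}'$ form the identity, then the $c_\ell$-th column of $\mathbf{M}_{i,j}$ equals the $\ell$-th column of $\mathbf{B}_{i,j}'$, with identical weight.

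Second, for each $j\in[n]$, I would form the concatenation $\mathbf{B}_{j}'=[\mathbf{B}_{1,j}'\ \dots\ \mathbf{B}_{j-1,j}'\ \mathbf{B}_{j+1,j}'\ \dots\ \mathbf{B}_{n,j}']$ and apply Lemma~\ref{le:bij}, which guarantees that $\mathbf{B}_{j}'$ contains at least $\frac{(k-1)m}{k}$ columns of weight at least~$2$. By the embedding noted in the previous paragraph, every such weight-$\geq 2$ column of $\mathbf{B}_{j}'$ is literally a column of some $\mathbf{M}_{i,j}$ with $i\neq j$ and has the same weight there.

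To finish, I would sum over $j\in[n]$. Since every weight-$\geq 2$ column witnessed for a given $j$ belongs to some $\mathbf{M}_{i,j}$ whose second index is precisely $j$, columns contributed by distinct values of $j$ lie in disjoint construction matrices and cannot be double-counted. This yields at least
\begin{equation}
n\cdot \frac{(k-1)m}{k}=\frac{(k-1)mn}{k}
\end{equation}
columns of weight at least~$2$ among $\{\mathbf{M}_{i,j}\}_{i\neq j\in[n]}$, which is the desired bound.

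The argument is almost entirely bookkeeping once the two earlier lemmas are in hand; the only subtle point, and the step I expect to require the most careful wording, is the identification of the columns of $\mathbf{B}_{i,j}'$ with columns of $\mathbf{M}_{i,j}$ (with preserved weights) via the identity submatrix supplied by Lemma~\ref{le:cij}. Everything else is a straightforward summation together with the observation that the second index $j$ partitions the collection of construction matrices, so no column is counted twice.
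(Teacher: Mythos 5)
Your proposal follows essentially the same route as the paper's own proof: invoke Lemma~\ref{le:cij} to make each $\mathbf{B}_{i,j}'$ a column-submatrix of $\mathbf{M}_{i,j}$, apply Lemma~\ref{le:bij} to the concatenation $\mathbf{B}_j'$ for each fixed $j$, and then sum over $j$ using the fact that the second index partitions the family $\{\mathbf{M}_{i,j}\}_{i\neq j}$. Your write-up is correct and simply makes explicit (the identity-block embedding, the disjointness across $j$) what the paper leaves implicit.
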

\begin{proof}
Lemma \ref{le:bij} holds for 
those full-rank submatrices $\{\mathbf{B}_{i,j}'\}_{i\neq j\in[n]}$
of $\{\mathbf{M}_{i,j}\}_{i\neq j\in[n]}$. Thus,
there are at least $\frac{(k-1)m}{k}$ columns in $[\mathbf{M}_{1,j} \ \dots \mathbf{M}_{j-1,j} \ \mathbf{M}_{j+1,j} \ \dots \mathbf{M}_{n,j}]$, which 
have weights larger than $1$.
Consequently, the number of columns with weights no less than $2$
in $\{\mathbf{M}_{i,j}\}_{i\neq j \in [n]}$ is at least $\frac{(k-1)mn}{k}$.
\end{proof}


\begin{lemma}\label{le:cij1}
Fix an $(n,k,m\one)$ MR-MUB code.
For every $i\in [n]$ and every $\ell \in [m]$, there are at least $n-k$ columns with non-zero weights  in $\mathbf{M}_i^{(\ell)}\triangleq[(\mathbf{M}_{i,1})_{\ell} \ \dots (\mathbf{M}_{i,i-1})_{\ell} \quad (\mathbf{M}_{i,i+1})_{\ell} \dots \ (\mathbf{M}_{i,n})_{\ell}]$.
\end{lemma}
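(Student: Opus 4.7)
The plan is to argue by contradiction using the reconstruction property of an $(n,k,\mathbf{m})$ array code, an argument that does not really require invoking any special MR-MUB structure. Suppose, contrary to the claim, that for some $i\in[n]$ and some $\ell\in[m]$, the matrix $\mathbf{M}_i^{(\ell)}$ contains strictly fewer than $n-k$ non-zero columns. Since $\mathbf{M}_i^{(\ell)}$ has exactly $n-1$ columns in total, it must then contain at least $(n-1)-(n-k-1)=k$ columns that are identically zero.

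Next, I would exploit this surplus of zero columns to build an erasure pattern that breaks reconstructibility. Put $S\triangleq\{j\in[n]\setminus\{i\}:(\mathbf{M}_{i,j})_\ell=\mathbf{0}\}$, which by the previous step satisfies $|S|\geq k$. Pick any subset $\bar{\mathcal{E}}\subseteq S$ with $|\bar{\mathcal{E}}|=k$, and set $\mathcal{E}\triangleq[n]\setminus\bar{\mathcal{E}}$, so that $|\mathcal{E}|=n-k$ and, crucially, $i\in\mathcal{E}$ because $i\notin S$.

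Then I would verify that $x_{i,\ell}$ is invisible from $\mathbf{C}_{\bar{\mathcal{E}}}$. Using \eqref{eq:constr}, for every $j\in\bar{\mathcal{E}}$ the parity vector decomposes as $\mathbf{p}_j=\sum_{u\in[n]}\mathbf{M}_{u,j}\mathbf{x}_u$, and the contribution of $x_{i,\ell}$ to this sum is $(\mathbf{M}_{i,j})_\ell\,x_{i,\ell}=\mathbf{0}$ by the choice of $\bar{\mathcal{E}}\subseteq S$. At the same time, the data vectors $\{\mathbf{x}_j\}_{j\in\bar{\mathcal{E}}}$ in $\mathbf{C}_{\bar{\mathcal{E}}}$ are independent of $\mathbf{x}_i$ by \eqref{eq:ij} and hence carry no information about $x_{i,\ell}$. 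Combining these two observations gives $I(x_{i,\ell};\mathbf{C}_{\bar{\mathcal{E}}})=0$, and together with $H(x_{i,\ell})=1$ from \eqref{eq:xi} this yields $H(x_{i,\ell}\mid\mathbf{C}_{\bar{\mathcal{E}}})=1>0$.

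The final step is to note that this directly contradicts the defining property of an $(n,k,\mathbf{m})$ irregular array code, namely that every codeword, and in particular every data symbol indexed by $\mathcal{E}$, can be recovered from $\mathbf{C}_{\bar{\mathcal{E}}}$ (which would force $H(x_{i,\ell}\mid\mathbf{C}_{\bar{\mathcal{E}}})=0$ since $i\in\mathcal{E}$). Hence the assumed deficit of non-zero columns is impossible, and $\mathbf{M}_i^{(\ell)}$ must contain at least $n-k$ non-zero columns. I do not foresee a genuine obstacle here; the only detail to handle carefully is the bookkeeping that guarantees $|S|\geq k$ and $i\in\mathcal{E}$, after which the information-theoretic contradiction is immediate.
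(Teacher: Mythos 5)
Your argument is correct and is essentially the same as the paper's: assume fewer than $n-k$ non-zero columns, conclude there are at least $k$ zero columns, choose $\bar{\mathcal{E}}$ among those indices, and observe that the $k$ corresponding nodes then give no information about $x_{i,\ell}$, contradicting reconstructibility. You simply unpack the paper's terse "functionally independent of $x_{i,\ell}$" step into an explicit information-theoretic calculation; no new idea or different route is involved.
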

\begin{proof}
Denote the data symbol in the $\ell$-th row of $\mathbf{x}_{i}$ as $x_{i,\ell}$. 
If there were $k$ zero columns in $\mathbf{M}_i^{(\ell)}$,
then we can find $k$ parity vectors that are functionally independent 
of $x_{i,\ell}$ according to \eqref{eq:constr}, which implies
we can find $k$ nodes that cannot be used 
to reconstruct $x_{i,\ell}$.
A contradiction to the definition of $(n,k,m\one)$ MR MUB codes
is obtained. 
\end{proof}

Considering Lemma \ref{le:cij1} holds for every $i\in[n]$ and $\ell\in[m]$,
an immediate consequence is summarized in the next corollary.

\begin{corollary}\label{co:the1}
For an $(n,k,m\one)$ MR-MUB code, there are at least $(n-k)mn$ columns with nonzero weights  in all construction matrices $\{\mathbf{M}_{i,j}\}_{i\neq j \in [n]}$.
\end{corollary}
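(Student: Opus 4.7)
The plan is to obtain Corollary~\ref{co:the1} by a straightforward double summation of the local bound given in Lemma~\ref{le:cij1}. First I would observe that the columns of the collection $\{\mathbf{M}_{i,j}\}_{i\neq j\in[n]}$ are partitioned by the index pair $(i,\ell)$ with $i\in[n]$ and $\ell\in[m]$: each matrix $\mathbf{M}_i^{(\ell)}=[(\mathbf{M}_{i,1})_\ell\ \cdots\ (\mathbf{M}_{i,i-1})_\ell\ (\mathbf{M}_{i,i+1})_\ell\ \cdots\ (\mathbf{M}_{i,n})_\ell]$ consists of the $\ell$-th columns of the $n-1$ construction matrices $\{\mathbf{M}_{i,j}\}_{j\neq i}$, and as $(i,\ell)$ ranges over $[n]\times[m]$, these columns account for every column of $\{\mathbf{M}_{i,j}\}_{i\neq j\in[n]}$ exactly once.

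Next I would invoke Lemma~\ref{le:cij1}, which guarantees that for every $i\in[n]$ and every $\ell\in[m]$ the matrix $\mathbf{M}_i^{(\ell)}$ contains at least $n-k$ columns of non-zero weight. Summing this lower bound over all $(i,\ell)\in[n]\times[m]$ yields that the total number of non-zero columns in $\{\mathbf{M}_{i,j}\}_{i\neq j\in[n]}$ is at least
\begin{equation*}
\sum_{i\in[n]}\sum_{\ell\in[m]}(n-k)=(n-k)mn,
\end{equation*}
which is exactly the claim of the corollary.

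There is really no obstacle here: the work was already done in establishing Lemma~\ref{le:cij1}, and the corollary is just an aggregation. The only point that warrants a brief sentence is verifying that the partition of columns by $(i,\ell)$ produces no double counting, which is immediate from the fact that the $\ell$-th column of $\mathbf{M}_{i,j}$ is indexed uniquely by the triple $(i,j,\ell)$ and appears in the single aggregate $\mathbf{M}_i^{(\ell)}$. Thus the corollary follows with one line of summation once Lemma~\ref{le:cij1} is on hand.
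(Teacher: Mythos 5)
Your proposal is correct and follows exactly the approach the paper intends: the paper states the corollary as an ``immediate consequence'' of Lemma~\ref{le:cij1} applied to every $(i,\ell)\in[n]\times[m]$, which is precisely the partition-and-sum argument you spell out. The only thing you add is the explicit observation that the index pair $(i,\ell)$ partitions the columns with no double counting, which is a worthwhile clarification but not a different method.
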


Corollaries \ref{co:the2} and \ref{co:the1} then lead to the main result in this section.

\begin{theorem}\label{th:x5}
The update complexity $\theta$ of $(n,k,m\one)$ MR-MUB codes is 
lower-bounded by $n-k+\frac{k-1}{k}$.
\end{theorem}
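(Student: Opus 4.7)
My plan is to deduce the lower bound on $\theta$ by combining the two counting estimates already established in Corollaries~\ref{co:the1} and~\ref{co:the2}, which respectively control the number of nonzero columns and the number of columns of weight at least $2$ in the construction matrices $\{\mathbf{M}_{i,j}\}_{i\neq j\in[n]}$.

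First I would rewrite the update complexity in a form convenient for column-counting. Since $B=nm$ for an $(n,k,m\one)$ MR-MUB code, \eqref{eq:the} gives
\begin{equation*}
\theta=\frac{1}{nm}\sum_{i\in[n]}\sum_{\ell\in[m]}\theta_i^{(\ell)}=\frac{1}{nm}\sum_{i\in[n]}\sum_{\ell\in[m]}\sum_{j\in[n]\setminus\{i\}}\wt\bigl((\mathbf{M}_{i,j})_\ell\bigr)=\frac{W}{nm},
\end{equation*}
where $W$ denotes the sum of column weights of all matrices in $\{\mathbf{M}_{i,j}\}_{i\neq j\in[n]}$. Thus it suffices to show $W\ge (n-k)mn+\frac{(k-1)mn}{k}$, which immediately yields $\theta\ge n-k+\frac{k-1}{k}$.

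Next I would partition the columns of $\{\mathbf{M}_{i,j}\}_{i\neq j\in[n]}$ according to their weights. Let $a$ be the number of columns of weight exactly $1$ and $b$ the number of columns of weight at least $2$. Corollary~\ref{co:the1} guarantees $a+b\ge (n-k)mn$, while Corollary~\ref{co:the2} guarantees $b\ge\frac{(k-1)mn}{k}$. Because each weight-$1$ column contributes at least $1$ and each column of weight at least $2$ contributes at least $2$ to the total weight, we obtain
\begin{equation*}
W\;\ge\;a+2b\;=\;(a+b)+b\;\ge\;(n-k)mn+\frac{(k-1)mn}{k}.
\end{equation*}
Dividing by $nm$ gives the claimed bound.

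There is no real obstacle left once Corollaries~\ref{co:the1} and~\ref{co:the2} are in hand; the only point requiring a little care is the decomposition $a+2b=(a+b)+b$, which lets the two corollaries be applied to independent aggregates rather than to overlapping sets of columns. In particular, the weight-$\ge 2$ columns are automatically nonzero, so the bound $a+b\ge (n-k)mn$ already accounts for them, and the extra contribution from $b$ adds cleanly on top. This additive decoupling is what produces the improvement of $\frac{k-1}{k}$ over the naive definition-implied bound $n-k$.
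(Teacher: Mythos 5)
Your proof is correct and is essentially the paper's own argument in slightly different notation: your $a+b$ and $b$ are precisely the paper's $\Theta(1)$ and $\Theta(2)$, and your bound $W\geq a+2b=(a+b)+b$ is the same as the paper's $\sum_\ell\ell\,\theta(\ell)=\sum_\ell\Theta(\ell)\geq\Theta(1)+\Theta(2)$. Both then invoke Corollaries~\ref{co:the1} and~\ref{co:the2} and divide by $nm$.
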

\begin{proof}
Denote by $\theta(\ell)$ the number of columns exactly with weight $\ell$ in all construction matrices $\{\mathbf{M}_{i,j}\}_{i\neq j \in [n]}$.
Since $\row(\mathbf{M}_{i,j})=\frac{(n-k)m}{k}$, it is obvious that
$\theta(\ell)=0$ for $\ell > \frac{(n-k)m}{k}$. 
Let $\Theta(\ell)\triangleq\sum_{i=\ell}^{\frac{(n-k)m}{k}}{\theta(i)}$.
We then derive from \eqref{eq:the} that
\begin{equation}\label{eq:the1}
\theta=\frac 1{nm}\sum_{\ell \in [\frac{(n-k)m}{k}]}{\ell \cdot \theta(\ell)}=\frac 1{nm}\sum_{\ell \in [\frac{(n-k)m}{k}]}{\Theta(\ell)} \geq \frac{\Theta(1)+\Theta(2)}{nm}.
\end{equation} 
As Corollaries \ref{co:the2} and \ref{co:the1} imply
$\Theta(2) \geq \frac{(k-1)mn}k$ and $\Theta(1)\geq (n-k)mn$, respectively, 
\eqref{eq:the1} indicates that $\theta \geq n-k+\frac{k-1}{k}$.
\end{proof}

\section{A class of MR-MUB codes with the optimal repair bandwidth}
\label{sec:6}

\subsection{Generic transformation for code construction}


Consider
$(n,k)$ MDS regular array codes with each node having exactly the same number of symbols, denoted as $\alpha$. Hence, $m_i+p_i=\alpha$ for every $i\in[n]$.
Let $\beta_{i}$ be the amount of symbols that needs to be downloaded from all other $n-1$ nodes when repairing node~$i$.  Then, it is known \cite{dimikis2010} that for all $(n,k)$ MDS regular array code designs, $\beta_i\geq \frac{(n-1)\alpha}{(n-k)}$ for every $i\in[n]$. 
As a consequence of this universal lower bound for every $\beta_i$, 
an $(n,k)$ MDS regular array code is said to be with the optimal repair bandwidth 
for all nodes if $\beta_i=\frac{(n-1)\alpha}{(n-k)}$ for every $i\in[n]$.

In 2018, Li et al. \cite{transform} proposed a generic transformation that converts a nonbinary $(n,k)$ MDS regular array code with node size $\alpha$ into another $(n,k)$ MDS regular array code with node size $\alpha'=(n-k)\alpha$ over the same field $\mathbb{F}_{q}$ such that 1) some chosen $(n-k)$ nodes have the optimal repair bandwidth $\frac{(n-1)\alpha'}{(n-k)}=(n-1)\alpha$, and 2) the normalized repair bandwidth\footnote{The normalized repair bandwidth for a node is defined as
\begin{equation}
\frac{\text{the number of symbols downloaded for repairing this node}}{\text{the number of symbols repaired}}.
\end{equation}}
of the remaining $k$ nodes are preserved.
Additionally, after applying the transformation $\lceil \frac{n}{n-k} \rceil$ times,
a nonbinary $(n, k)$ MDS regular array code can be converted 
into an $(n, k)$ MDS regular array code with all nodes achieving the optimal repair bandwidth. 

In this section, using the transformation in \cite{transform}, 
an $(n,k=n-2, 2^{\lceil \frac{n}{n-k} \rceil}m\one)$ regular array code that achieves the optimal repair bandwidth for all nodes is constructed from an $(n,k=n-2,m\one)$ MR-MUB code under $k\mid m$. We will then prove 
in Theorem \ref{th:x9} that
the transformed $(n,k=n-2, 2^{\lceil \frac{n}{n-k} \rceil}m\one)$ regular array code
also have the minimum code redundancy and the minimum update bandwidth and hence is an MR-MUB code.

For completeness, we restate
the generic transform in \cite{transform} in the form that is 
necessary 
in this paper in the following theorem.
Similar to \cite{transform}, the symbols of the codes we construct
are over $\mathbb{F}_{q}$ with $q>2$, where the elements of $\mathbb{F}_{q}$ are denoted as $\{0,1, \gpri,\dots,\gpri^{q-2}\}$ and $\gpri$ is a primitive element of $\mathbb{F}_{q}$. 

\begin{theorem}\label{th:x8}(Generic transform
for $(n=k+2,k)$ regular array codes~\cite{transform})
Let $\mathbf{C}^{(0)}\triangleq[\mathbf{c}_{1}^{(0)} \dots \mathbf{c}_{n}^{(0)}]$ and $\mathbf{C}^{(1)}\triangleq[\mathbf{c}_{1}^{(1)} \dots \mathbf{c}_{n}^{(1)}]$
be codewords of a nonbinary $(n=k+2,k)$ MDS regular array code $\mathcal{C}$ over $\mathbb{F}_{q}$ with node size $\alpha$,
where the data symbols used to generate $\mathbf{C}^{(0)}$ and $\mathbf{C}^{(1)}$ can be different. 
Denote by $\beta_i$ the repair bandwidth of $\mathcal{C}$ for node $i$.
Then, 
\begin{equation}\label{eq:x90}
\mathbf{C}'=\begin{bmatrix}
\mathbf{c}_{1}^{(0)}& \dots & \mathbf{c}_{k}^{(0)}& \mathbf{c}_{k+1}^{(0)}& \mathbf{c}_{k+2}^{(0)}+\mathbf{c}_{k+2}^{(1)}\\
\mathbf{c}_{1}^{(1)}& \dots & \mathbf{c}_{k}^{(1)}& \mathbf{c}_{k+2}^{(0)}+ \gpri\,\mathbf{c}_{k+2}^{(1)}  & \mathbf{c}_{k+1}^{(1)}
\end{bmatrix} \in \mathbb{F}_{q}^{2\alpha \times (k+2)},
\end{equation}
are codewords of an 
$(n=k+2,k)$ MDS regular array code $\mathcal{C}'$ with node size $\alpha'=2\alpha$,
and its repair bandwidth for node $i$ satisfies
\begin{equation}\label{eq:pnrp}
\beta'_{i}=\begin{cases}
2\beta_{i},&\text{for }i \in [k]\\
\frac{(n-1)\alpha'}{n-k}=(n-1)\alpha,&\text{ for }k<i\leq n=k+2.
\end{cases}
\end{equation}
\end{theorem}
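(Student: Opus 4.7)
\smallskip\noindent\textit{Proof proposal.} The plan is to verify the two assertions of the theorem separately: (1) $\mathcal{C}'$ is an $(n,k)$ MDS regular array code, i.e., any $k$ of the $n=k+2$ columns of $\mathbf{C}'$ suffice to reconstruct the remaining two; and (2) the per-node repair bandwidth satisfies \eqref{eq:pnrp}. Both parts hinge on the invertibility of the $2\times 2$ mixing matrix whose rows are $(1,1)$ and $(1,\gpri)$, with determinant $\gpri-1\neq 0$ in $\mathbb{F}_{q}$ for $q>2$.

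For the MDS property I would split into three cases according to the location of the two erased columns of $\mathbf{C}'$. Case (i): both erased columns lie in $\{k+1,k+2\}$. Then the $k$ surviving columns are $\mathbf{c}_i^{(0)}$ stacked over $\mathbf{c}_i^{(1)}$ for $i\in[k]$, so each of the two layers independently contains $k$ uncorrupted columns of $\mathcal{C}$, and two parallel MDS decodings recover both codewords. Case (ii): exactly one erased column lies in $[k]$ and the other in $\{k+1,k+2\}$. The surviving parity column then carries an unmixed symbol in exactly one of its two layers (namely $\mathbf{c}_{k+1}^{(0)}$ when $k+2$ is erased and $\mathbf{c}_{k+1}^{(1)}$ when $k+1$ is erased), and together with the $k-1$ survivors from $[k]$ in that same layer this provides $k$ uncorrupted columns of $\mathcal{C}$; MDS-decoding that layer yields the companion value $\mathbf{c}_{k+2}^{(s)}$, after which the mixed symbol in the surviving parity column of the other layer isolates the missing $\mathbf{c}_{k+2}^{(1-s)}$ (up to the scalar $\gpri$), reducing the second layer to another $k$-column MDS decoding. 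Case (iii): both erased columns lie in $[k]$. Then both $\mathbf{c}_{k+2}^{(0)}+\mathbf{c}_{k+2}^{(1)}$ and $\mathbf{c}_{k+2}^{(0)}+\gpri\,\mathbf{c}_{k+2}^{(1)}$ survive, the invertibility of the mixing matrix decouples them into $\mathbf{c}_{k+2}^{(0)}$ and $\mathbf{c}_{k+2}^{(1)}$, each layer then has $k$ uncorrupted columns of $\mathcal{C}$, and MDS decoding concludes.

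For the repair bandwidth I would treat the two regimes separately. For $i\in[k]$, I would run the original $\mathcal{C}$-repair scheme in parallel on the two layers: let $S_{i,j}$ denote the subset of symbols that this scheme downloads from node $j$ when repairing node $i$, so that $\sum_{j\in[n]\setminus\{i\}}|S_{i,j}|=\beta_i$. From each $j\in[k]\setminus\{i\}$ I download $S_{i,j}$ applied to both layers, contributing $2|S_{i,j}|$ symbols. From node $k+1$ I download $S_{i,k+1}$ applied to its layer~1 and $S_{i,k+2}$ applied to its layer~2; symmetrically, from node $k+2$ I download $S_{i,k+2}$ applied to its layer~1 and $S_{i,k+1}$ applied to its layer~2. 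The two $S_{i,k+2}$-downloads form an invertible $2\times 2$ system in the layer-specific sub-combinations of $\mathbf{c}_{k+2}^{(0)}$ and $\mathbf{c}_{k+2}^{(1)}$, so both are recovered; the two-layer $\mathcal{C}$-repair then reconstructs $(\mathbf{c}_i^{(0)},\mathbf{c}_i^{(1)})$, and the total download is $2\beta_i$. For $i\in\{k+1,k+2\}$, I would download the entire layer that stores the unmixed half of node $i$ from each of the $n-1$ surviving columns, for a total of $(n-1)\alpha=\frac{(n-1)\alpha'}{n-k}$ symbols; this supplies $k$ uncorrupted columns of $\mathcal{C}$ in that layer, enough to MDS-decode and in particular recover the unmixed half of node $i$ together with the companion value in the other parity position, after which the remaining half of node $i$ is obtained by subtracting the now-known quantity from the mixed symbol in the surviving parity column.

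The main obstacle I expect is the bookkeeping in case (ii) of the MDS verification, where the identity of the unmixed layer in the surviving parity column and the correct residual to subtract from the mixed layer depend on which of $k+1$ or $k+2$ is erased; everything else reduces to clean applications of the $2\times 2$ decoupling and the MDS property of $\mathcal{C}$.
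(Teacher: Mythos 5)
The paper does not prove this theorem: Theorem~\ref{th:x8} is a restatement of the generic transformation from~\cite{transform}, which the text explicitly introduces with \emph{``we restate the generic transform in \cite{transform} in the form that is necessary in this paper,''} and no proof is supplied. Your blind reconstruction is a correct, self-contained verification and uses the right ingredients: the MDS argument hinges on the invertibility of the $2\times 2$ pairing $\bigl[\begin{smallmatrix}1&1\\ 1&\gpri\end{smallmatrix}\bigr]$ over $\mathbb{F}_q$ with $q>2$ together with a per-layer MDS decoding, and your three-case split (both erasures among $\{k+1,k+2\}$, one in each part, both in $[k]$) cleanly covers all patterns; the repair-bandwidth argument correctly exploits the fact that the two layers carrying $\mathbf{c}_{k+2}^{(0)}$ and $\mathbf{c}_{k+2}^{(1)}$ are spread across nodes $k+1$ and $k+2$ in complementary positions, so running the original repair accesses $S_{i,j}$ on the appropriate layers and decoupling the two mixed downloads gives exactly $2\beta_i$ for $i\in[k]$, while downloading the single unmixed layer from all survivors gives $(n-1)\alpha=\frac{(n-1)\alpha'}{n-k}$ for the two transformed nodes. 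One implicit assumption worth stating explicitly is that the repair scheme of $\mathcal{C}$ downloads linear functions of each node's content, so that $S_{i,k+2}$ commutes with the linear mixing of $\mathbf{c}_{k+2}^{(0)}$ and $\mathbf{c}_{k+2}^{(1)}$; this holds for the linear repair schemes considered in~\cite{transform}, but the decoupling step quietly relies on it.
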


It is worth noting that the last two nodes of the transformed code $\mathcal{C}'$
have achieved the universal lower bound and therefore is with the optimal repair bandwidth. Furthermore, it can be inferred from 
\eqref{eq:pnrp} that
if the code $\mathcal{C}$ before transformation is already with the optimal repair bandwidth for every node, then the repair bandwidths of $\mathcal{C}'$ 
are also optimal for all nodes.

\subsection{MR-MUB code construction with the optimal repair bandwidth}

According to Theorem~\ref{th:x8}, given  an $(n,k=n-2,m\one)$ MR-MUB code $\mathcal{C}$ under $k\mid m$, 
we can construct an $(n,n-2,2m\one)$ regular array code $\mathcal{C}'$
that satisfies 1) the last two nodes are with the optimal repair bandwidth,
and 2) the remaining $k$ nodes preserve the same normalized repair bandwidths as their corresponding nodes of $\mathcal{C}$.
In order to distinguish between the codewords before and after transformation, 
we will use 
\begin{equation}
\left\{\mathbf{y}_{i}=\begin{bmatrix}\mathbf{y}_i^{(0)}\\
\mathbf{y}_i^{(1)}
\end{bmatrix}\right\}_{i \in [n]}\quad\text{and}\quad
\left\{\mathbf{q}_{i}=\begin{bmatrix}
\mathbf{q}_i^{(0)}\\
\mathbf{q}_i^{(1)}
\end{bmatrix}\right\}_{i \in [n]}
\end{equation} to denote the data vectors and the parity vectors of the transformed code $\mathcal{C}'$, respectively.
Data vectors and parity vectors of the base code $\mathcal{C}$
are respectively denoted as $\{\mathbf{x}_{i}^{(\ell)}\}_{i \in [n],\ell\in\{0,1\}}$ and $\{\mathbf{p}_{i}^{(\ell)}\}_{i \in [n],\ell\in\{0,1\}}$.
We then have the following correspondence between
$\{\mathbf{y}_{i}^{(\ell)},\mathbf{q}_{i}^{(\ell)}\}_{i \in [n],\ell\in\{0,1\}}$ and $\{\mathbf{x}_{i}^{(\ell)},\mathbf{p}_{i}^{(\ell)}\}_{i \in [n],\ell\in\{0,1\}}$:
\begin{equation}\label{eq:C''}
\begin{aligned}
\mathbf{C}'=&\begin{bmatrix}
\mathbf{y}_{1}^{(0)} & \dots & \mathbf{y}_{k}^{(0)}& \mathbf{y}_{k+1}^{(0)}& \mathbf{y}_{k+2}^{(0)}\\
\mathbf{q}_{1}^{(0)} & \dots & \mathbf{q}_{k}^{(0)}& \mathbf{q}_{k+1}^{(0)}& \mathbf{q}_{k+2}^{(0)}\\
\mathbf{y}_{1}^{(1)} & \dots & \mathbf{y}_{k}^{(1)}& \mathbf{y}_{k+1}^{(1)}& \mathbf{y}_{k+2}^{(1)}\\
\mathbf{q}_{1}^{(1)} & \dots & \mathbf{q}_{k}^{(1)}& \mathbf{q}_{k+1}^{(1)}& \mathbf{q}_{k+2}^{(1)}\\
\end{bmatrix}=\begin{bmatrix}
\mathbf{x}_{1}^{(0)}& \dots & \mathbf{x}_{k}^{(0)}& \mathbf{x}_{k+1}^{(0)}& \mathbf{x}_{k+2}^{(0)}+\mathbf{x}_{k+2}^{(1)}\\
\mathbf{p}_{1}^{(0)}& \dots & \mathbf{p}_{k}^{(0)}& \mathbf{p}_{k+1}^{(0)}& \mathbf{p}_{k+2}^{(0)}+\mathbf{p}_{k+2}^{(1)}\\
\mathbf{x}_{1}^{(1)}& \dots & \mathbf{x}_{k}^{(1)}& \mathbf{x}_{k+2}^{(0)}+ \gpri\,\mathbf{x}_{k+2}^{(1)}  & \mathbf{x}_{k+1}^{(1)}\\
\mathbf{p}_{1}^{(1)}& \dots & \mathbf{p}_{k}^{(1)}& \mathbf{p}_{k+2}^{(0)}+ \gpri\,\mathbf{p}_{k+2}^{(1)}  & \mathbf{p}_{k+1}^{(1)}\\
\end{bmatrix},
\end{aligned}
\end{equation}
which implies
\begin{equation}\label{eq:pnth}
\begin{cases}
\mathbf{x}_{k+1}^{(1)}=\mathbf{y}_{k+2}^{(1)}\\
\mathbf{x}_{k+1}^{(0)}=\mathbf{y}_{k+1}^{(0)}
\end{cases}\quad\text{and}\quad
\begin{cases}
\mathbf{x}_{k+2}^{(1)}=(\gpri-1)^{-1}\big(\mathbf{y}_{k+1}^{(1)}- \mathbf{y}_{k+2}^{(0)}\big)\\
\mathbf{x}_{k+2}^{(0)}=\mathbf{y}_{k+2}^{(0)}-(\gpri-1)^{-1}\big(\mathbf{y}_{k+1}^{(1)}- \mathbf{y}_{k+2}^{(0)}\big)
\end{cases}
\end{equation} 
We then present the main theorem in this section.

\begin{theorem}\label{th:x9}
$\mathcal{C}'$ (whose codewords are defined in \eqref{eq:C''}) is an $(n,k=n-2,2m\one)$ MR-MUB code over $\mathbb{F}_{q}$.
\end{theorem}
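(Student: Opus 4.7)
The plan is to verify three properties of $\mathcal{C}'$: $(i)$ it is an $(n,k{=}n{-}2,2m\one)$ irregular array code, $(ii)$ its code redundancy equals $R_{\min}$, and $(iii)$ its update bandwidth equals $\gamma_{\min}$.

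Properties $(i)$ and $(ii)$ are essentially inherited from the transformation. Theorem~\ref{th:x8} guarantees that $\mathcal{C}'$ is an $(n,k)$ MDS regular array code with node size $2(m+p)$, and the change of variable in \eqref{eq:pnth} is an invertible $\mathbb{F}_q$-linear map (the only nontrivial $2\times 2$ block has determinant $\gpri-1\neq 0$). Hence the data vectors $\{\mathbf{y}_i\}_{i\in[n]}$ of $\mathcal{C}'$ are jointly uniform and independent over $\mathbb{F}_q^{2m}$, matching the setup of Section~\ref{sec:2a}. Each node carries $2m$ data symbols and $2p=\frac{2(n-k)m}{k}$ parity symbols, so the total redundancy is $n\cdot 2p=\frac{2(n-k)mn}{k}$, which coincides with the minimum redundancy from \eqref{eq:37} applied with $B=2mn$.

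The main effort is $(iii)$. Applying \eqref{eq:sl1} with the node size $2m$ in place of $m$ (allowed since $k\mid 2m$ and $k<n-1$), the claim $\gamma'=\gamma_{\min}$ is equivalent to showing $\gamma'_{i,j}=\frac{2m}{k}$ for every $i\neq j\in[n]$. I would begin by invoking Lemma~\ref{le:cii} to assume without loss of generality that $\mathbf{M}_{i,i}=[\mathbf{0}]$ for every $i$ in the base code $\mathcal{C}$, which affects neither its update bandwidth nor its code redundancy. Then, for each ordered pair $(i,j)$ with $i\neq j$, I would compute the construction matrix $\mathbf{M}'_{i,j}$ of $\mathcal{C}'$ by tracking how a perturbation $\Delta\mathbf{y}_i$ propagates---through the inverse of \eqref{eq:pnth}, subject to the constraint that the data of every other node is frozen---into base-code perturbations $(\Delta\mathbf{x}_a^{(0)},\Delta\mathbf{x}_a^{(1)})$, and then through the base parity equations $\mathbf{p}_j^{(\ell)}=\sum_a\mathbf{M}_{a,j}\mathbf{x}_a^{(\ell)}$ into the parities $(\mathbf{q}_j^{(0)},\mathbf{q}_j^{(1)})$.

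The case analysis splits into: $i,j\in[k]$, for which $\mathbf{M}'_{i,j}$ is block-diagonal with $\mathbf{M}_{i,j}$ in each block; the hybrid cases $i\in[k],\,j\in\{k{+}1,k{+}2\}$ and $i\in\{k{+}1,k{+}2\},\,j\in[k]$; and the two remaining pairs $(k{+}1,k{+}2)$ and $(k{+}2,k{+}1)$. In every case $\mathbf{M}'_{i,j}$ has a $2\times 2$ block form whose entries are scalar multiples of certain $\mathbf{M}_{a,b}$'s of $\mathcal{C}$, with scalars $\pm 1$, $\pm(\gpri-1)^{-1}$ or $\pm\gpri(\gpri-1)^{-1}$ inherited from \eqref{eq:pnth}. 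Elementary block-row operations reduce $\mathbf{M}'_{i,j}$ to a block-diagonal form whose two diagonal blocks each have rank $m/k$ by \eqref{eq:sl2}, yielding $\rank(\mathbf{M}'_{i,j})=\frac{2m}{k}$ and, via Theorem~\ref{th:0}, $\gamma'_{i,j}=\frac{2m}{k}$. The main obstacle is the careful bookkeeping in the last three families of cases: one must exploit $\mathbf{M}_{k+1,k+1}=\mathbf{M}_{k+2,k+2}=[\mathbf{0}]$ together with the freeze constraints on the other nodes to both uniquely determine the coupled perturbations $\Delta\mathbf{x}_{k+1}^{(\ell)},\Delta\mathbf{x}_{k+2}^{(\ell)}$ and to annihilate the would-be cross-coupling terms in $\mathbf{M}'_{i,j}$, so that the block-diagonal reduction really does produce two rank-$m/k$ diagonal blocks.
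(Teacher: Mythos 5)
Your proposal is correct and, at its core, follows the same route as the paper's own proof: properties $(i)$ and $(ii)$ are read off from Theorem~\ref{th:x8} and the node size/parity counting, and property $(iii)$ is verified by tracking how an update to $\mathbf{y}_i$ propagates through the inverse substitution \eqref{eq:pnth} into the base-code parities, case-split on whether $i,j$ lie in $[k]$ or $\{k{+}1,k{+}2\}$. The only presentational difference is that you phrase the key step as ``compute $\mathbf{M}'_{i,j}$, block-diagonalize, read off $\rank(\mathbf{M}'_{i,j})=2m/k$, apply Theorem~\ref{th:0},'' whereas the paper works directly with the full-rank factorization $\mathbf{M}'_{i,j}=\mathbf{B}'_{i,j}\mathbf{A}'_{i,j}$ by exhibiting the intermediate vectors node $i$ sends to node $j$ (e.g., $\mathbf{A}_{i,j}\Delta\mathbf{x}_i^{(0)}$ and $\mathbf{A}_{i,j}\Delta\mathbf{x}_i^{(1)}$), which is the same computation wrapped in different clothing. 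One minor remark: invoking Lemma~\ref{le:cii} to set $\mathbf{M}_{i,i}=[\mathbf{0}]$ is a clean simplification, but it is not actually needed --- as the paper's \eqref{eq:x104} shows, whenever a $\mathbf{M}_{k+2,k+2}$ term arises it does so paired with the coefficients $\pm(\gpri-1)^{-1}$ coming from \eqref{eq:pnth}, and these cancel identically because $\Delta\mathbf{x}_{k+2}^{(0)}+\Delta\mathbf{x}_{k+2}^{(1)}=\Delta\mathbf{y}_{k+2}^{(0)}=\mathbf{0}$; so the ``would-be cross-coupling terms'' vanish with or without the WLOG reduction.
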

\begin{proof}
Recall from \eqref{eq:sl3} that each node of $\mathcal{C}$
has $\alpha=m+p=m+\frac{(n-k)}{k}m=\frac{nm}k$ symbols.
Thus, from \eqref{eq:C''}, 
each node of $\mathcal{C}'$ contains
$\alpha'=2\alpha=\frac{2nm}k$ symbols. 

Since each node of the transformed code $\mathcal{C'}$ has $p'=2p$ parity symbols, its code redundancy achieves the minimum value
given in \eqref{eq:37}. It remains to show $\mathcal{C}'$ also achieves the minimum update bandwidth.


Using the notations in Section~\ref{sec:ub}, 
where the encoding matrices of $\mathcal{C}$ 
are denoted as $\{\mathbf{A}_{i,j}\}_{i,j \in [n]}$ and $\{\mathbf{B}_{i,j}\}_{i,j \in [n]}$,
we consider the update of node $i$ of $\mathcal{C}'$ for $i\in[k]$.
From \eqref{eq:C''}, 
we need to compute 
\begin{equation}
\Delta\mathbf{y}_{i}^{(\ell)}=\mathbf{y}_{i}^{(\ell)*}-\mathbf{y}_{i}^{(\ell)}\quad\text{for }\ell=1,2,
\end{equation}
where we add a star in the superscript to denote the value of a vector
after this updating. 
Then, we must renew $\mathbf{q}_j^{(\ell)}$ for $j \in [k] \setminus \{i\}$ based on $\Delta\mathbf{y}_{i}^{(\ell)}$ according to $\mathbf{q}_{j}^{(\ell)}+\mathbf{B}_{i,j}\mathbf{A}_{i,j}\Delta\mathbf{y}_{i}^{(\ell)}$. The correspondence
in \eqref{eq:C''} then indicates $\mathbf{q}_j^{(\ell)}=\mathbf{p}_j^{(\ell)}$
and $\Delta\mathbf{y}_{i}^{(\ell)}=\Delta\mathbf{x}_{i}^{(\ell)}$, i.e.,
\begin{IEEEeqnarray}{rCl}
\mathbf{q}_{j}^{(\ell)*}&=&\mathbf{p}_{j}^{(\ell)*}=\mathbf{p}_{j}^{(\ell)}+\mathbf{B}_{i,j}\mathbf{A}_{i,j}\Delta\mathbf{x}_{i}^{(\ell)}=\mathbf{q}_{j}^{(\ell)}+\mathbf{B}_{i,j}\mathbf{A}_{i,j}\Delta\mathbf{x}_{i}^{(\ell)}.\label{eq:150pn2}
\end{IEEEeqnarray}
Accordingly, node $i$ shall send both $\mathbf{A}_{i,j}\Delta\mathbf{y}_{i}^{(0)}=\mathbf{A}_{i,j}\Delta\mathbf{x}_{i}^{(0)}
$ and $\mathbf{A}_{i,j}\Delta\mathbf{y}_{i}^{(1)}=\mathbf{A}_{i,j}\Delta\mathbf{x}_{i}^{(1)}$ to node $j \in [k]\setminus \{i\}$, which implies $\gamma'_{i,j}=2\gamma_{i,j}$ for  $i\neq j\in[k]$. 
The renew of $\mathbf{q}_{k+1}$ 
requires sending
$\mathbf{A}_{i,k+1}\Delta\mathbf{x}_{i}^{(0)}$ and $\mathbf{A}_{i,k+2}\big(\Delta\mathbf{x}_{i}^{(0)}+\gpri\,\Delta\mathbf{x}_{i}^{(1)}\big)$
to node $k+1$ since
\begin{equation}
\mathbf{q}_{k+1}^{(0)*}=\mathbf{q}_{k+1}^{(0)}+\mathbf{B}_{i,k+1}\mathbf{A}_{i,k+1}\Delta\mathbf{y}_{i}^{(0)}=\mathbf{p}_{k+1}^{(0)}+\mathbf{B}_{i,k+1}\mathbf{A}_{i,k+1}\Delta\mathbf{x}_{i}^{(0)},
\end{equation}
and
\begin{equation}
\begin{aligned}
\mathbf{q}_{k+1}^{(1)*}=&\mathbf{p}_{k+2}^{(0)*}+ \gpri\,\mathbf{p}_{k+2}^{(1)*}\\
=&\big(\mathbf{p}_{k+2}^{(0)}+\mathbf{B}_{i,k+2}\mathbf{A}_{i,k+2}\Delta\mathbf{x}_{i}^{(0)}\big)+ \gpri\,\big(\mathbf{p}_{k+2}^{(1)}+\mathbf{B}_{i,k+2}\mathbf{A}_{i,k+2}\Delta\mathbf{x}_{i}^{(1)}\big)\\
=&\mathbf{p}_{k+2}^{(0)}+ \gpri\, \mathbf{p}_{k+2}^{(1)}+\mathbf{B}_{i.k+2}\mathbf{A}_{i,k+2}\big(\Delta\mathbf{x}_{i}^{(0)}+\gpri\, \Delta\mathbf{x}_{i}^{(1)}\big)\\
=&\mathbf{q}_{k+1}^{(1)}+\mathbf{B}_{i.k+2}\mathbf{A}_{i,k+2}\big(\Delta\mathbf{x}_{i}^{(0)}+\gpri\,\Delta\mathbf{x}_{i}^{(1)}\big).
\end{aligned}
\end{equation}
Thus, $\gamma'_{i,k+1}=\gamma_{i,k+1}+\gamma_{i,k+2}$ for $i \in [k]$. 
We can similarly obtain $\gamma'_{i,k+2}=\gamma_{i,k+1}+\gamma_{i,k+2}$ for $i \in [k]$ when concerning the adjustment of $\mathbf{q}_{k+2}$
due to the update of $\mathbf{y}_i$.

We next consider the update of $\mathbf{y}_{k+1}$.
Again, we compute $\Delta\mathbf{y}_{k+1}^{(\ell)}=\mathbf{y}_{k+1}^{(\ell)*}-\mathbf{y}_{k+1}^{(\ell)}$ for $\ell=0,1$. 
Note that all of $\mathbf{x}_{k+1}^{(0)}$, $\mathbf{x}_{k+2}^{(0)}$ and 
$\mathbf{x}_{k+2}^{(1)}$ are involved in this update.
Since $\mathbf{y}_{k+2}^{(0)}=\mathbf{x}_{k+2}^{(0)}+\mathbf{x}_{k+2}^{(1)}$
remains unchanged, we have $\Delta\mathbf{y}_{k+2}^{(0)}=\Delta\mathbf{x}_{k+2}^{(0)}+\Delta\mathbf{x}_{k+2}^{(1)}=\mathbf{0}$, 
which together with \eqref{eq:pnth} implies
$\Delta\mathbf{x}_{k+1}^{(0)}=\Delta\mathbf{y}_{k+1}^{(0)}$
and $\Delta\mathbf{x}_{k+2}^{(1)}=(\gpri-1)^{-1}\Delta\mathbf{y}_{k+1}^{(1)}$.
As a result, for $j\in[k]$, the new parity vectors $\mathbf{q}_{j}^\ast$ are renewed according 
to 
\begin{equation}\label{eq:x99}
\begin{aligned}
\mathbf{q}_{j}^{(0)*}=&\mathbf{p}_{j}^{(0)*}\\
=&\mathbf{p}_{j}^{(0)}+\mathbf{B}_{k+1,j}\mathbf{A}_{k+1,j}\Delta\mathbf{x}_{k+1}^{(0)}+\mathbf{B}_{k+2,j}\mathbf{A}_{k+2,j}\Delta\mathbf{x}_{k+2}^{(0)}\\
=&\mathbf{q}_{j}^{(0)}+\mathbf{B}_{k+1,j}\mathbf{A}_{k+1,j}\Delta\mathbf{x}_{k+1}^{(0)}+\mathbf{B}_{k+2,j}\mathbf{A}_{k+2,j}\Delta\mathbf{x}_{k+2}^{(0)}\\
=&\mathbf{q}_{j}^{(0)}+\mathbf{B}_{k+1,j}\mathbf{A}_{k+1,j}\Delta\mathbf{y}_{k+1}^{(0)}-(\gpri-1)^{-1}\mathbf{B}_{k+2,j}\mathbf{A}_{k+2,j}\Delta\mathbf{y}_{k+1}^{(1)}
\end{aligned}
\end{equation}
and
\begin{equation}\label{eq:x100}
\begin{aligned}
\mathbf{q}_{j}^{(1)*}=&\mathbf{p}_{j}^{(1)*}\\
=&\mathbf{p}_{j}^{(1)}+\mathbf{B}_{k+2,j}\mathbf{A}_{k+2,j}\Delta\mathbf{x}_{k+2}^{(1)}\\
=&\mathbf{q}_{j}^{(1)}+\mathbf{B}_{k+2,j}\mathbf{A}_{k+2,j}\Delta\mathbf{x}_{k+2}^{(1)}\\
=&\mathbf{q}_{j}^{(1)}+(\gpri-1)^{-1} \mathbf{B}_{k+2,j}\mathbf{A}_{k+2,j}\Delta\mathbf{y}_{k+1}^{(1)},
\end{aligned}
\end{equation}
which indicates node $k+1$ should send
$\mathbf{A}_{k+1,j}\Delta\mathbf{y}_{k+1}^{(0)}$ and $\mathbf{A}_{k+2,j}\Delta\mathbf{y}_{k+1}^{(1)}$ to node $j$ to renew its parity vector;
hence, $\gamma_{k+1,j}'=\gamma_{k+1,j}+\gamma_{k+2,j}$ for $j \in [k]$. 
Concerning the renew of $\mathbf{q}_{k+2}$,
we derive
\begin{equation}\label{eq:x104}
\begin{aligned}
\mathbf{q}_{k+2}^{(0)*}=&\mathbf{p}_{k+2}^{(0)*}+\mathbf{p}_{k+2}^{(1)*}\\
=&\mathbf{p}_{k+2}^{(0)}+\mathbf{p}_{k+2}^{(1)}+\mathbf{B}_{k+1,k+2}\mathbf{A}_{k+1,k+2}\Delta\mathbf{x}_{k+1}^{(0)}\\
&+\mathbf{B}_{k+2,k+2}\mathbf{A}_{k+2,k+2}\Delta\mathbf{x}_{k+2}^{(0)}+\mathbf{B}_{k+2,k+2}\mathbf{A}_{k+2,k+2}\Delta\mathbf{x}_{k+2}^{(1)}\\
=&\mathbf{q}_{k+2}^{(0)}+\mathbf{B}_{k+1,k+2}\mathbf{A}_{k+1,k+2}\Delta\mathbf{x}_{k+1}^{(0)}+\mathbf{B}_{k+2,k+2}\mathbf{A}_{k+2,k+2}\Delta\mathbf{x}_{k+2}^{(0)}+\mathbf{B}_{k+2,k+2}\mathbf{A}_{k+2,k+2}\Delta\mathbf{x}_{k+2}^{(1)}\\
=&\mathbf{q}_{k+2}^{(0)}+\mathbf{B}_{k+1,k+2}\mathbf{A}_{k+1,k+2}\Delta\mathbf{y}_{k+1}^{(0)}
\end{aligned}
\end{equation}
and
\begin{equation}\label{eq:x105}
\begin{aligned}
\mathbf{q}_{k+2}^{(1)*}=&\mathbf{p}_{k+1}^{(1)*}\\
=&\mathbf{p}_{k+1}^{(1)}+\mathbf{B}_{k+2,k+1}\mathbf{A}_{k+2,k+1}\Delta\mathbf{x}_{k+2}^{(1)}\\
=&\mathbf{q}_{k+2}^{(1)}+\mathbf{B}_{k+2,k+1}\mathbf{A}_{k+2,k+1}\Delta\mathbf{x}_{k+2}^{(1)}\\
=&\mathbf{q}_{k+2}^{(1)}+(\gpri-1)^{-1} \mathbf{B}_{k+2,k+1}\mathbf{A}_{k+2,k+1}\Delta\mathbf{y}_{k+1}^{(1)},
\end{aligned}
\end{equation}
which indicates node $k+1$ should send 
$\mathbf{A}_{k+1,k+2}\Delta\mathbf{y}_{k+1}^{(0)}$ and $\mathbf{A}_{k+2,k+1}\Delta\mathbf{y}_{k+1}^{(1)}$ to node $k+2$;
hence,  $\gamma_{k+1,k+2}'=\gamma_{k+1,k+2}+\gamma_{k+2,k+1}$. 

Last, we consider the update of $\mathbf{y}_{k+2}$, and 
can similarly obtain
$\gamma_{k+2,j}'=\gamma_{k+1,j}+\gamma_{k+2,j}$ for $j\in [k]$ and $\gamma_{k+2,k+1}'=\gamma_{k+1,k+2}+\gamma_{k+2,k+1}$. 

We summarize the matrix of $\gamma_{i,j}'$ for $i\neq j\in[n]$ as follows.
\begin{equation}\label{eq:gamma'}
\begin{aligned}
&\begin{bmatrix}
\gamma_{1,2}' & \dots & \gamma_{1,k}' & \gamma_{1,k+1}' & \gamma_{1,k+2}'\\
\gamma_{2,1}' & \dots & \gamma_{2,k}' & \gamma_{2,k+1}' & \gamma_{2,k+2}'\\
\vdots & \ddots &\vdots &\vdots &\vdots \\
\gamma_{k,1}' & \dots & \gamma_{k,k-1}' & \gamma_{k,k+1}' & \gamma_{k,k+2}'\\
\gamma_{k+1,1}' & \dots & \dots  & \gamma_{k+1,k}' & \gamma_{k+1,k+2}'\\
\gamma_{k+2,1}' & \dots & \dots  & \gamma_{k+2,k}' & \gamma_{k+2,k+1}'\\
\end{bmatrix}\\
&=\begin{bmatrix}
2\gamma_{1,2} & \dots & 2\gamma_{1,k} & \gamma_{1,k+1}+\gamma_{1,k+2} & \gamma_{1,k+1}+\gamma_{1,k+2}\\
2\gamma_{2,1} & \dots & 2\gamma_{2,k} & \gamma_{2,k+1}+\gamma_{2,k+2} & \gamma_{2,k+1}+\gamma_{2,k+2}\\
\vdots & \ddots &\vdots &\vdots &\vdots \\
2\gamma_{k,1} & \dots & 2\gamma_{k,k-1} & \gamma_{k,k+1}+\gamma_{k,k+2} & \gamma_{k,k+1}+\gamma_{k,k+2}\\
\gamma_{k+1,1}+\gamma_{k+2,1} & \dots & \dots  & \gamma_{k+1,k}+\gamma_{k+2,k} & \gamma_{k+1,k+2}+\gamma_{k+2,k+1}\\
\gamma_{k+1,1}+\gamma_{k+2,1} & \dots & \dots  & \gamma_{k+1,k}+\gamma_{k+2,k} & \gamma_{k+1,k+2}+\gamma_{k+2,k+1}\\
\end{bmatrix}.
\end{aligned}
\end{equation}
Since $\mathcal{C}$ is a $(n=k+2,k,m\one)$ MR-MUB code, 
we know from \eqref{eq:sl2} that
$\gamma_{i,j}=\frac{m}{k}$ for $i\neq j\in [n]$. 
We then conclude from \eqref{eq:gamma'} that
$\gamma_{i,j}'=\frac{2m}{k}$ for $i\neq j\in [n]$.
Consequently, $\mathcal{C}'$ is an $(n,n-2,2m\one)$ MR-MUB code over $\mathbb{F}_{q}$, which can be confirmed by Theorem~\ref{the:2}. 
\end{proof}

By Theorem~\ref{th:x8}, we can optimize the repair bandwidth
of two selected nodes at a time, and reapply the transformation $\lceil \frac n{2}\rceil$ times to obtain an $(n,k=n-2,2^{\lceil n/2\rceil}m\one)$ MR-MUB code with optimal repair bandwidth for all nodes as long as $k\mid m$. 

Although the transformation in \cite{transform} holds for general $k$, a further generation of Theorem \ref{th:x9} to general $k$ satisfying, e.g., $k<n-2$, 
cannot be done by following a similar procedure to the current proof, and
the transformed code may not be an MR-MUB code. Hence, what we have proven in Theorem \ref{th:x9} is a particular case that guarantees the transformed code
is an MR-MUB code if the code before transformation is an MR-MUB code.

\section{Conclusion}\label{sec:conclusion}

In this paper, we introduced a new metric, called the {\it update bandwidth}, 
which measures the transmission efficiency in the update process of $(n,k,\mathbf{m})$ irregular array codes in DSSs.
It is an essential measure in scenarios where updates are frequent.
The closed-form expression of the minimum update bandwidth $\gamma_{\min}$
was established (cf.~Theorem \ref{the:2}), and the code parameters,
using which the minimum update bandwidth (MUB) can be achieved, were 
identified. These code parameters then constitute the class of MUB codes. 
As code redundancy is also an important consideration in DSSs,
we next investigated the smallest code redundancy attainable by MUB codes
(cf.~Theorems \ref{th:x4} and \ref{th:xx5}).

We then seek to construct a class of irregular array codes that achieves both the minimum code redundancy and the minimum update bandwidth, named MR-MUB codes. The code parameters for MR-MUB codes are therefore determined (cf.~Theorem \ref{th:xx6}). An interesting result is that under $1<k<n-1$ and $k\mid m_i$ for $i\in[n]$, MR-MUB codes can only be vertical MDS codes unless
$\mathbf{m}=[m_1\ \cdots\ m_n]$ containing only a single non-zero component.
The explicit construction of MR-MUB codes 
was thus focused on $(n,k)$ vertical MDS codes, i.e., $(n,k,m\one)$ MR-MUB codes (cf.~Section \ref{sec:5A}).
A further generalization of the MR-MUB code construction 
was subsequently proposed for a class of MUB codes with the smallest code redundancy (cf.~Section \ref{sec:6b}).

At last, we studied the update complexity and repair bandwidth of MR-MUB codes.
Through the establishment of a lower bound for the update complexity of MR-MUB codes (cf.~Theorem \ref{th:x5}), we found MR-MUB codes may not simultaneously achieve the minimum update complexity. However, 
an $(n,k=n-2,m\one)$ MR-MUB code with the optimal repair bandwidth for all nodes
can be constructed via the transformation in \cite{transform} (cf.~Theorem~\ref{th:x9}).

There are some challenging issues remain unsolved.
\begin{enumerate}
\item Determine the smallest update bandwidth attainable
by irregular MDS array codes~\cite{tosato2014}, defined as the irregular array codes with the minimum code redundancy.

\item Determine the smallest code redundancy attainable 
by MUB codes when the condition of $k\mid m_i$ for $i\in[n]$ is violated.

\item Examine whether $k\mid m$ is also a necessary condition for vertical MDS array codes being MUB codes, provided $k\nmid n$.

\item Check whether the lower bound for the update complexity 
of $(n,k,m\one)$ MR-MUB codes in Theorem~\ref{th:x5} can be improved or achieved.

\item Study the optimal repair bandwidth of MR-MUB codes
under $n-k \geq 3$.
\end{enumerate}

{\bibliographystyle{IEEEtran}
\bibliography{IEEEabrv,update}}

\end{document}